\documentclass[11pt]{article}
\usepackage[margin=1in]{geometry}
\usepackage[T1]{fontenc}
\usepackage{lmodern}
\usepackage{amsmath,amssymb,amsthm,mathtools,bm}
\usepackage{microtype,graphicx,booktabs,array,longtable}
\usepackage[numbers,sort&compress]{natbib}
\usepackage{listings}
\newcommand{\doi}[1]{doi:\,\href{https://doi.org/#1}{\nolinkurl{#1}}}
\newcommand{\B}{\mathcal B}
\newcommand{\Ad}{\operatorname{Ad}}
\usepackage[colorlinks=true,allcolors=blue]{hyperref}
\hypersetup{pdftitle={Localized charges, reset noise, and boundary memory in matrix-product-conserving quantum chains},pdfauthor={Ron Rubin},pdfsubject={Matrix-product constraints, statistical localization, and spatially distributed noise}}
\numberwithin{equation}{section}
\newtheorem{theorem}{Theorem}[section]
\newtheorem{proposition}[theorem]{Proposition}
\newtheorem{lemma}[theorem]{Lemma}
\newtheorem{corollary}[theorem]{Corollary}
\theoremstyle{remark}
\newcommand{\A}{\mathcal A}
\newcommand{\E}{\mathbb E}
\newcommand{\Prob}{\mathbb P}
\newcommand{\Tr}{\operatorname{Tr}}
\newcommand{\Var}{\operatorname{Var}}
\newcommand{\diag}{\operatorname{diag}}
\newcommand{\SL}{\mathrm{SL}}
\newcommand{\GL}{\mathrm{GL}}
\newcommand{\id}{\mathrm{id}}
\newcommand{\ip}[2]{\langle #1,#2\rangle_\tau}
\newcommand{\norm}[1]{\left\|#1\right\|}
\title{Localized charges, reset noise, and boundary memory\\in matrix-product-conserving quantum chains}
\author{Ron Rubin\\Rubin Anders Scientific, Inc.}
\date{September 19, 2026}

\begin{document}
\maketitle
\begin{abstract}
We construct boundary observables that retain memory for a quantified time window in quantum chains whose
basis configurations carry a conserved ordered product of matrix labels.
Under strong-irreducibility and proximality hypotheses on the label matrices, projective contraction of random matrix products localizes a conserved charge in the root-mean-square over uniform basis configurations. A finite-time inequality then converts
readout overlap and reset leakage into a correlation bound: for a conserved
reference observable $H$, a readout $F$, accumulated squared leakage $\B$, and the normalized Hilbert--Schmidt inner product,
$\langle F,\Phi(F)\rangle\ge
2\langle H,F\rangle^2/(\|H\|_2^2+\B)-\|F\|_2^2$.
Here $\Phi$ is a sequence of conserving channels and partial resets.
The bound holds for each such circuit, without averaging its gates.
It yields an exponential memory window in the distance from the noisy region,
with quantitative corrections for spatially distributed noise and imperfect
conservation. For a thirteen-state elementary-matrix model, exact polynomial
certificates give root-mean-square localization error $e^{-r/1400}$ between the charge and its restriction to the last $r$ sites, and limiting variance at
least $1/19$ for a rational charge. A related normalized-Gram charge gives
an eight-site certificate retaining more than half the specified readout
correlation through two boundary-reset rounds. Uniform depolarization imposes
an inverse-noise-rate lifetime ceiling. A two-qubit IBM experiment illustrates
the general reset inequality; it does not realize the matrix-product chain.
We provide proofs, exact certificate programs, and an independent recount of
the archived experimental outcomes.
\end{abstract}

\section{Introduction}
How long can a local observable retain information when a distant part of a
quantum system is repeatedly randomized? Conservation laws can obstruct
relaxation, but a useful memory statement needs more than a conserved global
quantity: the information must be preparable and readable in a restricted
region. We study chains whose computational-basis configurations carry an
ordered product of matrix labels. The product is conserved by the interior
dynamics, while a bath acts on a prefix of the chain. A long product of random matrices typically has one singular value far above the others, so it sends almost every input direction close to a single output line. Writing $G_L=AB_r$, the top right singular direction of $G_L$ is therefore nearly determined by the last $r$ labels, and a function of that direction is nearly a function of the suffix alone, up to an error of order $e^{-\kappa r}$. This supplies both a conserved reference observable and a local
preparation and readout with controlled overlap.
A charge here is simply a conserved observable. Localization means that
its root-mean-square difference from a suffix observable decays
exponentially with suffix length, when averaged over uniformly chosen
basis configurations. This is an operator-$L^2$ analogue of statistical localization~\cite{Rakovszky}, with a proved exponential rate. These random configurations, rather than random
quantum gates, are the probability ensemble in the contraction argument.

\paragraph{Reader's guide.}
Each of the $D$ basis states of a site carries an invertible $m\times m$ matrix label. A bounded function of the ordered label product defines a diagonal charge, conserved by gates that mix only configurations with the same product. In the chain application, resets replace a prefix by the maximally mixed state; the untouched suffix is the buffer. Theorem~\ref{thm:local} approximates the charge by a suffix observable with root-mean-square error $C_{\rm loc}e^{-\kappa r}$ under the uniform configuration law. Theorem~\ref{thm:overlapreset} converts its overlap with a readout and the accumulated squared reset leakage into a lower bound on autocorrelation for each prescribed conserving circuit. Together they certify a memory window growing as $e^{2\kappa r}$. The explicit model below supplies numerical constants, finite-chain certificates, and corrections for imperfect conservation and distributed noise.

Such constrained dynamics belongs to the study of Hilbert-space fragmentation:
the Hilbert space separates into many dynamically disconnected sectors;
see the original constructions~\cite{Sala,Khemani}, the commutant-algebra formulation~\cite{MoudgalyaMotrunich}, and the review~\cite{MoudgalyaReview} for examples and broader context.
Han, Chen, and Lake~\cite{Han} and Wang et al.~\cite{Wang} study exponentially
slow thermalization after a boundary coupling breaks fragmentation.
Wang et al. name $\SL(3,\mathbb Z)$ as a nonhyperbolic, nonamenable case beyond their hyperbolic-group argument~\cite[Sec.~V.D]{Wang}. Its nonhyperbolicity follows from a subgroup isomorphic to $\mathbb Z^2$, exhibited in Appendix~\ref{app:global}. Statistically localized integrals of motion
\cite{Rakovszky}, dissipatively enhanced edge-spin correlations
\cite{Vasiloiu}, and fragmentation-breaking local impurities
\cite{Li} provide related mechanisms for persistent local signals.
The construction below gives a mean-square localized charge for matrix-product
constraints and a quantitative response to perturbations at different distances
from the observed edge. The interior gates need not be random.

The second ingredient is a general inequality on operator space. The reference observable is conserved exactly by every factor except the resets, and a reset can
change the overlap with it only by an amount
controlled by the accompanying loss of squared Hilbert--Schmidt norm (a norm budget, not a physical energy). Summing this
norm loss before estimating the final readout produces a budget of
\emph{squared} leakage, which is much smaller than the term-by-term sum $\sum_jp_jb_j$ when each reset disturbs the observable only slightly. This is related in method to sequential-measurement
bounds~\cite{Gao,Oskouei,ODonnellVenkateswaran}, while addressing a different
quantity: the signed correlation at a specified time, with a fixed reference
observable and arbitrary intervening conserving channels. The usual
Mazur--Suzuki bounds concern persistent or time-averaged correlations
\cite{Mazur,Suzuki,DharKunduSaito}; they do not directly give the finite-sequence
inequality proved here. Quantum bottleneck methods~\cite{Bottleneck,Gamarnik}
and quantum Cheeger bounds~\cite[Lemma~20]{Temme} relate small boundary
flow to slow mixing. Applied to a sector-projected state and a conserved
sign charge weakly affected by the bath, a bottleneck argument already
yields a qualitatively exponential memory window~\cite{Bottleneck}.
The present inequality instead controls a specified readout through a
squared-leakage budget for each prescribed, possibly noncommuting channel
sequence; we also derive continuous-time and conservation-defect forms.

There are three levels of conclusions. First, the matrix-product argument
gives exponentially localized charges under standard strong-irreducibility and
proximality hypotheses on the label matrices. Its probability inputs are established results on
products of independent random matrices, from the foundational work of
Furstenberg and Kesten~\cite{FurstenbergKesten,Furstenberg} to quantitative
contraction and limit theorems~\cite{LePage,GuivarchRaugi,GoldsheidMargulis,BougerolLacroix,BenoistQuint};
we use the statements of P\'eneau~\cite{Peneau} and identify them explicitly
before deriving the quantum consequences. Second, for a concrete alphabet of
thirteen $3\times3$ integer matrices, exact rational certificates (polynomial inequalities verified in exact integer or rational arithmetic) supply
numerical constants and a finite eight-site correlation guarantee; the eight sites are thirteen-state sites, not qubits. These
matrices generate $\SL(3,\mathbb Z)$, the group of integer $3\times3$
matrices with determinant one. Unlike the groups covered by the cited
hyperbolic-group argument, this group contains a copy of the integer
plane $\mathbb Z^2$; Appendix~\ref{app:global} gives the specific generators.
The
asymptotic constants are conservative; they do not make multi-thousand-site
readouts (3673 sites for the fixed-image charge, 5211 for the normalized-Gram charge) an experimentally efficient proposal. Third, a two-qubit experiment
illustrates an exactly soluble instance of the general reset/readout inequality.
It establishes neither the spatial localization nor the long-chain lifetime.

To our knowledge, the contribution is the matrix-product charge construction,
its application to the specified nonhyperbolic family, and its quantitative
spatial perturbation and finite-readout bounds. Qualitative non-expansion
of group heat kernels, meaning that under the law of a product of $L$ random letters there are sets of substantial mass whose boundary in the Cayley graph has small mass, is known~\cite{FraczykVanLimbeek}; Appendix~\ref{app:global}
gives, for the family considered here, an exponentially small vertex-boundary mass and a statement that typical basis states stay far from equilibrium for exponentially many steps of the averaged dynamics. Retaining a classical observable
is distinct from recovering an arbitrary quantum state, a distinction illustrated
in Section~\ref{sec:discussion} by a dephasing example, with pointers to the recovery bounds of
Tyson~\cite{Tyson} and Barnum and Knill~\cite{BarnumKnill}.

The argument proceeds from localization (Section~\ref{sec:localization}) to a channel inequality (Section~\ref{c-sec:setting}), then to
noise and lifetime bounds. Section~\ref{sec:explicit-summary} summarizes the
explicit constants, with their polynomial certificates in the appendices.
Section~\ref{sec:gram} derives the finite-chain certificate. The final
experimental sections give the circuit, estimator, raw counts, and scope of
the hardware comparison. Appendix~\ref{app:reproducibility} specifies how to
reproduce the calculations without access to a quantum device.

\section{Matrix labels, quantum dynamics, and the memory signal}\label{sec:model}
Let $\A\subset\GL_m(\mathbb R)$ be a finite alphabet of size $D$, $m\ge2$.
Here $\GL_m(\mathbb R)$ denotes the invertible real $m\times m$ matrices;
each matrix in $\A$ labels one of a site's $D$ basis states.
We borrow the words alphabet, letter and word from formal languages. A word $w=(a_1,\ldots,a_L)$, a string of $L$ letters, labels the basis state $|a_1\rangle\otimes\cdots\otimes|a_L\rangle$ of
$\mathcal H_L=(\mathbb C^D)^{\otimes L}$ and carries
\begin{equation}
 G_L(w)=a_1\cdots a_L,\qquad B_r(w)=a_{L-r+1}\cdots a_L.
\end{equation}
The labels are classical data attached to basis states, not physical gates. A product-conserving unitary is one that is block
diagonal in the equal-$G_L$ basis fibres defined below. All contiguous equal-subword-product
gates, which map each basis state into the span of words differing only on a block of adjacent sites with the same block product, have this property. The reference measure is the uniform word law
$\pi_L$, which gives every word probability $D^{-L}$ and corresponds to the maximally mixed state $I/D^L$; the operator inner product is
\begin{equation}
 \tau_L(X)=D^{-L}\Tr X,\qquad
 \ip XY=\tau_L(X^\dagger Y),\qquad \norm X_2^2=\ip XX.
\end{equation}
All operator norms without a subscript $2$ or $F$ are spectral norms.
The spectral norm $\|X\|=\|X\|_\infty$ is the largest singular value.
For label matrices, $\|G\|_F^2=\Tr(G^TG)$ is the unnormalized
Frobenius norm; vector norms are Euclidean. This differs from the
normalized Hilbert--Schmidt norm on physical operators. The unnormalized trace norm of a physical operator is written $\|\cdot\|_{\mathrm{tr}}$.
We call a quantum channel bistochastic when it is completely positive,
trace preserving and unital, meaning $T(I)=I$. Such channels contract
$\norm\cdot_2$, as proved in Section~\ref{c-sec:setting}.

For example, if the alphabet contains $I,a,a^{-1}$, a two-site gate may
mix $|I,I\rangle$, $|a,a^{-1}\rangle$ and $|a^{-1},a\rangle$: all three have product $I$. The alphabet~\eqref{eq:alphabet} contains such inverse pairs.
Any unitary rotation on their span, acting as the identity on its orthogonal
complement, preserves the product. Thus the constraint permits coherent
quantum dynamics even though the conserved labels are classical matrices.
More generally, each function $u(G_L)$ defines a diagonal conserved
observable, an element of the commutant of the product-conserving gate
algebra in the sense of Ref.~\cite{MoudgalyaMotrunich}. The problem is to choose $u$ so that this global observable
can be well approximated near the right edge.
More explicitly, an equal-product fibre is
$\mathcal V_g=\operatorname{span}\{|w\rangle:G_L(w)=g\}$.
The observable $\operatorname{diag}(u(G_L))$ equals $u(g)I$ on
$\mathcal V_g$, so it commutes with every unitary preserving each
such subspace. Replacing a contiguous subword by another with the same
product leaves the complete ordered product unchanged by associativity.

For diagonal $H=\operatorname{diag}(h(w))$, the Hilbert--Schmidt norm
is an ordinary root-mean-square over words:
\begin{equation}
 \|H\|_2^2=D^{-L}\sum_{w\in\A^L}|h(w)|^2.
\end{equation}
Here $\E$ is expectation over independent uniform letters, corresponding
to the physical reference state $I/D^L$. Subtracting $\E h$ removes the
equilibrium signal and makes the squared norm equal to the variance.
A suffix readout
depends only on the last $\ell$ symbols and is tensored with the identity
on the other sites, $F=I^{\otimes(L-\ell)}\otimes f$; it therefore commutes with every operator, and is fixed by every unital channel, supported on the first $L-\ell$ sites. Site $1$ is the bath end and site $L$ the readout end. Throughout, $m$ is the dimension of a label matrix,
$D$ is the number of physical states per site, $L$ is the chain length,
$\ell$ is the readout width, and $r$ is the number of rightmost sites
untouched by a given perturbation. The symbol $N$ in the general channel
inequality denotes the full Hilbert-space dimension ($D^L$ for the chain).

\section{From random matrix products to a localized charge}\label{sec:localization}
For a nonzero vector $x$, the notation $[x]$ denotes its one-dimensional
subspace, or line, so that
$x$ and any nonzero scalar multiple specify the same point. The distance
between lines is
\begin{equation}
 d([x],[y])=\sqrt{1-\frac{(x^Ty)^2}{\|x\|^2\|y\|^2}}.
\end{equation}
This is the sine of the angle between the two lines, and it is a metric on the set of lines. Let $s_1(G)\ge s_2(G)\ge\cdots\ge s_m(G)>0$ denote the singular values of $G$. A matrix with $s_1$ much larger than $s_2$ maps most
directions close to one line. The hypotheses below guarantee such
contraction for typical long products, while excluding a fixed finite
collection of invariant subspaces.

A semigroup is the set of all finite products of its generators.
It is \emph{strongly irreducible} if it preserves no finite union of
nonzero proper linear subspaces, and \emph{proximal} if it contains a
matrix with a simple eigenvalue strictly largest in modulus. These are
properties of the label matrices, not of physical quantum gates.
The underlying theory of Lyapunov exponents and stationary projective
measures originates in Refs.~\cite{FurstenbergKesten,Furstenberg}; see
also Ref.~\cite{BougerolLacroix}. The specific probability estimates used
here are stated next.
Assume that both $\A$ and its transpose alphabet $\A^T$ generate
strongly irreducible proximal semigroups. We use the following standard
random-product consequences~\cite[Thms.~1.17 and 1.20]{Peneau}.
Let $s_1(G)\ge s_2(G)\ge\cdots>0$ be the singular values and
$G_n$ the product of $n$ independent uniform letters.
Writing $\sigma(G)=s_2(G)/s_1(G)$, there are positive constants
$a_0,b_0,C_1,\alpha,\beta,C_2$ such that
\begin{equation}\label{eq:gap}
 \Prob\{\sigma(G_n)>e^{-a_0n}\}\le C_1e^{-b_0n}.
\end{equation}
For an infinite iid transpose sequence $M_1,M_2,\ldots$, a common random
projective line $\Lambda$ satisfies
\begin{equation}\label{eq:common}
 \Prob\{d([M_1\cdots M_nx],\Lambda)>e^{-\alpha n}\}
 \le C_2e^{-\beta n},\qquad x\ne0,
\end{equation}
uniformly in deterministic $x$. The law
of the input letters is uniform on $\A^T$. Crucially, $\Lambda$ depends
on the sequence, not on the chosen input $x$. The law
of $\Lambda$ is the stationary measure of the transpose walk, the Markov chain on lines $[x]\mapsto[Mx]$ with $M$ uniform on $\A^T$. Independence
allows conditioning on random inputs in~\eqref{eq:common}.
Stationarity means that multiplying an independent line of this law by
one random transpose letter leaves its distribution unchanged. The two
estimates above are imported probability theorems; the localization and
channel arguments below are derived from them. Equation~\eqref{eq:gap}
uses Theorem~1.17 of Ref.~\cite{Peneau}, and~\eqref{eq:common} uses its
Theorem~1.20. Invertibility ensures positive eventual rank, as required
there; the quoted results do not require a moment assumption.

Let $v$ be a unit eigenvector of $G^TG$ for its largest eigenvalue $s_1(G)^2$, a top right singular vector, and let $Z(G)=vv^T$ be a Borel (measurable) choice of this projector; then $Z(G)_{jj}=v_j^2$ is the squared $j$-th coordinate of the dominant right singular direction. At a
degeneracy, select the first nonzero coordinate projection into the top
eigenspace of $G^TG$ and normalize it. This fixes $Z$ as a function of $G$
alone. Degeneracies belong to the exceptional event in~\eqref{eq:gap}.
For a coordinate $j$, define
\begin{equation}\label{eq:charges}
 \begin{split}
 H_L&=\diag\big(Z(G_L)_{jj}-\E Z(G_L)_{jj}\big),\\
 F_r&=\diag\big(Z(B_r)_{jj}-\E Z(B_r)_{jj}\big).
 \end{split}
\end{equation}
$F_r$ is extended by the identity outside the last $r$ sites; the coordinate $j$ is fixed throughout, and the theorem below says how to choose it. Write
$v_L=\norm{H_L}_2^2$ and $v_r=\norm{F_r}_2^2$. Both operators are centered,
have spectral norm at most one, and have $2$-norm at most $1/2$.
The latter follows because they center random variables in $[0,1]$, whose variance is at most $1/4$.

\subsection{Mean-square localization and its proof}
Define a spherical small-ball constant by
\begin{equation}
 K_2=1,\qquad
 K_m=\frac{2\Gamma(m/2)}{\sqrt\pi\,\Gamma((m-1)/2)}\quad(m\ge3).
\end{equation}
These constants satisfy $K_m\ge1$, so the small-ball bound below also holds trivially for $s>1$.
For a uniform unit vector $Y$ and any unit vector $u$,
$\Prob(|u^TY|\le s)\le K_ms$ for $0\le s\le1$.
For $m=2$ this is $2\arcsin(s)/\pi\le s$; for $m\ge3$ the
coordinate density is
$f_m(t)=(K_m/2)(1-t^2)^{(m-3)/2}$ on $[-1,1]$, whose maximum is at zero.
Here $\Gamma$ is Euler's gamma function, and $K_m\ge1$ because a probability density on an interval of length two has maximum at least $1/2$.

\begin{theorem}[Mean-square localization]\label{thm:local}
Set
\begin{equation}\label{eq:constants}
 \kappa=\tfrac12\min\{a_0,b_0,2\alpha,\beta\},\qquad
 C_{\rm loc}=4\sqrt{K_m(1+C_1)}+2\sqrt{2+C_2}.
\end{equation}
For all $L\ge r\ge1$,
\begin{equation}\label{eq:local}
 d_{L,r}:=\norm{H_L-F_r}_2\le
 \delta_r:=\min\{1,C_{\rm loc}e^{-\kappa r}\}.
\end{equation}
For at least one coordinate $j$, the limiting variance
$v_*:=\lim_{L\to\infty}v_L$ exists and is positive. Moreover,
\begin{equation}\label{eq:varfinite}
 |\sqrt{v_r}-\sqrt{v_*}|\le C_{\rm loc}e^{-\kappa r}.
\end{equation}
\end{theorem}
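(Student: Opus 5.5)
The plan is to compare both $Z(G_L)$ and $Z(B_r)$ with the single random line $\Lambda$ of~\eqref{eq:common}, built from the transpose sequence read from the readout end. Write $G_L=AB_r$ with $A=a_1\cdots a_{L-r}$. The top right singular line of $G_L$ is the top left singular line of $G_L^T=B_r^TA^T$, and $B_r^T=a_L^T\cdots a_{L-r+1}^T$ is exactly $M_1\cdots M_r$ for the iid transpose sequence $M_i=a_{L-i+1}^T$. Thus $B_r^T$ is the \emph{first} $r$ factors of that sequence, and $\Lambda$ depends only on the suffix letters (extended by an independent tail).

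First I compare $Z_{jj}$ values through line distance. For unit $v,w$ one has $|v_j^2-w_j^2|\le\|vv^T-ww^T\|\le\sqrt2\,d([v],[w])\le 2\,d([v],[w])$. Centering only decreases the $L^2$ norm, since the variance is at most the second moment. Hence it suffices to bound
\[
\E\, d([Z(G_L)],\Lambda)^2\quad\text{and}\quad \E\, d([Z(B_r)],\Lambda)^2 ,
\]
and then use the triangle inequality in $L^2$. Differences are at most $1$, so on exceptional events I simply pay their probability.

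Second, I replace a singular line by the image of an auxiliary random vector. Let $Y$ be uniform on the unit sphere, independent of everything else, and let $u$ be the top right singular vector of $G^T$. Then
\[
d([G^TY],\text{top left line of }G^T)\le \min\{1,\sigma(G)/|u^TY|\}.
\]
Averaging over $Y$ with the small-ball density bound $f_m\le K_m/2$ gives
\[
\E_Y\min\{1,\sigma^2/|u^TY|^2\}\le 2K_m\sigma .
\]
On the event $\sigma(G)\le e^{-a_0 r}$ this is at most $2K_me^{-a_0r}$; the complementary event costs $C_1e^{-b_0r}$ by~\eqref{eq:gap}. For $G=G_L$ the cost is only $C_1e^{-b_0L}\le C_1e^{-b_0r}$. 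These two pieces produce the term $4\sqrt{K_m(1+C_1)}e^{-\kappa r}$ after taking square roots, and this is why $\kappa$ involves $a_0/2$ and $b_0/2$.

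Third, I connect the auxiliary images to $\Lambda$. For $B_r$ directly, $d([B_r^TY],\Lambda)\le e^{-\alpha r}$ except with probability $C_2e^{-\beta r}$, by~\eqref{eq:common}. For $G_L$, write $G_L^TY=B_r^T(A^TY)$. The vector $A^TY$ is nonzero and independent of $M_1,\dots,M_r$ and of $\Lambda$. Conditioning on it and using the uniformity of~\eqref{eq:common} in $x$ gives the same bound. The squared contributions yield $e^{-2\alpha r}+C_2e^{-\beta r}$, and together with the factor $2$ from the Lipschitz step and two uses of the triangle inequality this gives the term $2\sqrt{2+C_2}\,e^{-\kappa r}$. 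Since $Z$ does not depend on $Y$, the triangle inequality $d(Z,\Lambda)\le d(Z,[G^TY])+d([G^TY],\Lambda)$ may be averaged over $Y$ and the word jointly. The trivial bound $\delta_r\le1$ comes from the norms being at most $1/2$.

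Fourth, for the variance statement, note that $F_r$ has a law independent of $L$. Hence $|\sqrt{v_L}-\sqrt{v_r}|\le\delta_r$ for every $L\ge r$, so $\sqrt{v_L}$ is Cauchy, $v_*$ exists, and letting $L\to\infty$ gives~\eqref{eq:varfinite}. The same estimates show that $Z(B_r)_{jj}\to\Lambda_j^2$ in $L^2$, so $v_*=\Var_\nu(\Lambda_j^2)$, where $\nu$ is the stationary measure. If this vanished for every $j$, then $\Lambda$ would be supported on the finitely many lines with prescribed $|\Lambda_j|$, so $\nu$ would have atoms. This is impossible under strong irreducibility and proximality, since the stationary measure is then nonatomic and gives zero mass to proper subspaces. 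Hence some coordinate $j$ has $v_*>0$.

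The main obstacle is the indexing and conditioning in the third step. One must check that the transpose product starting at site $L$ really uses the suffix letters as its initial factors, so that one common $\Lambda$ serves both $B_r$ and $G_L$. One must also check that the random input $A^TY$ is legitimately independent of $\Lambda$ and of $M_1,\dots,M_r$. Finally, the constants must be tracked carefully so that the exponent is exactly $\kappa$.
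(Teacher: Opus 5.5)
Your proposal is correct and is essentially the paper's proof: an auxiliary uniform $Y$ with the small-ball bound and \eqref{eq:gap}, one common line $\Lambda$ for the shared transpose suffix extended by an independent tail, Minkowski's inequality and centering, and nonatomicity of the stationary law for positivity. Deriving the existence of $v_*$ from the Cauchy property of $\sqrt{v_L}$, rather than from convergence in law, is a harmless variant. The one slip is bookkeeping: in your four-term chain through $\Lambda$, the Lipschitz factors $\sqrt2$ or $2$ overshoot $C_{\rm loc}$. Use instead the sharp bound $|v_j^2-w_j^2|\le\|vv^T-ww^T\|=d([v],[w])$, which yields $2\sqrt{2K_m(1+C_1)}+2\sqrt{1+C_2}\le C_{\rm loc}$ with the exponent $\kappa$.
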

\begin{proof}
The proof uses an auxiliary random direction $Y$ to compare a singular
projector with the image of a vector. The whole product and its suffix
then send their inputs close to the same random line. Adding these
three approximation errors gives the stated localization constant;
$Y$ is a proof device and is not a physical ancilla.
Write $Q=s_1uv^T+E$, with $Ev=0$ and $\norm E=s_2$. Then
\begin{equation}
 Q^TY=s_1(u^TY)v+E^TY,\qquad E^TY\perp v,
 \qquad \|Q^TY\|\ge s_1|u^TY|.
\end{equation}
Consequently
\begin{equation}
 d([Q^TY],[v])\le \min\{1,\sigma(Q)/|u^TY|\}.
\end{equation}
For a random variable $W\in[0,1]$,
\begin{equation}
 \E W=\int_0^1\Prob(W>t)\,dt.
\end{equation}
Apply this identity with $W=\min\{1,s^2/|u^TY|^2\}$ and use
$\Prob(W>t)\le K_ms/\sqrt t$. Integration gives, for $0\le s\le1$,
\begin{equation}
 \E_Y\min\{1,s^2/|u^TY|^2\}\le 2K_ms.
\end{equation}
Writing $\mathcal Z(x)=xx^T/\norm x^2$ and using
$\norm{\mathcal Z(x)-\mathcal Z(y)}_F^2=2d([x],[y])^2$, we obtain
first the conditional estimate
$\E_Y\|Z(Q)-\mathcal Z(Q^TY)\|_F^2\le4K_m\sigma(Q)$.
Splitting the matrix expectation into the good and exceptional events
of~\eqref{eq:gap} gives
$\E\sigma(G_n)\le e^{-a_0n}+C_1e^{-b_0n}$, and hence
\begin{equation}\label{eq:sphere}
 \E\norm{Z(G_n)-\mathcal Z(G_n^TY)}_F^2
 \le4K_m(1+C_1)e^{-\min(a_0,b_0)n}.
\end{equation}

For $G_L=AB_r$, where $A=a_1\cdots a_{L-r}$ (or $I$ if $L=r$),
take $Y$ independent and uniform, and put
$X=A^TY/\norm{A^TY}$. The pair $(X,Y)$ is independent of the suffix.
In reverse transpose order, $M_j=a_{L-j+1}^T$, so
$M_1\cdots M_r=B_r^T$.
Extend this sequence by iid transpose letters independent of $(A,Y)$
and of the suffix, so that the infinite-sequence theorem applies.
Apply~\eqref{eq:common} to both inputs with the same $\Lambda$.
The probability that their projective separation exceeds
$2e^{-\alpha r}$ is at most $2C_2e^{-\beta r}$. Consequently
\begin{equation}
 \E\norm{\mathcal Z(G_L^TY)-\mathcal Z(B_r^TY)}_F^2
 \le8e^{-2\alpha r}+4C_2e^{-\beta r}.
\end{equation}
For a random matrix write $\|X\|_{L^2(F)}=(\E\|X\|_F^2)^{1/2}$; this is a norm, so the triangle inequality below is Minkowski's inequality.
The triangle inequality gives
\begin{align}
 \|Z(G_L)-Z(B_r)\|_{L^2(F)}
 &\le\|Z(G_L)-\mathcal Z(G_L^TY)\|_{L^2(F)}\nonumber\\
 &\quad+\|\mathcal Z(G_L^TY)-\mathcal Z(B_r^TY)\|_{L^2(F)}\nonumber\\
 &\quad+\|\mathcal Z(B_r^TY)-Z(B_r)\|_{L^2(F)}.
\end{align}
The first and third terms are each at most
$2\sqrt{K_m(1+C_1)}e^{-\min(a_0,b_0)r/2}$; the middle term is at most
$2\sqrt{2+C_2}e^{-\min(\alpha,\beta/2)r}$.
Their sum proves
$\norm{Z(G_L)-Z(B_r)}_{L^2(F)}\le C_{\rm loc}e^{-\kappa r}$.
Taking a diagonal coordinate and centering are contractions in real $L^2$,
which proves~\eqref{eq:local}; the additional bound by one follows from
$\norm{H_L}_2,\norm{F_r}_2\le1/2$.

Equations~\eqref{eq:common} and~\eqref{eq:sphere} identify the limiting
projector law $\eta$ as the law of the projector onto $\Lambda$.
Bounded coordinates imply convergence of their first and second moments,
so $v_L$ converges for each $j$.
The law is nonatomic, meaning no individual line has positive probability:
otherwise its atoms of largest mass
form a nonempty finite set, and stationarity forces every transpose letter
to permute this set, contradicting strong irreducibility. If all diagonal
variances vanished, all squared coordinates of its rank-one projectors
would be fixed. Only finitely many sign patterns would remain, contradicting
nonatomicity. Thus some $j$ has $v_*>0$. Finally,
$|\sqrt{v_L}-\sqrt{v_r}|\le d_{L,r}$; let $L\to\infty$ to
obtain~\eqref{eq:varfinite}.
\end{proof}

The construction gives mean-square localization in the reference ensemble.
It does not give uniform operator-norm localization over all configurations.
For an identity product, $\sigma(G_L)=1$, so the contraction argument gives no uniform estimate. More strongly, for the alphabets below one may choose a long anisotropic suffix and an inverse prefix so that $G_L=I$ while $Z(B_r)$ stays an order-one distance from the fixed choice $Z(I)$. Thus the exponential approximation cannot hold uniformly in operator norm over all words.
Appendices~\ref{sec:explicit} and~\ref{sec:sharper} give numerical constants
for a different charge in the thirteen-state model.

\section{Channels, readouts, and reset leakage}
\label{c-sec:setting}
Let the Hilbert space have dimension $N$, and use normalized trace and
the Hilbert--Schmidt inner product
\begin{equation}
 \tau(X)=N^{-1}\Tr X,\qquad
 \ip XY=\tau(X^\dagger Y),\qquad
 \norm X_2^2=\ip XX.
\end{equation}
The norm $\norm\cdot_\infty$ denotes the operator norm. A
\emph{bistochastic} channel is completely positive, trace preserving,
and unital. Such a channel $T$ contracts $\norm\cdot_2$: the
operator Schwarz inequality gives
$T(X)^\dagger T(X)\le T(X^\dagger X)$, and taking the trace proves
$\norm{T(X)}_2\le\norm X_2$ \cite{Kadison,Choi,Watrous}.

Let $H$ be a nonzero Hermitian reference observable. It is not necessarily
the physical Hamiltonian, which we denote by $K$ in Section~\ref{c-sec:extensions}. A conserving
channel satisfies $T(H)=H$. For unitary conjugation
$T=\Ad_U$, $\Ad_U(X)=UXU^\dagger$, this is $[U,H]=0$.
Let $P_j$ be bistochastic channels that are orthogonal projections on
operator space:
\begin{equation}
 P_j^2=P_j=P_j^*,\qquad
 Q_j=(1-p_j)\id+p_jP_j,\quad 0\le p_j\le1.
 \label{c-eq:partial}
\end{equation}
Here $*$ is the adjoint for $\ip\cdot\cdot$. Complete randomization of
a subsystem $A$ is the principal example:
\begin{equation}
 P_A(X)=\frac{I_A}{d_A}\otimes\Tr_A X.
 \label{c-eq:reset}
\end{equation}
Throughout, ``reset'' means this unital randomization, not preparation
of a pure state such as $|0\rangle$. Dephasing channels are another
class of orthogonal projection channels covered by the argument.

The full channel $\Phi$ is any finite composition of the $Q_j$ and
conserving bistochastic factors, in their prescribed order. Neither
the projections nor the other factors need commute. For a Hermitian
readout $F$, define
\begin{equation}
 \begin{split}
 v&=\norm H_2^2>0,\qquad w=\norm F_2^2,\qquad a=\ip HF,\\
 b_j^2&=\norm{(\id-P_j)H}_2^2,\qquad
 \B=\sum_j\frac{p_j}{2-p_j}b_j^2,\\
 C_F&=\ip F{\Phi(F)}.
 \end{split}
 \label{c-eq:parameters}
\end{equation}
These quantities are real. The index $j$ runs over the reset occurrences along the circuit, so a repeated projection is counted each time it is applied. All channels are written in the
Schr\"odinger picture, extended linearly to operators; equivalently,
$C_F=\tau[F\Phi^*(F)]$.
Centering $F$ and $H$ is unnecessary for the inequality. For a memory
interpretation we use centered observables, removing the constant
equilibrium contribution.

For example, if $\tau(F)=0$ and $|\epsilon|\norm F_\infty\le1$,
the state $\rho_\epsilon=(I+\epsilon F)/N$ satisfies
\begin{equation}
 \Tr[F\Phi(\rho_\epsilon)]=\epsilon C_F.
 \label{c-eq:response}
\end{equation}
Thus $C_F$ is the retained linear signal from a preparation aligned with
$F$. For a binary readout it is also an initial--final sign correlation,
as derived in Section~\ref{c-sec:experiment}.

\subsection{The readout-overlap inequality}
\label{c-sec:theorem}
\begin{theorem}[Finite-sequence correlation bound]
\label{thm:overlapreset}
Under the hypotheses of Section~\ref{c-sec:setting},
\begin{equation}
 \boxed{\displaystyle C_F\ge\frac{2a^2}{v+\B}-w.}
 \label{c-eq:main}
\end{equation}
The readout $F$ need not commute with $H$. No locality, independence
of the factors, or average over their choices is assumed.
\end{theorem}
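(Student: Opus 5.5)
The plan is to reduce the signed correlation $C_F$ to a Cauchy--Schwarz estimate against the conserved direction $H$, using only that bistochastic channels contract $\norm\cdot_2$. The starting identity is
\[
 C_F=\ip F{\Phi(F)}=\tfrac12\bigl(\norm{F+\Phi(F)}_2^2-w-\norm{\Phi(F)}_2^2\bigr)\ge\tfrac12\norm{F+\Phi(F)}_2^2-w,
\]
valid because $C_F$ is real and $\Phi$ is a composition of $\norm\cdot_2$-contractions. It therefore suffices to show
\[
 \norm{F+\Phi(F)}_2^2\ge\frac{4a^2}{v+\B}.
\]
This should follow from Cauchy--Schwarz applied to a suitable test operator whose overlap with $F+\Phi(F)$ is essentially $2a$ and whose squared norm is at most $v+\B$. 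In the leakage-free case the test operator is $H$ itself: $\ip H{F+\Phi F}=a+\ip{\Phi^*H}F=2a$, which gives $4a^2/v$.

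The first step is to transfer conservation to the adjoints. If $T$ is bistochastic with $T(H)=H$, then
\[
 \norm{T^*H-H}_2^2=\norm{T^*H}_2^2-2\ip H{TH}+v\le v-2v+v=0,
\]
so $T^*H=H$. The $Q_j$ are self-adjoint. Hence $\Phi^*H$ is obtained from $H$ by running the reversed sequence, in which only the reset factors move the operator. Call the successive iterates $H=H_0,H_1,\dots,H_n=\Phi^*H$, where each nontrivial step is $H_{k+1}=Q_jH_k$ for some reset occurrence $j$.

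At a reset step I would use the two exact identities
\begin{align*}
 \norm{H_k}_2^2-\norm{H_{k+1}}_2^2&=p_j(2-p_j)\norm{(\id-P_j)H_k}_2^2,\\
 \norm{H_k-H_{k+1}}_2^2&=p_j^2\norm{(\id-P_j)H_k}_2^2.
\end{align*}
Together they give
\[
 \norm{H_k-H_{k+1}}_2^2=\frac{p_j}{2-p_j}\bigl(\norm{H_k}_2^2-\norm{H_{k+1}}_2^2\bigr).
\]
Thus each displacement squared is a weighted share of a telescoping norm loss. This is where the squared budget, rather than $\sum_jp_jb_j$, should come from.

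The remaining task is to assemble a test operator $G$ from $H$ and the displacements $H_k-H_{k+1}$. It should satisfy $\ip G{F+\Phi F}\ge 2a$ exactly, using $\ip{H_n}F=\ip H{\Phi F}$, together with $\norm G_2^2\le v+\B$. I would try $G=H+\sum_k c_k(H_k-H_{k+1})$ with coefficients chosen to compensate the lost overlap. The norm would then be estimated by expanding and using the orthogonality of $(\id-P_j)H_k$ to $P_jH_k$.

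The main obstacle is that the identity above is stated with $(\id-P_j)H_k$ along the trajectory, whereas $\B$ uses $b_j=\norm{(\id-P_j)H}_2$ at the fixed $H$. Bridging the two requires a monotonicity argument that exploits the fact that earlier resets only shrink the component outside the range of $P_j$. I expect this bookkeeping, which the hypotheses leave order-free since the factors need not commute, to be the delicate step. If a direct comparison fails, I would instead optimize a scalar multiple $\lambda G$ in the Cauchy--Schwarz step and absorb the cross terms via $2xy\le x^2+y^2$ with weights $p_j/(2-p_j)$.
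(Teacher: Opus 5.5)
Your plan has a genuine gap at its first step. The identity $C_F=\tfrac12(\norm{F+\psi}_2^2-w-\norm\psi_2^2)$ with $\psi=\Phi(F)$ is correct. But bounding $\norm\psi_2^2\le w$ discards exactly the norm loss $w-\norm\psi_2^2$ that must pay for the overlap lost at the resets, and the inequality you then aim for, $\norm{F+\Phi(F)}_2^2\ge4a^2/(v+\B)$, is false. On one qubit take $F=H=\sigma_z$ and $\Phi=P$, the complete randomization with $p=1$. Then $v=w=a=b^2=\B=1$ and $\Phi(F)=0$, so $\norm{F+\Phi(F)}_2^2=1<2$. The theorem is tight in this example ($C_F=0=2a^2/(v+\B)-w$), while your reduction only gives $C_F\ge-\tfrac12$. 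Hence no test operator $G$ with $\ip G{F+\Phi(F)}\ge2a$ and $\norm G_2^2\le v+\B$ exists in general. No choice of coefficients, scalar multiple $\lambda G$, or weighted $2xy\le x^2+y^2$ can rescue the plan.

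The backward bookkeeping is also flawed. A conserving factor has $T^*H=H$, but it need not fix an iterate $H_k\ne H$ that a reset has already moved, so resets are not the only steps that move the operator. Moreover, the monotonicity $\norm{(\id-P_j)H_k}_2\le b_j$ fails for noncommuting projections. On one qubit, $H=\sigma_z$ is fixed by the $\sigma_z$-dephasing $P_j$, so $b_j=0$. The dephasing $P_i$ along $(\sigma_x+\sigma_z)/\sqrt2$ sends it to $(\sigma_x+\sigma_z)/2$, which $P_j$ does not fix.

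The missing idea is to keep $H$ fixed and propagate $F$ forward; this is the paper's Lemma~\ref{c-lem:loss}. For every $X$ and conserving $T$, $\ip H{TX}=\ip{T^*H}X=\ip HX$. At a reset, with $R_j=\id-P_j$, one has $\ip H{X-Q_jX}=p_j\ip{R_jH}{R_jX}$ and $\norm X_2^2-\norm{Q_jX}_2^2=p_j(2-p_j)\norm{R_jX}_2^2$. So the fixed $b_j=\norm{R_jH}_2$ appears directly, with no monotonicity needed. Cauchy--Schwarz over the reset occurrences then gives $(a-c)^2\le\B(w-\norm\psi_2^2)$, where $c=\ip H\psi$.

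With this lemma your polarization identity does finish the proof. Keep $\norm\psi_2^2\le w-(a-c)^2/\B$ (for $\B>0$) and use $\norm{F+\psi}_2^2\ge(a+c)^2/v$:
\[
 C_F\ge\frac12\Bigl[\frac{(a+c)^2}{v}+\frac{(a-c)^2}{\B}\Bigr]-w\ge\frac{2a^2}{v+\B}-w,
\]
by $x^2/v+y^2/\B\ge(x+y)^2/(v+\B)$. When $\B=0$ the lemma gives $c=a$ directly. This variant replaces the paper's ellipsoid decomposition by a scalar weighted Cauchy--Schwarz step. It cannot bypass the overlap-loss lemma, and the retained norm loss is essential.
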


We separate the proof into a dissipative estimate and a geometric step.
This also identifies where exact conservation and the projection
structure enter.

\begin{lemma}[Overlap loss is paid for by norm loss]
\label{c-lem:loss}
Writing $\psi=\Phi(F)$, one has
\begin{equation}
 \left|a-\ip H\psi\right|^2
 \le\B\bigl(w-\norm\psi_2^2\bigr).
 \label{c-eq:loss}
\end{equation}
\end{lemma}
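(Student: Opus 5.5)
The plan is to telescope along the circuit. I would track how the overlap with $H$ and the squared norm change at each factor: conserving factors will leave the overlap unchanged and never increase the norm, while each partial reset $Q_j$ will change the overlap by at most a constant times the square root of the norm it destroys. Cauchy--Schwarz over the reset occurrences then produces the budget $\B$.

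Write $\Phi=T_n\circ\cdots\circ T_1$ in the prescribed order, and set $X_0=F$, $X_k=T_k(X_{k-1})$, so that $X_n=\psi$. Then
\begin{equation*}
a-\ip H\psi=\sum_{k=1}^n\bigl(\ip H{X_{k-1}}-\ip H{X_k}\bigr),
\qquad
w-\norm\psi_2^2=\sum_{k=1}^n\bigl(\norm{X_{k-1}}_2^2-\norm{X_k}_2^2\bigr).
\end{equation*}
Every summand of the second sum is nonnegative, because every factor is bistochastic and hence a $\norm\cdot_2$-contraction.

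\emph{Conserving factors.} For a conserving factor $T$, I first show $T^*(H)=H$. The adjoint $T^*$ is again unital and trace preserving and completely positive, so it also contracts $\norm\cdot_2$. Expanding the square and using $\ip{T^*H}H=\ip H{TH}=v$ gives
\begin{equation*}
\norm{T^*H-H}_2^2=\norm{T^*H}_2^2-2v+v\le0 .
\end{equation*}
Hence $\ip H{TX}=\ip{T^*H}X=\ip HX$, so conserving steps contribute zero to the overlap sum and a nonnegative amount to the norm-loss sum.

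\emph{Reset factors.} For a reset factor $Q_j=(1-p_j)\id+p_jP_j$, decompose $X=Y+Z$ with $Y=P_jX$ and $Z=(\id-P_j)X$; these are orthogonal because $P_j$ is an orthogonal projection. Then $Q_jX=Y+(1-p_j)Z$, so the norm loss is
\begin{equation*}
\norm X_2^2-\norm{Q_jX}_2^2=p_j(2-p_j)\norm Z_2^2 .
\end{equation*}
Using self-adjointness and idempotence of $\id-P_j$, the overlap change is
\begin{equation*}
\ip H{X-Q_jX}=p_j\ip{(\id-P_j)H}{Z},
\end{equation*}
whose modulus is at most $p_jb_j\norm Z_2=\sqrt{p_j/(2-p_j)}\,b_j\sqrt{\text{loss}_j}$. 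If $p_j=1$ the formula still holds.

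Finally, apply Cauchy--Schwarz over the reset occurrences:
\begin{equation*}
|a-\ip H\psi|^2\le\Bigl(\sum_j\tfrac{p_j}{2-p_j}b_j^2\Bigr)\Bigl(\sum_j\text{loss}_j\Bigr)\le\B\bigl(w-\norm\psi_2^2\bigr).
\end{equation*}
The last inequality holds because the reset losses are a subcollection of the nonnegative telescoping terms.

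The main obstacle is the conserving step: the hypothesis is $T(H)=H$ in the Schr\"odinger picture, but the overlap needs $T^*(H)=H$. The contraction argument above handles this and uses bistochasticity essentially. Everything else is an exact two-component computation.
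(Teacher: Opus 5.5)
Your proposal is correct and follows essentially the same route as the paper. Both arguments show $T^*H=H$ for conserving factors, compute the exact reset norm loss $p_j(2-p_j)\|(\id-P_j)X\|_2^2$ and overlap change $p_j\ip{(\id-P_j)H}{(\id-P_j)X}$, and conclude by Cauchy--Schwarz over reset occurrences and contractivity of the remaining factors. The only difference is cosmetic: you get $\|T^*H\|_2\le\sqrt v$ from bistochasticity of $T^*$, whereas the paper uses $\|T^*\|=\|T\|\le1$ and the equality case of Cauchy--Schwarz.
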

\begin{proof}
First, a contraction $T$ fixing $H$ also satisfies $T^*H=H$.
Indeed,
$\ip{T^*H}H=\ip H{TH}=v$ and
$\norm{T^*H}_2\le\sqrt v$; equality in Cauchy--Schwarz implies the
claim. Hence a conserving factor does not change the overlap with $H$.

At a reset step, let $X$ be the incoming operator and put
$R_j=\id-P_j$. Orthogonality gives
\begin{align}
 Q_jX&=P_jX+(1-p_j)R_jX,\\
 d_j:=\norm X_2^2-\norm{Q_jX}_2^2
 &=p_j(2-p_j)\norm{R_jX}_2^2,\label{c-eq:normloss}\\
 \ip H{X-Q_jX}&=p_j\ip{R_jH}{R_jX}.
 \label{c-eq:overlapstep}
\end{align}
For $p_j>0$, the last line is bounded in absolute value by
$\sqrt{p_j/(2-p_j)}\,b_j\sqrt{d_j}$; the same statement is
zero on both sides when $p_j=0$. Telescoping the overlap across the
actual trajectory and applying Cauchy--Schwarz to the sum gives
\begin{equation}
 \left|a-\ip H\psi\right|
 \le\sum_j\sqrt{\frac{p_j}{2-p_j}}\,b_j\sqrt{d_j}
 \le\sqrt{\B\sum_jd_j}.
\end{equation}
Every intervening factor is a contraction, so the reset losses satisfy
$\sum_jd_j\le w-\norm\psi_2^2$. This proves~\eqref{c-eq:loss}.
\end{proof}

\begin{proof}[Proof of Theorem~\ref{thm:overlapreset}]
Suppose first that $\B>0$, and set $h=H/\sqrt v$. Decompose
\begin{equation}
 F=\frac a{\sqrt v}h+F_\perp,\qquad
 \psi=z h+\psi_\perp,\qquad
 \ip h{F_\perp}=\ip h{\psi_\perp}=0.
\end{equation}
Since all operators involved are Hermitian, $z$ is real.
Equation~\eqref{c-eq:loss} reads
\begin{equation}
 (a-\sqrt v\,z)^2\le\B
       (w-z^2-\norm{\psi_\perp}_2^2).
\end{equation}
Completing the square, and writing $R=w-a^2/(v+\B)\ge0$, yields
\begin{equation}
 \frac{v+\B}{\B}
 \left(z-\frac{a\sqrt v}{v+\B}\right)^2
 +\norm{\psi_\perp}_2^2\le R.
 \label{c-eq:ellipsoid}
\end{equation}
Put $u=\sqrt{(v+\B)/\B}\,[z-a\sqrt v/(v+\B)]$. Then
\begin{equation}
 C_F=\frac{a^2}{v+\B}
   +\frac a{\sqrt v}\sqrt{\frac{\B}{v+\B}}\,u
   +\ip{F_\perp}{\psi_\perp}.
\end{equation}
The squared norm of the coefficients of the last two terms is
\begin{equation}
 \frac{a^2\B}{v(v+\B)}+\norm{F_\perp}_2^2
 =\frac{a^2\B}{v(v+\B)}+w-\frac{a^2}v=R.
\end{equation}
Cauchy--Schwarz and~\eqref{c-eq:ellipsoid} therefore bound those terms
below by $-R$, proving $C_F\ge2a^2/(v+\B)-w$.

If $\B=0$, Lemma~\ref{c-lem:loss} fixes $\ip H\psi=a$.
Contractivity gives
$\norm{\psi_\perp}_2^2\le w-a^2/v=\norm{F_\perp}_2^2$,
and hence
$C_F=a^2/v+\ip{F_\perp}{\psi_\perp}\ge2a^2/v-w$.
The case $F=0$, where both sides vanish, is included.
\end{proof}

The coefficient $p/(2-p)$ is the ratio between the squared overlap
change and the available norm loss at a partial reset. Replacing
each step by a triangle-inequality estimate would instead involve
$\sum_jp_jb_j$. The squared leakage is useful precisely when the
reference observable is only weakly affected by each reset.

\subsection{Normalized form and retention times}
For $w>0$, introduce the overlap fraction and relative leakage
\begin{equation}\label{c-eq:qx}
 q=\frac{a^2}{vw}\in[0,1],\qquad x=\frac{\B}{v}.
\end{equation}
The bound becomes
\begin{equation}
 \frac{C_F}{w}\ge\frac{2q}{1+x}-1.
 \label{c-eq:normalized}
\end{equation}
Thus positive guaranteed correlation requires both sufficient overlap
and sufficiently small leakage. Even at $x=0$, a readout with
$q\le1/2$ receives no positive guarantee from this estimate: its
orthogonal component can contribute negatively at a particular time.
This explains why~\eqref{c-eq:normalized} is not the usual time-averaged
Mazur bound $\overline{C_F}/w\ge q$.

For $F=H$, equation~\eqref{c-eq:main} gives
\begin{equation}
 C_H\ge v\frac{v-\B}{v+\B}\ge v-2\B.
 \label{c-eq:aligned}
\end{equation}
Call a round one pass of the conserving dynamics together with its resets. If each round contributes at most $\beta_{\rm round}>0$ to $\B$, a target
retention fraction $\theta\in(0,1)$ is guaranteed after $n$ rounds whenever
\begin{equation}
 n\beta_{\rm round}\le v\left(\frac{2q}{1+\theta}-1\right),
 \qquad q>\frac{1+\theta}{2}.
 \label{c-eq:lifetime}
\end{equation}
This is a sufficient finite-time condition, not a statement about
the sharp relaxation time.

For a spatial application, suppose a local approximation $F_\ell$ to
$H$ satisfies $\norm{H-F_\ell}_2\le\delta_\ell$, and a reset
outside its support fixes $F_\ell$. Then
\begin{equation}
 \norm{(\id-P_j)H}_2
 =\norm{(\id-P_j)(H-F_\ell)}_2\le\delta_\ell.
\end{equation}
If $\delta_\ell\le C_{\rm loc}e^{-\kappa\ell}$ and a round contains $J$
complete resets of this kind, then
$\beta_{\rm round}\le J C_{\rm loc}^2e^{-2\kappa\ell}$.
For a fixed positive overlap margin in~\eqref{c-eq:lifetime}, this
certifies a number of rounds proportional to $e^{2\kappa\ell}$.
This consequence is conditional on establishing localization and
overlap in a particular model; Lemma~\ref{lem:spatial} and Corollary~\ref{cor:prefix} do this for the chain. The two-qubit experiment supplies
neither a spatial exponent nor a many-body lifetime.

\subsection{Replacing the conserved charge by a local readout}
For the chain, use $H=H_L$, write $v=v_L$, and define
$C_\ell=\ip{F_\ell}{\Phi(F_\ell)}$. Set
\begin{equation}
 E_\ell=d_{L,\ell}(\sqrt{v_L}+\sqrt{v_\ell})\le d_{L,\ell},
\end{equation}
where the inequality uses $\sqrt{v_L},\sqrt{v_\ell}\le1/2$.
For a reset with buffer $r_j$, meaning that it acts only on the first
$L-r_j$ sites, $P_jF_{r_j}=F_{r_j}$ and hence
$b_j=\|(\id-P_j)H_L\|_2\le\delta_{r_j}$.
The error $E_\ell$ controls the replacement of both endpoints of the correlation:
for any bistochastic $\Phi$, adding and subtracting $\langle H_L,\Phi(F_\ell)\rangle_\tau$ and using contractivity,
\begin{align}
 |\langle F_\ell,\Phi(F_\ell)\rangle_\tau
    -\langle H_L,\Phi(H_L)\rangle_\tau|
 &\le \|F_\ell-H_L\|_2\|\Phi(F_\ell)\|_2
       +\|H_L\|_2\|\Phi(F_\ell-H_L)\|_2\nonumber\\
 &\le E_\ell.
\end{align}
\begin{corollary}[Reset bound for the local readout]\label{thm:reset}
For the channel sequence of Theorem~\ref{thm:overlapreset} conserving
$H_L$ between resets, with $v_L>0$,
\begin{equation}\label{eq:reset}
 C_\ell\ge v_L\frac{v_L-\B}{v_L+\B}-E_\ell
          \ge v_L-2\B-E_\ell.
\end{equation}
\end{corollary}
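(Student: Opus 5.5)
The argument will combine the aligned form of Theorem~\ref{thm:overlapreset} with the endpoint-replacement estimate displayed just before the corollary. First I apply Theorem~\ref{thm:overlapreset} with reference observable $H=H_L$ and readout $F=H_L$ itself. Then $a=w=v=v_L>0$, and $\B$ is built from $b_j=\|(\id-P_j)H_L\|_2$ as in \eqref{c-eq:parameters}. The theorem's hypotheses hold because every non-reset factor is bistochastic and conserves $H_L$, and the resets are partial projections of the form \eqref{c-eq:partial}. Equation~\eqref{c-eq:aligned} then gives
\[
 \langle H_L,\Phi(H_L)\rangle_\tau\ \ge\ v_L\frac{v_L-\B}{v_L+\B}.
\]

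Second, I need to know that the full channel $\Phi$ is bistochastic. Each $Q_j=(1-p_j)\id+p_jP_j$ is a convex combination of bistochastic channels, so it is bistochastic. Compositions of bistochastic channels are again bistochastic. Therefore $\Phi$ contracts $\|\cdot\|_2$.

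With that, the displayed replacement estimate applies. Adding and subtracting $\langle H_L,\Phi(F_\ell)\rangle_\tau$ and using Cauchy--Schwarz together with contractivity, $\|\Phi(F_\ell)\|_2\le\|F_\ell\|_2=\sqrt{v_\ell}$ and $\|\Phi(F_\ell-H_L)\|_2\le d_{L,\ell}$. Hence
\[
 |C_\ell-\langle H_L,\Phi(H_L)\rangle_\tau|\le d_{L,\ell}(\sqrt{v_\ell}+\sqrt{v_L})=E_\ell.
\]
Subtracting this from the aligned bound gives the first inequality in \eqref{eq:reset}.

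For the second inequality I will show $v_L(v_L-\B)/(v_L+\B)\ge v_L-2\B$. Multiplying by $v_L+\B>0$, this is equivalent to
\[
 v_L^2-v_L\B\ \ge\ v_L^2-v_L\B-2\B^2,
\]
that is, to $2\B^2\ge0$, which always holds.

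The only point that needs care is matching hypotheses. The corollary requires $v_L>0$, the conserving factors must fix $H_L$ exactly, and $\B$ must be computed with respect to $H_L$ and not $F_\ell$. The readout $F_\ell$ never has to be conserved, since it enters only through the replacement error. There is no real obstacle beyond this bookkeeping.
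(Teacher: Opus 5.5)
Your proposal is correct and matches the paper's proof, which consists of applying the aligned case \eqref{c-eq:aligned} of Theorem~\ref{thm:overlapreset} with $F=H=H_L$ and then subtracting the endpoint-replacement error $E_\ell$, using that the composite $\Phi$ is bistochastic and hence contracts $\|\cdot\|_2$. Your reduction of the second inequality to $2\B^2\ge0$ is the elementary step the paper leaves implicit.
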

\begin{proof}
Apply~\eqref{c-eq:aligned} and the endpoint estimate above.
\end{proof}

\section{Distributed perturbations and finite-size certification}
For $r=0$ set $F_0=0$ and $\delta_0=\sqrt{v_L}\le1/2$.
An operator or bistochastic channel has buffer $r$ if its support lies in the first
$L-r$ sites, that is, $V=V'\otimes I$ or $\mathcal E=\mathcal E'\otimes\id$ with the second factor on the last $r$ sites. It then commutes with, or fixes, respectively, the suffix
operator $F_r=I\otimes f_r$: indeed $\mathcal E(F_r)=\mathcal E'(I)\otimes f_r=F_r$ because $\mathcal E'$ is unital.

\begin{lemma}[Spatial leakage]\label{lem:spatial}
Let $\mathcal E$ be a bistochastic channel with buffer $r$. Then
\begin{equation}\label{eq:channeldefect}
 \norm{(\mathcal E-\id)H_L}_2\le2\delta_r.
\end{equation}
If $P$ is a bistochastic orthogonal projection channel with buffer $r$,
$\norm{(\id-P)H_L}_2\le\delta_r$. A Hermitian perturbation $V$ with
buffer $r$ obeys
\begin{equation}\label{eq:commdefect}
 \norm{[V,H_L]}_2\le2\norm V\delta_r.
\end{equation}
\end{lemma}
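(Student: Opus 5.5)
The plan is to reduce all three bounds to the localization estimate $\norm{H_L-F_r}_2\le\delta_r$ of Theorem~\ref{thm:local}, using the fact just noted that an object with buffer $r$ fixes or commutes with the suffix operator $F_r$. In each case I subtract the suffix operator, so that only the difference $H_L-F_r$ remains. That difference is then controlled by contractivity or by an operator-norm bound. The convention $F_0=0$, $\delta_0=\sqrt{v_L}$ makes the case $r=0$ the trivial estimate $\norm{H_L}_2=\delta_0$, so the same argument covers it.

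\emph{Channel defect.} Since $\mathcal E(F_r)=F_r$, I would write $(\mathcal E-\id)H_L=(\mathcal E-\id)(H_L-F_r)$. The triangle inequality and the $\norm\cdot_2$-contractivity of bistochastic channels (Section~\ref{c-sec:setting}) then give
\[
\norm{(\mathcal E-\id)H_L}_2\le \norm{\mathcal E(H_L-F_r)}_2+\norm{H_L-F_r}_2\le 2\delta_r .
\]

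\emph{Projection channel.} Here $P(F_r)=F_r$, so $(\id-P)H_L=(\id-P)(H_L-F_r)$. Because $P$ is an orthogonal projection on operator space, so is $\id-P$, and hence it has norm at most one in $\norm\cdot_2$. This yields the sharper constant $\delta_r$, exactly as in the spatial computation preceding Corollary~\ref{thm:reset}.

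\emph{Commutator.} Since $V=V'\otimes I$ acts on the first $L-r$ sites and $F_r=I\otimes f_r$, we have $[V,F_r]=0$. Hence $[V,H_L]=V(H_L-F_r)-(H_L-F_r)V$. I would then use $\norm{XY}_2\le\norm X\,\norm Y_2$ and $\norm{YX}_2\le\norm Y_2\norm X$, which hold for the normalized Hilbert--Schmidt norm because $\tau(Y^\dagger X^\dagger XY)\le\norm X^2\tau(Y^\dagger Y)$, together with cyclicity of the trace for the right multiplication. Each term is then at most $\norm V\delta_r$, giving the factor $2$.

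The only point needing care is that these steps invoke $\delta_r$ from Theorem~\ref{thm:local}, which already includes the cap at $1$. For $r\ge1$ this is immediate from~\eqref{eq:local}, and for $r=0$ it holds by the convention above. I expect no real obstacle beyond stating the one-sided Hölder-type inequality for the normalized 2-norm.
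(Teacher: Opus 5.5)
Your proposal is correct and follows the paper's proof essentially step for step. Like the paper, you replace $H_L$ by $H_L-F_r$, which is allowed because $\mathcal E$ and $P$ fix $F_r$ and $V$ commutes with it. You then use triangle inequality plus contractivity for the channel, contractivity of the orthogonal projection $\id-P$, and the ideal property $\|VX\|_2,\|XV\|_2\le\|V\|\,\|X\|_2$ for the commutator; your explicit check of the $r=0$ case via $F_0=0$, $\delta_0=\sqrt{v_L}$ is a harmless addition the paper leaves implicit.
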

\begin{proof}
Replace $H_L$ by $H_L-F_r$ in each expression, which is allowed because $\mathcal E$, $P$ and $V$ all fix or commute with $F_r$. Then $\|(\mathcal E-\id)(H_L-F_r)\|_2\le\|\mathcal E(H_L-F_r)\|_2+\|H_L-F_r\|_2\le2d_{L,r}$ by channel contractivity and the triangle inequality; $\id-P$ is again an orthogonal projection, hence a contraction; and $\|VX\|_2,\|XV\|_2\le\|V\|\,\|X\|_2$ with $\|V\|$ the spectral norm (the ideal property of the Hilbert--Schmidt norm) gives the commutator bound.
\end{proof}

Let $\Phi_t$ be the bistochastic channel after $t$ rounds (or after time $t$ in the continuous-time setting later on), acting on operators by linearity as in Section~\ref{c-sec:setting}, and let $\Phi_t^*$ be its adjoint for $\ip\cdot\cdot$, the Heisenberg picture. Define the
infinite-temperature autocorrelation, the correlation in the reference state $I/D^L$,
\begin{equation}
 C_\ell(t)=\tau_L\big(F_\ell\Phi_t(F_\ell)\big)
           =\tau_L\big(F_\ell\Phi_t^*(F_\ell)\big).
\end{equation}
This is real because $\Phi_t$ preserves Hermiticity and $\tau_L(AB)$ is real for Hermitian $A,B$; it need not be positive without further information.

\begin{theorem}[Arbitrary channels and a finite-block certificate]\label{thm:linear}
Let $\Phi=\mathcal E_n\cdots\mathcal E_1$ be a product of bistochastic
channels and put $e_j=\norm{(\mathcal E_j-\id)H_L}_2$. Then
\begin{equation}\label{eq:finitecert}
 C_\ell\ge v_\ell-\sqrt{v_\ell}
 \left(2d_{L,\ell}+\sum_{j=1}^n e_j\right).
\end{equation}
Conserving factors have $e_j=0$. A factor
$\mathcal E_j=(1-p_j)\id+p_j\mathcal V_j$, $0\le p_j\le1$,
where $\mathcal V_j$ is bistochastic with buffer $r_j$, has
$e_j\le2p_j\delta_{r_j}$. No independence or reversibility is required.
\end{theorem}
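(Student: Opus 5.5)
The plan is to compare $C_\ell$ with the overlap $\ip{F_\ell}{\Phi(H_L)}$, and then track how far $\Phi(H_L)$ drifts from $H_L$ by telescoping along the factors. Write
\begin{equation*}
 C_\ell=\ip{F_\ell}{\Phi(F_\ell)}
 =\ip{F_\ell}{\Phi(H_L)}+\ip{F_\ell}{\Phi(F_\ell-H_L)} .
\end{equation*}
The second term is bounded in absolute value by $\sqrt{v_\ell}\,d_{L,\ell}$. This follows from Cauchy--Schwarz together with the $\norm\cdot_2$-contractivity of the bistochastic $\Phi$ proved in Section~\ref{c-sec:setting}.

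For the first term, I would split it as
\begin{equation*}
 \ip{F_\ell}{\Phi(H_L)}=\ip{F_\ell}{H_L}+\ip{F_\ell}{\Phi(H_L)-H_L}.
\end{equation*}
Next I would write $\ip{F_\ell}{H_L}=v_\ell+\ip{F_\ell}{H_L-F_\ell}\ge v_\ell-\sqrt{v_\ell}\,d_{L,\ell}$. This supplies the second copy of $d_{L,\ell}$, which explains the factor $2$. For the drift I would telescope:
\begin{equation*}
 \Phi(H_L)-H_L=\sum_{j=1}^n \mathcal E_n\cdots\mathcal E_{j+1}(\mathcal E_j-\id)\,\mathcal E_{j-1}\cdots\mathcal E_1(H_L).
\end{equation*}
\emph{This is the main obstacle.} The defect $e_j$ is defined at $H_L$ itself, not at the evolved operator $\mathcal E_{j-1}\cdots\mathcal E_1 H_L$. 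The cleaner route is to telescope in the other order, writing $\Phi(H_L)-H_L=\sum_j(\mathcal E_n\cdots\mathcal E_j-\mathcal E_n\cdots\mathcal E_{j+1})H_L$. Each term then equals $\mathcal E_n\cdots\mathcal E_{j+1}(\mathcal E_j-\id)H_L$ with the defect applied to $H_L$ directly. Its norm is at most $e_j$ by contractivity of the outer product, so $\norm{\Phi(H_L)-H_L}_2\le\sum_je_j$. Cauchy--Schwarz against $F_\ell$, with $\norm{F_\ell}_2=\sqrt{v_\ell}$, contributes $\sqrt{v_\ell}\sum_je_j$. Collecting the three error terms gives~\eqref{eq:finitecert}.

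The remaining claims are direct. A conserving factor has $\mathcal E_jH_L=H_L$, so $e_j=0$. For $\mathcal E_j=(1-p_j)\id+p_j\mathcal V_j$ one has $(\mathcal E_j-\id)H_L=p_j(\mathcal V_j-\id)H_L$. Lemma~\ref{lem:spatial}, applied to the bistochastic $\mathcal V_j$ with buffer $r_j$, then gives $e_j\le 2p_j\delta_{r_j}$. No independence or reversibility enters, since only contractivity of each factor and the triangle inequality were used.
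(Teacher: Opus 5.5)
Your proof is correct and is essentially the paper's: it uses the expansion $\Phi H_L-H_L=\sum_j\mathcal E_n\cdots\mathcal E_{j+1}(\mathcal E_j-\id)H_L$ and the split $\Phi F_\ell-F_\ell=\Phi(F_\ell-H_L)+(\Phi H_L-H_L)+(H_L-F_\ell)$ with Cauchy--Schwarz against $F_\ell$, which is what your three inner-product terms amount to. One correction: your first displayed telescoping, with $\mathcal E_{j-1}\cdots\mathcal E_1$ on the right, is not an identity (for $n=2$ it gives $2\mathcal E_2\mathcal E_1-\mathcal E_2-\mathcal E_1\ne\mathcal E_2\mathcal E_1-\id$), so drop it rather than presenting it as an obstacle; the ``other order'' version you then use is the correct one.
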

\begin{proof}
Expand $\Phi H_L-H_L=\sum_{j=1}^n\mathcal E_n\cdots\mathcal E_{j+1}(\mathcal E_j-\id)H_L$, each factor's defect applied to $H_L$ itself and propagated
through the later factors. Contractivity of $\mathcal E_n\cdots\mathcal E_{j+1}$ gives a norm bound by $\sum e_j$.
Also $\Phi F_\ell-F_\ell=\Phi(F_\ell-H_L)+(\Phi H_L-H_L)+(H_L-F_\ell)$, so $\norm{\Phi F_\ell-F_\ell}_2\le2d_{L,\ell}+\sum e_j$.
Take the inner product with $F_\ell$ and use Cauchy--Schwarz with $\|F_\ell\|_2=\sqrt{v_\ell}$.
\end{proof}

For example, group the factors into $t$ consecutive rounds and suppose the factors of each round have $\sum_je_j\le\varepsilon_L$, the subscript recording that the defects depend on $L$ through the buffers. Then $\sum_je_j\le t\varepsilon_L$, and
$C_\ell(t)\ge\theta v_\ell$, $0<\theta<1$, throughout
\begin{equation}\label{eq:finitewindow}
 t\le\frac{(1-\theta)\sqrt{v_\ell}-2d_{L,\ell}}{\varepsilon_L},
 \qquad (1-\theta)\sqrt{v_\ell}>2d_{L,\ell}.
\end{equation}
The finite-block variance reduces to a sum over $\ell$-letter words because $F_\ell$ depends only on the last $\ell$ letters; with $G_\ell(w)=a_1\cdots a_\ell$,
\begin{equation}
 v_\ell=D^{-\ell}\sum_{w\in\A^\ell}
 \left(Z(G_\ell(w))_{jj}-D^{-\ell}\sum_{u\in\A^\ell}
 Z(G_\ell(u))_{jj}\right)^2.
\end{equation}
Unlike $v_*$ of Theorem~\ref{thm:local}, $v_\ell$ is a finite sum of $D^\ell$ terms, so~\eqref{eq:finitecert} is a certificate at fixed finite $L$ with no limit taken.
Explicit numerical bounds on $d_{L,\ell}$ and $\varepsilon_L$ turn this
inequality into a numerical certificate. If
$\varepsilon_L\le C e^{-\lambda L}$, it guarantees a number of integer
rounds proportional to $e^{\lambda L}$, provided the numerator has a
positive lower bound uniform in $L$.
When $\varepsilon_L=0$ and the numerator is positive, \eqref{eq:finitewindow} imposes no restriction on $t$.

\section{Continuous dynamics and conservation defects}
Work in the Schr\"odinger picture on the operator Hilbert space, the space of operators with the inner product $\ip\cdot\cdot$; $*$ denotes the adjoint for it, and $\mathcal D_s\ge0$ means $\ip X{\mathcal D_sX}\ge0$ for all $X$. Suppose
\begin{equation}\label{eq:lindblad}
 \dot X=(\mathcal A_s-\mathcal D_s)X,
 \quad\mathcal A_s^*=-\mathcal A_s,\quad
 \mathcal D_s^*=\mathcal D_s\ge0,
\end{equation}
with a bistochastic propagator. An important instance is
\begin{equation}
 \mathcal A_s X=-i[K_s,X],\qquad
 \mathcal D_s=\sum_j\varepsilon_j(s)(\id-P_j).
\end{equation}
Here $\varepsilon_j(s)\ge0$ are reset rates, $K_s$ is Hermitian (so that $\mathcal A_s$ is skew-adjoint), and each $P_j$ is a projection channel as in~\eqref{c-eq:partial}. All coefficients are bounded and piecewise continuous on finite intervals.
The same argument applies to more general positive self-adjoint
dissipative parts when their quadratic forms are controlled.

\begin{theorem}[Continuous energy and coherent errors]\label{thm:continuous}
For $v=\norm{H_L}_2^2>0$, set
\begin{equation}
 \B(t)=\frac12\int_0^t\ip{H_L}{\mathcal D_sH_L}\,ds,
 \qquad A(t)=\int_0^t\norm{\mathcal A_sH_L}_2\,ds.
\end{equation}
Then
\begin{equation}\label{eq:continuous}
 C_\ell(t)\ge
 \frac{v-\B(t)-2\sqrt v A(t)}{1+\B(t)/v}-E_\ell
 \ge v-2\B(t)-2\sqrt v A(t)-E_\ell.
\end{equation}
For the reset dissipator
$\mathcal D_s=\sum_j\varepsilon_j(s)(\id-P_j)$, suppose each projection
channel $P_j$ has buffer $r_j$. If the Hamiltonian is
$K_s=K_s^{(0)}+\sum_X V_X(s)$, with
$[K_s^{(0)},H_L]=0$ and each $V_X$ having buffer $r_X$, then
\begin{equation}\label{eq:profiles}
 \begin{split}
 \B(t)&\le\tfrac12\int_0^t\sum_j\varepsilon_j(s)\delta_{r_j}^2\,ds,\\
 A(t)&\le2\int_0^t\sum_X\norm{V_X(s)}\delta_{r_X}\,ds.
 \end{split}
\end{equation}
\end{theorem}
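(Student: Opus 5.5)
The plan is to prove the aligned bound $C_{H}(t)\ge [v-\B(t)-2\sqrt v A(t)]/(1+\B(t)/v)$ for the reference observable $H=H_L$ itself, as a continuous-time analogue of Lemma~\ref{c-lem:loss} and~\eqref{c-eq:aligned}. The statement for the local readout $F_\ell$ then follows by subtracting $E_\ell$, using the endpoint estimate preceding Corollary~\ref{thm:reset}, which holds for any bistochastic $\Phi_t$. The second inequality in~\eqref{eq:continuous} is elementary: it uses $(v-x)/(1+\B/v)\ge v-x-\B$ for $0\le x\le v$, and is trivial otherwise.

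\textbf{Step 1: two differential identities.} Put $\psi(t)=\Phi_t(H)$ and $z(t)=\ip H{\psi(t)}$. Because $\mathcal A_s$ is skew-adjoint,
\[
\frac{d}{dt}\norm{\psi}_2^2=-2\ip\psi{\mathcal D_t\psi},
\]
so integrating gives $2\int_0^t\ip\psi{\mathcal D_s\psi}\,ds=v-\norm{\psi(t)}_2^2$. For the overlap,
\[
\dot z=-\ip{\mathcal A_tH}{\psi}-\ip{\mathcal D_tH}{\psi}.
\]
I bound the first term by $\norm{\mathcal A_tH}_2\norm\psi_2\le\sqrt v\,\norm{\mathcal A_tH}_2$, using contractivity of the bistochastic propagator. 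For the second term I use Cauchy--Schwarz for the positive form $\mathcal D_t$:
\[
|\ip{\mathcal D_tH}\psi|\le\ip H{\mathcal D_tH}^{1/2}\ip\psi{\mathcal D_t\psi}^{1/2}.
\]
Integrating and applying Cauchy--Schwarz in time gives
\[
|v-z(t)|\le\sqrt v\,A(t)+\sqrt{2\B(t)}\sqrt{\tfrac12\bigl(v-\norm{\psi}_2^2\bigr)}=\sqrt v\,A(t)+\sqrt{\B(t)\bigl(v-\norm\psi_2^2\bigr)}.
\]
This is the continuous version of~\eqref{c-eq:loss}, shifted by the coherent drift.

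\textbf{Step 2: the geometric step.} Set $y=v-\sqrt vA-z$. If $y\le0$, then $z\ge v-\sqrt vA$, which already exceeds the target. Otherwise, $\norm\psi_2^2\ge z^2/v$ gives $y^2\le\B(v-z^2/v)$. I substitute $z=v-\sqrt vA-y$ and solve this quadratic inequality, as in the completion of the square in the proof of Theorem~\ref{thm:overlapreset} with $a=w=v$. I expect the ellipsoid bound to yield $z\ge v(v-\B)/(v+\B)$ minus a drift term. The main obstacle is the bookkeeping needed to show that this drift correction is at most $2\sqrt vA/(1+\B/v)$. My first attempt will expand $z^2=(v-\sqrt vA-y)^2$ and discard the nonnegative term $A^2$, which costs a factor of $2$ in the cross term. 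If the factor does not close, I will instead treat the coherent part by interaction-picture comparison with a $K^{(0)}$-only flow. Then $C_H(t)=z(t)$, since $H$ is Hermitian.

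\textbf{Step 3: the profile bounds~\eqref{eq:profiles}.} For the dissipator, each $P_j$ is an orthogonal projection, so
\[
\ip{H_L}{(\id-P_j)H_L}=\norm{(\id-P_j)H_L}_2^2\le\delta_{r_j}^2
\]
by Lemma~\ref{lem:spatial}. For the Hamiltonian part, $\mathcal A_sH_L=-i\sum_X[V_X,H_L]$ because $K^{(0)}_s$ commutes with $H_L$. Then~\eqref{eq:commdefect} gives $\norm{\mathcal A_sH_L}_2\le2\sum_X\norm{V_X}\delta_{r_X}$. Integrating both estimates in time yields~\eqref{eq:profiles}.
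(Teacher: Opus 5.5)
\textbf{Verdict: genuine gap in Step~2.} Your Steps~1 and~3 coincide with the paper's proof. The inequality $v-z\le\sqrt v\,A(t)+\sqrt{\B(t)\,(v-\norm{\psi}_2^2)}$, with the factor $\tfrac12$ coming from Cauchy--Schwarz in time, is the unnormalized form of the paper's $1-c\le\eta+\sqrt{yd}$. The profile bounds follow from Lemma~\ref{lem:spatial} as you say.

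The gap is Step~2. You leave it as unfinished ``bookkeeping,'' and the route you propose first does not close in general. Write $c=z/v$, $Y=\B(t)/v$, $\eta=A(t)/\sqrt v$ and $\tilde y=1-\eta-c$. Dropping the $\eta^2$ term from $c^2$ turns your constraint $\tilde y^2\le Y(1-c^2)$ into $(1+Y)\tilde y^2-2Y(1-\eta)\tilde y-2Y\eta\le0$. The claimed bound is equivalent to $\tilde y\le\bigl(2Y+\eta(1-Y)\bigr)/(1+Y)$. The larger root meets this threshold if and only if $Y^2\eta^2\le\eta^2$, that is, only when $\B(t)\le v$ (or $A=0$). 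For example, $Y=3$, $\eta=1/2$ yields only $c\ge(1-\sqrt{57})/8\approx-0.819$, whereas the theorem asserts $c\ge-3/4$. That claim is not vacuous, since a priori one only knows $c\ge-1$. Your interaction-picture fallback is not specified. Moreover, in the first part of the theorem $\mathcal A_s$ is an arbitrary skew-adjoint generator, so there is no conserving $K^{(0)}$ flow to compare with.

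No new idea is needed to repair this; there are two ways.
\begin{itemize}
\item Keep the $\eta^2$ term and solve $(1+Y)c^2-2(1-\eta)c+(1-\eta)^2-Y\le0$ exactly. Its smaller root exceeds $(1-Y-2\eta)/(1+Y)$, because the comparison reduces to $(1+Y)\eta^2\ge0$.
\item Do what the paper does and linearize before solving. Since $\norm{\psi}_2^2\ge z^2/v$ and $|z|\le v$, the arithmetic--geometric mean inequality gives
\[
\sqrt{\B\,(v-z)(v+z)/v}\le\tfrac12(v-z)+\tfrac{\B}{2v}(v+z).
\]
Step~1 then becomes $v-z\le\sqrt v\,A+\tfrac12(v-z)+\tfrac{\B}{2v}(v+z)$, which rearranges at once to $(1+\B/v)\,z\ge v-\B-2\sqrt v\,A$. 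The factor $2$ on the coherent term comes from moving $\tfrac12(v-z)$ to the left-hand side, and no case split on the sign of your $y$ is needed.
\end{itemize}

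Your case $y\le0$ also needs a one-line caveat. When $v-\sqrt v\,A$ lies below the target, which forces $\B>v$ and $A>2\sqrt v$, the target is below $-v$ and so holds trivially.
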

\begin{proof}
Evolve $h=H_L/\sqrt v$ by~\eqref{eq:lindblad}, obtaining $\psi_s$, with $\|h\|_2=1$.
Since $\frac d{ds}\norm{\psi_s}_2^2=2\,\mathrm{Re}\ip{\psi_s}{(\mathcal A_s-\mathcal D_s)\psi_s}=-2\ip{\psi_s}{\mathcal D_s\psi_s}$, the skew-adjoint part contributing an imaginary number, the quantity $d=1-\norm{\psi_t}_2^2$ satisfies
$d=2\int_0^t\norm{\mathcal D_s^{1/2}\psi_s}_2^2ds$, where $\mathcal D_s^{1/2}$ is the positive square root and $\norm{\mathcal D_s^{1/2}\psi}_2^2=\ip\psi{\mathcal D_s\psi}$.
Put $c=\ip h{\psi_t}\in\mathbb R$, $y=\B/v$ and $\eta=A/\sqrt v$;
the overlap is real because the propagator preserves Hermiticity.
Integrating the overlap gives
\begin{equation}
 1-c=\int_0^t\langle\mathcal A_s h,\psi_s\rangle_\tau\,ds
 +\int_0^t\langle\mathcal D_s^{1/2}h,
                       \mathcal D_s^{1/2}\psi_s\rangle_\tau\,ds.
\end{equation}
The first integral has absolute value at most $\eta$, because
$\|\psi_s\|_2\le1$. Cauchy--Schwarz in $L^2([0,t])$ bounds the second by
$\big(\int_0^t\ip h{\mathcal D_sh}ds\big)^{1/2}\big(\int_0^t\ip{\psi_s}{\mathcal D_s\psi_s}ds\big)^{1/2}=\sqrt{(2y)(d/2)}$; this is where the factor $\tfrac12$ in $\B(t)$ enters. Therefore
\begin{equation}
 1-c\le\eta+\sqrt{yd}
 \le\eta+\tfrac12\big((1-c)+y(1+c)\big),
\end{equation}
where $d\le1-c^2$ follows from
$|c|\le\|\psi_t\|_2$. Hence
$c\ge(1-y-2\eta)/(1+y)$.
Multiplying by $v$ and subtracting the endpoint error $E_\ell$ of Section~\ref{c-sec:setting} gives the first inequality in~\eqref{eq:continuous}; the second follows from $(1-y-2\eta)/(1+y)-(1-2y-2\eta)=2y(y+\eta)/(1+y)\ge0$.
For~\eqref{eq:profiles}, $\ip{H_L}{(\id-P_j)H_L}=\|(\id-P_j)H_L\|_2^2\le\delta_{r_j}^2$ and $\|\mathcal A_sH_L\|_2\le\sum_X\|[V_X,H_L]\|_2$ by Lemma~\ref{lem:spatial}.
\end{proof}

These estimates weaken exact product conservation in two ways. The reference
dynamics need only conserve $H_L$, not the full product. In particular,
preserving the normalized Gram matrix
$G_L^TG_L/\Tr(G_L^TG_L)$ (the matrix of inner products of the columns of $G_L$, rescaled to unit trace) suffices for the singular-direction charge,
even if $G_L$ changes, because $Z(G_L)$ is the top eigenprojector of $G_L^TG_L$ and is unchanged by rescaling. This also holds for the normalized-Gram charge
introduced below, but not in general for the fixed-image charge. Other
terms may violate this conservation, with their effect controlled by
\eqref{eq:profiles}. This is an approximate-charge statement, not protection
against arbitrary strong noise at the observed edge.

\subsection{Exact conservation with an arbitrary readout}
\label{c-sec:extensions}
We return to the general setting of Section~\ref{c-sec:setting}: an arbitrary Hermitian reference $H$, an arbitrary Hermitian readout $F$ with the parameters $v,w,a$ of~\eqref{c-eq:parameters}, and exact conservation. Consider the time-dependent generator
\begin{equation}
 \frac{dX}{dt}=-i[K(t),X]
 -\sum_j\gamma_j(t)(\id-P_j(t))X,
 \qquad [K(t),H]=0,
 \label{c-eq:generator}
\end{equation}
where $\gamma_j\ge0$, each $P_j$ satisfies the projection-channel
hypotheses, and the coefficients are bounded and piecewise continuous
on finite intervals. We set $\hbar=1$. Each dissipative term generates
a valid unital channel: for fixed $P$,
$e^{t\gamma(P-\id)}=e^{-\gamma t}\id+(1-e^{-\gamma t})P$.
This is a special case of quantum Markov evolution~\cite{GKS,Lindblad,Davies1979}.

\begin{corollary}[Continuous-time correlation bound]
\label{c-cor:continuous}
For the propagator $\Phi_t$ of~\eqref{c-eq:generator}, define
\begin{equation}
 \B(t)=\frac12\int_0^t\sum_j\gamma_j(s)
             \norm{(\id-P_j(s))H}_2^2\,ds.
 \label{c-eq:continuousbudget}
\end{equation}
Then $\ip F{\Phi_t(F)}\ge2a^2/[v+\B(t)]-w$.
\end{corollary}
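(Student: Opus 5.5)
The plan is to run the proof of Theorem~\ref{thm:overlapreset} in continuous time. The discrete proof used one structural fact: Lemma~\ref{c-lem:loss}. Once the continuous analogue
\[
 \left|a-\ip H{\psi_t}\right|^2\le\B(t)\bigl(w-\norm{\psi_t}_2^2\bigr),\qquad \psi_t=\Phi_t(F),
\]
is established, the geometric step (decomposing along $h=H/\sqrt v$, completing the square to the ellipsoid~\eqref{c-eq:ellipsoid}, then Cauchy--Schwarz) applies verbatim, including the case $\B(t)=0$. That step never used the discrete structure, only this inequality and $\norm{\psi_t}_2^2\le w$.

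\emph{Norm loss.} Write $\mathcal D_s=\sum_j\gamma_j(s)R_j(s)$ with $R_j=\id-P_j$. Each $R_j$ is a self-adjoint orthogonal projection on operator space, so $\mathcal D_s^*=\mathcal D_s\ge0$. Since $[K,H]=0$, the generator $\mathcal A_sX=-i[K(s),X]$ is skew-adjoint and satisfies $\mathcal A_sH=0$. Also $\mathcal A_s^*H=-\mathcal A_sH=0$. As in the proof of Theorem~\ref{thm:continuous}, we get
\[
 \frac{d}{ds}\norm{\psi_s}_2^2=-2\ip{\psi_s}{\mathcal D_s\psi_s},
\]
and hence
\[
 w-\norm{\psi_t}_2^2=2\int_0^t\sum_j\gamma_j\norm{R_j\psi_s}_2^2\,ds.
\]

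\emph{Overlap change.} Differentiate the overlap:
\[
 \frac d{ds}\ip H{\psi_s}=\ip{\mathcal A_s^*H}{\psi_s}-\ip H{\mathcal D_s\psi_s}=-\sum_j\gamma_j\ip{R_jH}{R_j\psi_s}.
\]
Here exact conservation kills the Hamiltonian term. This is the point of difference from Theorem~\ref{thm:continuous}, where $A(t)$ appeared. Integrating and applying Cauchy--Schwarz in $L^2([0,t]\times\{j\})$ with weights $\gamma_j$ gives
\[
 |a-\ip H{\psi_t}|\le\Bigl(\int_0^t\sum_j\gamma_j\norm{R_jH}_2^2\Bigr)^{1/2}\Bigl(\int_0^t\sum_j\gamma_j\norm{R_j\psi_s}_2^2\Bigr)^{1/2}=\sqrt{2\B(t)}\,\sqrt{\tfrac12\bigl(w-\norm{\psi_t}_2^2\bigr)}.
\]
This is exactly the needed inequality, and it explains the factor $\tfrac12$ in~\eqref{c-eq:continuousbudget}. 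It is consistent with the discrete weight $p/(2-p)\approx p/2$ for small $p=\gamma\,ds$.

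\emph{Main obstacle.} The only delicate point is regularity. With bounded, piecewise continuous coefficients, the propagator exists, is differentiable in $s$ except at finitely many points, and is bistochastic. So the fundamental theorem of calculus applies on each subinterval, and the pieces concatenate.

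An alternative route, if one prefers to avoid differentiation, is a limit argument. Approximate $\Phi_t$ by a Trotter product of conserving unitary channels and partial resets $Q_j$ with $p_j=1-e^{-\gamma_j\Delta}$. Then apply Theorem~\ref{thm:overlapreset} directly and note that $\sum p_j/(2-p_j)\,b_j^2\to\B(t)$. This needs product-formula convergence for time-dependent, noncommuting generators, which is why I would use the direct differential argument above.

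Reality of all overlaps follows because the propagator preserves Hermiticity.
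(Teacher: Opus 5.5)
Your proposal is correct and follows essentially the same route as the paper's proof. It uses the same norm-loss identity $\frac{d}{ds}\norm{\psi_s}_2^2=-2\sum_j\gamma_j\norm{R_j\psi_s}_2^2$, the same integrated overlap formula with the Hamiltonian term removed by exact conservation, weighted Cauchy--Schwarz over time and $j$, and then the unchanged geometric step of Theorem~\ref{thm:overlapreset} (one small wording point: skew-adjointness of $\mathcal A_s$ comes from $K(s)$ being Hermitian, while $[K,H]=0$ is what gives $\mathcal A_sH=0$).
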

\begin{proof}
Set $X(s)=\Phi_s(F)$ and $R_j(s)=\id-P_j(s)$.
The Hamiltonian part preserves the norm and the overlap with $H$, while
\begin{align}
 \frac d{ds}\norm{X(s)}_2^2
 &=-2\sum_j\gamma_j(s)\norm{R_j(s)X(s)}_2^2,\\
 a-\ip H{X(t)}
 &=\int_0^t\sum_j\gamma_j(s)
       \ip{R_j(s)H}{R_j(s)X(s)}\,ds.
\end{align}
Cauchy--Schwarz over time and the index $j$ gives
\begin{equation}
 |a-\ip H{X(t)}|^2
 \le\B(t)\,[w-\norm{X(t)}_2^2].
\end{equation}
The geometric part of the proof of Theorem~\ref{thm:overlapreset} now applies
unchanged. The factor $1/2$ in~\eqref{c-eq:continuousbudget} follows
from the factor $2$ in the norm-loss identity; it also agrees with
$p/(2-p)=p/2+O(p^2)$ for infinitesimal partial resets.
\end{proof}

With $F=H$ and $\gamma_j=\varepsilon_j$, the corollary reproduces Theorem~\ref{thm:continuous} with $A=0$; it adds an arbitrary readout at the price of exact conservation.

\subsection{A bound for an imperfect implementation}
The preceding statements concern ideal channels. A separate estimate
is needed to transfer them to a device with imperfect conservation,
nonunital noise, preparation errors, or readout errors.
Let $\widetilde\Phi$ be an implemented channel and suppose
\begin{equation}
 \norm{\widetilde\Phi-\Phi}_\diamond\le\varepsilon.
 \label{c-eq:diamond}
\end{equation}
We use the diamond norm without the conventional extra factor $1/2$;
$\norm\cdot_{\mathrm{tr}}$ below is the unnormalized trace norm.
For a linear map $\Delta$, the diamond norm is the largest induced
trace-norm change after adjoining a reference system:
$\|\Delta\|_\diamond=\sup_{X\ne0}
\|(\Delta\otimes\id_N)(X)\|_{\mathrm{tr}}/\|X\|_{\mathrm{tr}}$.
The reference may be chosen to have dimension $N$.
By the trace duality inequality and the induced trace-norm bound
\cite{Watrous},
\begin{align}
 |\ip F{(\widetilde\Phi-\Phi)(F)}|
 &\le\frac{\norm F_\infty}{N}
           \norm{(\widetilde\Phi-\Phi)(F)}_{\mathrm{tr}}\nonumber\\
 &\le\varepsilon\,
           \frac{\norm F_\infty\norm F_{\mathrm{tr}}}{N}
 \le\varepsilon\norm F_\infty^2.
 \label{c-eq:robust}
\end{align}
Consequently,
\begin{equation}
 \ip F{\widetilde\Phi(F)}\ge
 \frac{2a^2}{v+\B}-w-\varepsilon\norm F_\infty^2.
 \label{c-eq:hardwarebound}
\end{equation}
For a circuit whose ideal and actual factors are $\Phi_k$ and
$\widetilde\Phi_k$, respectively, telescoping the two compositions
gives $\varepsilon\le\sum_k\varepsilon_k$ if
$\norm{\widetilde\Phi_k-\Phi_k}_\diamond\le\varepsilon_k$;
the other channel factors have diamond norm one. This bound can be
conservative, but does not assume that noise only decreases a signal.

For the two-qubit experiment of Section~\ref{c-sec:twoqubit}, whose readout is $Z=I_A\otimes\sigma_z$ and whose score is the measured sign, suppose the mean trace-norm
preparation error is at most $\varepsilon_{\rm prep}$ and the
implemented readout observable differs from the ideal sign by at most
$\varepsilon_{\rm meas}$ in operator norm. Trace-norm contractivity
and $\norm Z_\infty=1$ bound the additional change in the mean score
by $\varepsilon_{\rm prep}+\varepsilon_{\rm meas}$. Thus certified
channel and endpoint errors would give an explicit correction to the
ideal prediction. The experiment reported here does not obtain such
certificates. In particular, a measured echo error is not a bound on
the diamond distance of the reset circuit.

\section{The elementary-matrix family and an explicit signal bound}
Consider
\begin{equation}\label{eq:alphabet}
 \A_m=\{I\}\cup\{I\pm E_{ij}:1\le i\ne j\le m\},
 \qquad D=1+2m(m-1).
\end{equation}
Here $E_{ij}$ is the matrix unit with a single $1$ in entry $(i,j)$, so $E_{ij}^2=0$, $(I\pm E_{ij})^{-1}=I\mp E_{ij}$ and $(I\pm E_{ij})^T=I\pm E_{ji}$; the $2m(m-1)$ nonidentity letters (twelve when $m=3$) $I\pm E_{ij}$ are called transvections or elementary shears. The alphabet is therefore inverse and transpose closed, and it generates
$\SL(m,\mathbb Z)$. Strong irreducibility follows directly. Because the alphabet is inverse closed, the semigroup is the group $\SL(m,\mathbb Z)$. A finite
invariant union of proper subspaces is permuted by the group, so each member has a finite-index
stabilizer, and by pigeonhole some positive power $I+nE_{ij}=(I+E_{ij})^n$ lies in the stabilizer of a chosen
nonzero member $V$ for every $i\ne j$. Then $E_{ij}V\subseteq V$; picking $x\in V$ with $x_j\ne0$ gives $e_i\in V$ for all $i\ne j$, hence $V$ is the whole space, a contradiction. Proximality is witnessed by
\begin{equation}
 (I+E_{12})(I+E_{21})=
 \begin{pmatrix}2&1\\1&1\end{pmatrix}\oplus I_{m-2},
\end{equation}
with eigenvalues $(3\pm\sqrt5)/2$ and, when $m>2$, $1$ of multiplicity
$m-2$. Transpose closure verifies the second set of hypotheses.

\begin{proposition}[Explicit stationary variance]\label{prop:variance}
For~\eqref{eq:alphabet}, every diagonal coordinate of the limiting
projector law has variance
\begin{equation}\label{eq:v0}
 v_*\ge v_{\min}:=\frac{1}{4m^2(D+8)}.
\end{equation}
In particular $v_{\min}=1/756$ for $m=3$.
\end{proposition}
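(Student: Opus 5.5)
Our plan is to exploit stationarity of the limiting projector law $\eta$, i.e.\ the law of $[\Lambda]$ under the transpose walk, together with the explicit shear structure of the letters. Fix a coordinate $j$ and write $X=Z_{jj}=\Lambda_j^2$ for a unit representative $\Lambda$ of the random line, so that $v_*=\Var X$. Stationarity gives $\Lambda\overset{d}{=}[M\Lambda]$ with $M$ uniform on $\A_m^T=\A_m$, independent of $\Lambda$. We will compare $X$ with its value after one step, $X'=(M\Lambda)_j^2/\|M\Lambda\|^2$. Since $X'$ has the same law as $X$,
\begin{equation}
 \E(X'-X)^2\le 2\Var X+2\Var X'=4v_* .
\end{equation}
We therefore aim to bound $\E(X'-X)^2$ from below by a multiple of $1/(m^2(D+8))$. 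This would give $v_*\ge v_{\min}$ after adjusting constants.

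The key step is a lower bound on the one-step jump. Take the letter $M=I\pm E_{ji}$, which changes only coordinate $j$: $(M\Lambda)_j=\Lambda_j\pm\Lambda_i$. Averaging over the two signs, the numerator is $\Lambda_j^2+\Lambda_i^2$ on average, and the norm becomes $1\pm 2\Lambda_i\Lambda_j+\Lambda_i^2$. We then pick $i\ne j$ with $\Lambda_i^2$ not too small. Since $\sum_i\Lambda_i^2=1$, either $\Lambda_j^2\ge 1/2$, or some $i\ne j$ has $\Lambda_i^2\ge 1/(2(m-1))$. In the second case we will show by direct algebra that
\begin{equation}
 \tfrac12\sum_\pm\bigl(X'_\pm-X\bigr)^2\ge c\,\Lambda_i^4
\end{equation}
for an absolute constant $c$, which gives order $1/m^2$. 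The first case is handled by the letters $I\pm E_{ij}$, which move weight out of coordinate $j$ symmetrically. Each such letter has probability $1/D$. Summing the two sign choices and the relevant indices then gives $\E(X'-X)^2\gtrsim 1/(m^2 D)$.

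The main obstacle is making these constants come out exactly as $1/(4m^2(D+8))$, rather than $1/(Cm^2D)$. The term $D+8$ suggests a cleaner route than the two-case split: apply a Cauchy--Schwarz or Rayleigh-quotient argument to the averaging operator $Pf(\lambda)=\E f([M\lambda])$ acting on $f(\lambda)=\lambda_j^2-\E X$. If $v_*=0$ then $X$ is constant $\mu$ a.s., and by stationarity $\mu$ is fixed for every letter applied to $\eta$-a.e.\ line. We would first turn this rigidity argument (already used qualitatively in Theorem~\ref{thm:local}) into a quantitative one. We would write $\E[(M\Lambda)_j^2]$ exactly as a quadratic form in $\Lambda$: the letters give $\Lambda_j^2+\frac{2}{D}\sum_{i\ne j}\Lambda_i^2$ before normalization. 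We would then control the normalization, whose squared norm lies between $(1-|\Lambda_i|)^2$-type bounds, by bounding $\|M\Lambda\|^2\le 1+2|\Lambda_i\Lambda_j|+\Lambda_i^2\le 3$; this is where the additive $8$ would enter. We would commit to this stationarity-identity route and fall back on the two-case jump bound if the constants do not match.
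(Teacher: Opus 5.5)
\textbf{There is a genuine gap.} The proposition asserts the specific constant $1/(4m^2(D+8))$, and the value $1/756$ is used downstream, so ``after adjusting constants'' is not available. Neither of your routes delivers that constant.

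Your two-case jump argument is sound as far as it goes. Using $\mathrm{den}_\pm=1\pm2\Lambda_i\Lambda_j+\Lambda_i^2\le3$, you get $\tfrac12\sum_\pm(X'_\pm-X)^2\ge(1-\Lambda_j^2)^2\Lambda_i^4/9\ge\Lambda_i^4/36$. Hence $\E(X'-X)^2\ge1/(72D(m-1)^2)$ and $v_*\ge1/(288D(m-1)^2)$. For $m=3$ this is $1/14976$, about twenty times smaller than $1/756$. The losses come from two places: comparing $X'$ with $X$ rather than with the mean (the factor $4$ in $\E(X'-X)^2\le4v_*$), and keeping one pair of letters with crude denominators.

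The Rayleigh-quotient alternative is not yet an argument. The identity $\E(M\Lambda)_j^2=\Lambda_j^2+\frac2D\sum_{i\ne j}\Lambda_i^2$ concerns the unnormalized vector, and you give no mechanism for passing to the normalized $X'$ with an explicit constant. Your guess that the additive $8$ enters through $\|M\Lambda\|^2\le3$ is also not where it comes from.

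\textbf{The missing idea} is to measure the one-step deviation from the \emph{known mean} rather than from $X$, and to close a self-consistent inequality.

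First, the stationary law is unique, and conjugation by coordinate permutations preserves the alphabet. Together these make $(q_1,\dots,q_m)$, $q_i=\Lambda_i^2$, exchangeable, so $\E q_i=1/m$ and $\Var q_i=v_*$ for every $i$.

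Second, stationarity gives the exact identity $v_*=\frac1D\sum_a\E\bigl(q_1(a\cdot\Lambda)-\frac1m\bigr)^2$, with no factor $4$ lost. You may then keep a single good letter and drop the other nonnegative terms.

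Third, consider the event $\Omega=\{|q_1-\frac1m|\le\frac1{4m},\ |q_2-\frac1m|\le\frac1{4m}\}$. On $\Omega$, one of $I\pm E_{12}$ sends $x_1$ to a number of magnitude $|\sqrt{q_1}-\sqrt{q_2}|\le\frac{1/(2m)}{\sqrt{3/m}}$. After normalization, $q_1'\le\frac{1/(12m)}{1-5/(4m)}\le\frac1{2m}$, so $(q_1'-\frac1m)^2\ge\frac1{4m^2}$.

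Fourth, Chebyshev applied to two coordinates gives $\Prob(\Omega)\ge1-32m^2v_*$. Therefore $v_*\ge(1-32m^2v_*)/(4Dm^2)$, which rearranges exactly to $v_*\ge1/(4m^2(D+8))$. The $+8$ is the Chebyshev cost $32m^2/(4m^2)$.

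Your jump argument does have one merit: it is pointwise and needs neither uniqueness nor permutation symmetry, so it validly proves $v_*>0$ with an explicit but smaller constant. To reach the stated constant along that route, you would need a sharp pointwise lower bound on $\sum_a(X'_a-X)^2$ over the sphere. The proposal does not contain that.
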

\begin{proof}
The stationary projective measure is unique, because by~\eqref{eq:common} the law of $M_1\cdots M_nx$ converges to that of $\Lambda$ for every input, stationary or not; and conjugation by a coordinate permutation maps the alphabet to itself and preserves the uniform letter law. Together these make its projector law
permutation invariant, so $(q_1,\ldots,q_m)$ is exchangeable with sum one. Let $q_i=x_i^2$ for a unit representative $x$.
Then $\E q_i=1/m$ and $\Var q_i=v$ for all $i$.
On
\begin{equation}
 \Omega=\{|q_1-1/m|\le1/(4m),\ |q_2-1/m|\le1/(4m)\},
\end{equation}
one of $I+E_{12}$ and $I-E_{12}$, which map $x$ to $x\pm x_2e_1$ and change only the first coordinate, changes $x_1$ to a number of magnitude
$||x_1|-|x_2||=|\sqrt{q_1}-\sqrt{q_2}|$. Its square is at most $1/(12m)$,
because $|q_1-q_2|\le1/(2m)$ and
$\sqrt{q_1}+\sqrt{q_2}\ge\sqrt{3/m}$.
After normalization the new first squared coordinate is at most
\begin{equation}
 \frac{1/(12m)}{1-5/(4m)}
 =\frac{1}{3(4m-5)}\le\frac{1}{2m}.
\end{equation}
Chebyshev's inequality gives $\Prob(\Omega)\ge1-32m^2v$.
The stationarity equation for the first-coordinate variance reads $v=D^{-1}\sum_{a\in\A_m}\E\,(q_1(a\cdot x)-1/m)^2$ with $a\cdot x=ax/\|ax\|$ (stationarity of Section~\ref{sec:localization} applied to the function $(q_1-1/m)^2$). On $\Omega$, at
least one of those two letters gives $q_1'\le1/(2m)$, hence $|q_1'-1/m|\ge1/(2m)$ and a squared deviation of at least
$1/(4m^2)$. All other terms are nonnegative. Therefore
\begin{equation}
 v\ge\frac{1-32m^2v}{4Dm^2},
\end{equation}
which rearranges to~\eqref{eq:v0}. Both letters occur in the transpose
law, so the stationarity equation is the required one.
\end{proof}

This numerical signal bound does not assign numerical values to
$\kappa$ or $C_{\rm loc}$ for the singular-direction charge, the charge $H_L$ of~\eqref{eq:charges} (so named in Table~\ref{tab:charges}). Those depend on the random-product estimates.
The alternative charge of Appendices~\ref{sec:explicit}--\ref{sec:sharper}
has explicit constants and a stronger variance bound when $m=3$.
Proposition~\ref{prop:variance} does, however, give an explicit sufficient readout size in terms of them:
\begin{equation}\label{eq:ell0}
 \ell_0=\max\left\{1,
 \left\lceil\kappa^{-1}\log\frac{8C_{\rm loc}}{v_{\min}}\right\rceil\right\}.
\end{equation}
For $L\ge\ell\ge\ell_0$, $d_{L,\ell}\le\delta_\ell\le v_{\min}/8$. Taking $r=L$ in~\eqref{eq:varfinite} gives $|\sqrt{v_L}-\sqrt{v_*}|\le v_{\min}/8$, and $\sqrt{v_L}+\sqrt{v_*}\le1$ then gives $|v_L-v_*|\le v_{\min}/8$, so $v_L\ge7v_{\min}/8\ge3v_{\min}/4$.

Below, $P$ denotes the complete reset~\eqref{c-eq:reset} of the first $L-r$ sites, $t$ counts rounds, and $C_\ell(t)$ is the autocorrelation defined before Theorem~\ref{thm:linear}, with $\Phi_t$ the composition of the first $t$ rounds.

\begin{corollary}[Macroscopic noisy prefix]\label{cor:prefix}
Take $L\ge\ell\ge\ell_0$. Between arbitrary $H_L$-preserving
bistochastic channels, apply one partial complete reset per round to the first $L-r$
sites using the channel $(1-p)\id+pP$, $p\in(0,1]$. There is no restriction on the
width $L-r$ except $1\le r<L$. Then
\begin{equation}\label{eq:bufferwindow}
 C_\ell(t)\ge v_{\min}/2\quad\hbox{for integer }t\ge0\hbox{ with}\quad
 t\le \frac{v_{\min}(2-p)}{16pC_{\rm loc}^2}e^{2\kappa r}.
\end{equation}
For $r\ge\ell$ the reset is disjoint from the readout block. Taking
$r$ proportional to $L$ allows an extensive noisy region while retaining
an exponential-in-$L$ lower window. An arbitrary bistochastic channel on
that prefix instead gives the weaker exponential-in-$r$ window from
Theorem~\ref{thm:linear}.
\end{corollary}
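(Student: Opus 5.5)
The plan is to apply Corollary~\ref{thm:reset} to the composition $\Phi_t$ of the first $t$ rounds, and to bound the three quantities $v_L$, $\B$ and $E_\ell$ separately using the consequences of Proposition~\ref{prop:variance} recorded after~\eqref{eq:ell0}.

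\textbf{Hypotheses.} The factors of $\Phi_t$ are the prescribed $H_L$-preserving bistochastic channels and $t$ partial resets $(1-p)\id+pP$. Here $P$ is the complete randomization~\eqref{c-eq:reset} of the first $L-r$ sites. It is a bistochastic orthogonal projection on operator space: $P^2=P$, and self-adjointness follows from the partial-trace formula. So the hypotheses of Section~\ref{c-sec:setting} hold with $H=H_L$.

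\textbf{Leakage budget.} Since $P$ has buffer $r\ge1$, Lemma~\ref{lem:spatial} gives $b_j=\norm{(\id-P)H_L}_2\le\delta_r\le C_{\rm loc}e^{-\kappa r}$ for each of the $t$ reset occurrences. Hence
\begin{equation*}
\B\le t\,\frac{p}{2-p}\,C_{\rm loc}^2e^{-2\kappa r},
\end{equation*}
which under the stated window $t\le v_{\min}(2-p)e^{2\kappa r}/(16pC_{\rm loc}^2)$ gives $\B\le v_{\min}/16$.

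\textbf{Signal and endpoint terms.} Because $\ell\ge\ell_0$, the remarks after~\eqref{eq:ell0} give $v_L\ge 7v_{\min}/8>0$, so Corollary~\ref{thm:reset} applies. They also give $E_\ell\le d_{L,\ell}\le\delta_\ell\le v_{\min}/8$. The second form of~\eqref{eq:reset} then yields
\begin{equation*}
C_\ell(t)\ge v_L-2\B-E_\ell\ge \tfrac78v_{\min}-\tfrac18v_{\min}-\tfrac18v_{\min}=\tfrac58v_{\min}\ge\tfrac12v_{\min}.
\end{equation*}
The case $t=0$ is covered, since then $\B=0$. The width $L-r$ enters only through the buffer $r$, which justifies the absence of any restriction on it.

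\textbf{Remaining assertions.} When $r\ge\ell$, the reset acts on sites $1,\dots,L-r$, which are disjoint from the readout block $L-\ell+1,\dots,L$. With $r=cL$ the window becomes $e^{2\kappa cL}$. For an arbitrary bistochastic channel $\mathcal V$ on the prefix in place of $P$, the squared-leakage structure is lost. Theorem~\ref{thm:linear} still gives $e_j\le2p\delta_r$ per round, so by~\eqref{eq:finitewindow} the window is of order $e^{\kappa r}$ rather than $e^{2\kappa r}$.

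\textbf{Main obstacle.} The only delicate step is checking that the constants close, i.e.\ that $v_L$ is pinned near $v_*$ uniformly in $L\ge\ell_0$. This relies on~\eqref{eq:varfinite} with $r=L$ and on the choice of $\ell_0$, both already established. The arithmetic leaves slack: $2\B\le v_{\min}/8$, while $v_{\min}/4$ would suffice.
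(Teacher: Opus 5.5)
Your proof is correct and follows the paper's own argument: you bound $\B\le t\,p\,\delta_r^2/(2-p)\le v_{\min}/16$ using Lemma~\ref{lem:spatial}, and then apply the second form of~\eqref{eq:reset} with $E_\ell\le v_{\min}/8$. The only difference is that you use $v_L\ge 7v_{\min}/8$, which lands you at $5v_{\min}/8$, whereas the paper uses the weaker $v_L\ge 3v_{\min}/4$ and lands exactly at $v_{\min}/2$.
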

\begin{proof}
The budget $\B$ of~\eqref{c-eq:parameters} is at most
$tpC_{\rm loc}^2e^{-2\kappa r}/(2-p)$, since each of the $t$ resets has $b_j\le\delta_r$ by Lemma~\ref{lem:spatial}.
Under~\eqref{eq:bufferwindow} it is at most $v_{\min}/16$.
Use~\eqref{eq:reset} with $v_L\ge3v_{\min}/4$ and $E_\ell\le d_{L,\ell}\le v_{\min}/8$: $3/4-2/16-1/8=1/2$.
\end{proof}

The preparation $\rho_{0,L}=D^{-L}(I+\zeta F_\ell)$, $0<\zeta\le1$,
is positive and differs from infinite temperature only in its edge block.
Its signal is $\Tr(F_\ell\rho_t)=\zeta C_\ell(t)$, by unitality and $\Tr F_\ell=0$. Let $\rho_{t,\mathrm{edge}}$ be the reduced state of the last $\ell$ sites and write $F_\ell=I\otimes f_\ell$ with $\|f_\ell\|\le1$. Throughout
\eqref{eq:bufferwindow}, trace-norm duality, $\zeta C_\ell(t)=\Tr(f_\ell(\rho_{t,\mathrm{edge}}-I/D^\ell))\le\|\rho_{t,\mathrm{edge}}-I/D^\ell\|_{\mathrm{tr}}$, yields
\begin{equation}
 \tfrac12\norm{\rho_{t,\mathrm{edge}}-I/D^\ell}_{\mathrm{tr}}
 \ge\zeta v_{\min}/4.
\end{equation}
The diagonal charge stores a classical signal in quantum dynamics;
no protection of an arbitrary unknown qubit is asserted.

\section{Noise-limited lifetime and a sharp scaling obstruction}
Take the alphabet~\eqref{eq:alphabet}, so that $v_{\min}$ and $\ell_0$ of Proposition~\ref{prop:variance} and~\eqref{eq:ell0} apply, an $H_L$-preserving Hamiltonian, a prefix reset of rate $\Gamma$
with buffer $r$, and independent single-site reset noise of rate
$\varepsilon$ on every site; a reset of rate $\gamma$ is the generator $\gamma(P-\id)$ of Section~\ref{c-sec:extensions}, and single-site reset noise is complete depolarization $P_k$ of site $k$. The dissipator is $\mathcal D=\Gamma(\id-P_{\rm pre})+\varepsilon\sum_{k=1}^L(\id-P_k)$. The reset at the rightmost site
has buffer zero; successive sites have buffers $1,2,\ldots,L-1$.
Using $\ip{H_L}{(\id-P)H_L}=\|(\id-P)H_L\|_2^2\le\delta_{\rm buffer}^2$ (Lemma~\ref{lem:spatial}), the bound $\|(\id-P_L)H_L\|_2^2\le v_L\le1/4$ for the rightmost-site reset and the geometric series $\sum_{k\ge1}C_{\rm loc}^2e^{-2\kappa k}$, the dissipative quadratic form is bounded, uniformly in $L$, by
\begin{equation}
 \ip{H_L}{\mathcal D H_L}\le
 \Gamma C_{\rm loc}^2e^{-2\kappa r}+\varepsilon\Xi,
 \qquad
 \Xi=\frac14+\frac{C_{\rm loc}^2e^{-2\kappa}}{1-e^{-2\kappa}}.
\end{equation}
With $A(t)=0$ and $\B(t)=\tfrac t2\ip{H_L}{\mathcal DH_L}$, and with $v_L\ge3v_{\min}/4$ and $E_\ell\le v_{\min}/8$ for $L\ge\ell\ge\ell_0$ (the discussion following~\eqref{eq:ell0}), the second form of~\eqref{eq:continuous} therefore proves
\begin{equation}\label{eq:crossover}
 C_\ell(t)\ge v_{\min}/2\quad\hbox{for}\quad
 0\le t\le
 \frac{v_{\min}}{8\left(\Gamma C_{\rm loc}^2e^{-2\kappa r}
                       +\varepsilon\Xi\right)},
\end{equation}
provided $L\ge\ell\ge\ell_0$.
This is a sufficient lower window with constants independent of $L$.
The noise contribution is $O(\varepsilon)$, rather than the
$O(L\varepsilon)$ obtained by ignoring charge localization and bounding each site term by $\|H_L\|_2^2\le1/4$. The window crosses from buffer-limited to noise-limited at $\varepsilon\approx\Gamma C_{\rm loc}^2e^{-2\kappa r}/\Xi$.
It crosses from an exponential buffer-limited window to an inverse-noise
window. Spatially nonuniform rates are treated by the actual weighted sum
$\sum_j\varepsilon_j\delta_{r_j}^2$.

A bistochastic Markov generator is one whose semigroup $e^{s\mathcal G}$, $s\ge0$, consists of bistochastic channels; the only property used is that it does not increase $\|\cdot\|_2$.

\begin{proposition}[Uniform depolarization obstruction]\label{prop:upper}
Let the continuous generator contain independent single-site complete
depolarization of rate $\varepsilon>0$ at every site. Allow any
time-dependent Hamiltonian and additional bistochastic Markov generators.
For every centered Hermitian $F$,
\begin{equation}\label{eq:upper}
 |\tau_L(F\Phi_t(F))|\le e^{-\varepsilon t}\norm F_2^2.
\end{equation}
\end{proposition}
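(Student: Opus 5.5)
The plan is to prove an exponential decay of the Hilbert--Schmidt norm of the evolved operator, $\norm{\Phi_t(F)}_2\le e^{-\varepsilon t}\norm F_2$ for every centered $F$, and then conclude~\eqref{eq:upper} from Cauchy--Schwarz. Since $F$ is Hermitian, $\tau_L(F\Phi_t(F))=\ip F{\Phi_t(F)}$, so
\begin{equation*}
|\tau_L(F\Phi_t(F))|\le\norm F_2\,\norm{\Phi_t(F)}_2 .
\end{equation*}
Two preliminary facts are needed. First, $\Phi_t$ is trace preserving, so $X(s)=\Phi_s(F)$ stays centered: $\tau_L(X(s))=\tau_L(F)=0$ for all $s$. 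Second, $X(s)$ solves the time-dependent equation
\begin{equation*}
\dot X=-i[K_s,X]-\varepsilon\mathcal D X+\mathcal G_sX,\qquad \mathcal D=\sum_{k=1}^L(\id-P_k),
\end{equation*}
where $\mathcal G_s$ collects the additional bistochastic Markov generators, all with piecewise continuous coefficients as elsewhere in the paper.

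Next I differentiate the squared norm:
\begin{equation*}
\frac d{ds}\norm{X}_2^2=2\,\mathrm{Re}\ip X{\dot X},
\end{equation*}
and handle the three pieces separately.
\begin{itemize}
\item The Hamiltonian part contributes zero, because $X\mapsto-i[K_s,X]$ is skew-adjoint for $\ip\cdot\cdot$, as already used in the proof of Theorem~\ref{thm:continuous}.
\item For each additional generator, contractivity of its semigroup gives $\norm{e^{u\mathcal G_s}X}_2^2\le\norm X_2^2$ for $u\ge0$, with equality at $u=0$. Differentiating at $u=0^+$ gives $\mathrm{Re}\ip X{\mathcal G_sX}\le0$, so this piece can only help. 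This is exactly the single property the statement says is used.
\item The depolarizing part contributes $-2\varepsilon\ip X{\mathcal DX}$, since $\mathcal D$ is self-adjoint, each $P_k$ being an orthogonal projection.
\end{itemize}

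The key step, and the only non-routine one, is a spectral gap for $\mathcal D$ on centered operators:
\begin{equation*}
\ip X{\mathcal DX}\ge\norm X_2^2\qquad\text{whenever }\tau_L(X)=0 .
\end{equation*}
To prove it, choose an orthonormal basis of $M_D(\mathbb C)$ for the normalized inner product that contains $I$, with the remaining elements traceless (for example, normalized Weyl or Gell-Mann matrices). Tensor products of these elements form an orthonormal basis of operators on $\mathcal H_L$.
\begin{itemize}
\item The channel $P_k(X)=\frac{I_k}{D}\otimes\Tr_kX$ fixes a basis tensor whose $k$-th factor is $I$, and annihilates it otherwise, because that factor is then traceless.
\item Hence $\mathcal D$ is diagonal in this basis, with eigenvalue equal to the weight, meaning the number of sites carrying a non-identity factor.
\item A centered $X$ has zero component along $I^{\otimes L}$, so every surviving basis element has weight at least one, which gives the bound.
\end{itemize}
I expect this step to need the most care in the write-up. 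The rest of the argument is the same energy estimate as in Theorem~\ref{thm:continuous}.

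Combining the three contributions with the spectral gap yields
\begin{equation*}
\frac d{ds}\norm{X(s)}_2^2\le-2\varepsilon\norm{X(s)}_2^2 .
\end{equation*}
Grönwall's inequality, applied on each interval where the coefficients are continuous and glued across breakpoints by continuity of $X$, then gives $\norm{\Phi_t(F)}_2\le e^{-\varepsilon t}\norm F_2$. Inserting this into the Cauchy--Schwarz bound above proves~\eqref{eq:upper}.
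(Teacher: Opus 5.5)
Your proposal is correct and follows the paper's proof step for step. The same tensor-product operator basis containing $I$ gives $\sum_k\norm{(\id-P_k)X}_2^2=\ip X{\mathcal D X}\ge\norm X_2^2$ on traceless operators, and you combine it with the vanishing Hamiltonian contribution, the nonpositive contribution of the extra bistochastic generators, preservation of tracelessness, Gr\"onwall and Cauchy--Schwarz exactly as the paper does; your derivation of $\mathrm{Re}\ip X{\mathcal G_sX}\le0$ by differentiating contractivity at $u=0^+$ simply spells out a step the paper states without detail.
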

\begin{proof}
Take a $\tau$-orthonormal single-site operator basis containing $I$ and form the tensor-product strings $\sigma_{a_1}\otimes\cdots\otimes\sigma_{a_L}$, an orthonormal basis of operators. $P_j$ fixes a string whose $j$-th factor is $I$ and annihilates the others, so $\id-P_j$ is the orthogonal projection onto strings with a nonidentity factor at site $j$. Each traceless string
has at least one nonidentity factor. Therefore, by Parseval, on the traceless subspace
\begin{equation}
 \sum_{j=1}^L\norm{(\id-P_j)X}_2^2\ge\norm X_2^2.
\end{equation}
The trace-preserving propagator keeps $\Phi_t(F)$ traceless, so this applies at every time.
The Hamiltonian contributes zero to the derivative of $\norm X_2^2$.
Each additional bistochastic generator contributes a nonpositive amount.
Thus $d\norm{\Phi_t(F)}_2^2/dt\le
-2\varepsilon\norm{\Phi_t(F)}_2^2$. Equivalently,
$e^{2\varepsilon t}\|\Phi_t(F)\|_2^2$ is nonincreasing, whence
$\|\Phi_t(F)\|_2\le e^{-\varepsilon t}\|F\|_2$.
Cauchy--Schwarz proves~\eqref{eq:upper}.
\end{proof}

Equality in~\eqref{eq:upper} holds for a traceless one-site observable with
only uniform depolarization and no Hamiltonian: then $\sum_j(P_j-\id)F=-F$, so $\Phi_t(F)=e^{-\varepsilon t}F$. A fixed fraction surviving
to time $e^{cL}$ requires $\varepsilon=O(e^{-cL})$.
This obstruction rules out a constant-fraction memory plateau lasting
exponentially in $L$ at fixed homogeneous depolarization rate within this
model. It does not address active error correction with fresh ancillas or
other dynamics outside the bistochastic assumptions. Coherent errors are
instead controlled by $A(t)$ in~\eqref{eq:profiles}; no matching general
upper bound is claimed for them.

\section{Explicit constants and the distinction between charges}
\label{sec:explicit-summary}
For $m=3$, the alphabet~\eqref{eq:alphabet} consists of thirteen symbols.
Each physical site therefore has thirteen basis states; the $3\times3$
matrices are labels of those states. There are three useful choices of
scalar charge, all centered by subtracting their uniform-word means.
They must be distinguished because their conservation requirements and
certified constants differ.

\begin{table}[htb]
\centering\small
\begin{tabular}{@{}p{0.24\linewidth}p{0.36\linewidth}p{0.31\linewidth}@{}}
\toprule
Charge & Uncentered value & Certified localization and signal\\
\midrule
Singular-direction & $Z(G)_{jj}=v_j^2$, with $v$ the unit right singular vector of the largest singular value of $G$ (Section~\ref{sec:localization})
  & $C_{\rm loc}e^{-\kappa r}$; $v_*\ge1/19$ for $j=1$ (Appendix~\ref{sec:sharper})\\[4pt]
Fixed-image & $G_{11}^2/(G_{11}^2+G_{12}^2+G_{13}^2)$
  & $e^{-r/1400}$; $v_*\ge1/19$\\[4pt]
Normalized-Gram & $\|Ge_1\|^2/\|G\|_F^2$
  & $3e^{-r/1400}$; $v_*\ge1/19$\\
\bottomrule
\end{tabular}
\caption{Bounds on $\|H_L-F_r\|_2$ and the limiting variance $v_*$ (the signal) for three
different charges; for each charge, $H_L$ denotes its centered full-chain operator and $F_r$ the same charge computed from the last $r$ sites. $C_{\rm loc}$ and $\kappa$ are the non-numerical constants of Theorem~\ref{thm:local}. The explicit localization constants in the last two rows and the variance bound $1/19$ in all three rows follow from exact polynomial certificates and rational estimates in Appendices~\ref{sec:explicit} and~\ref{sec:sharper}, with the common stationary law identified in Appendix~\ref{sec:explicit}; they are not fitted to trajectories.}
\label{tab:charges}
\end{table}

The fixed-image charge is the first squared coordinate of the normalized
vector $G^Te_1$. It avoids computing a singular vector and is rational on
integer matrices. Appendix~\ref{sec:explicit} first proves a weaker
$e^{-r/10000}$ bound, laying out the projective contraction calculation: the estimate showing that $r$ random letters pull two input directions together, in the sine of their angle, at an exponential rate.
Appendix~\ref{sec:sharper} strengthens the same calculation using the
orthogonality of two vectors and a sharper stationary-variance estimate.
The improvement is analytic and computer-assisted: every positivity
certificate (a polynomial proved nonnegative by rewriting it, after a change of variables, with all coefficients nonnegative) is checked with integer or rational arithmetic.

For the fixed-image charge, a suffix readout of at least
$\lceil700\log190\rceil=3673$ sites is sufficient for correlation
$C_\ell(t)\ge1/38$ whenever
\begin{equation}
 t\le\frac{2-p}{304p}e^{r/700},\qquad t\in\mathbb Z_{\ge0}.
 \label{eq:summarywindow}
\end{equation}
Here a round is one $\widetilde H_L$-preserving bistochastic channel together with one prefix reset $(1-p)\id+pP$, where $P$ replaces the first $L-r$ sites by the maximally mixed state and the $r$ untouched sites are the buffer; $t$ counts rounds, the readout is $\widetilde F_\ell$, $C_\ell(t)=\tau_L[\widetilde F_\ell\Phi_t(\widetilde F_\ell)]$, and $L\ge\ell$, $1\le r<L$ (Corollary~\ref{cor:sharperwindow}). The reset region and the readout block may overlap. The margins are $190=10/\nu$ and $304=16/\nu$ with $\nu=1/19$, and $1/38=\nu/2$.
For $p=1$, \eqref{eq:summarywindow} reads $t\le e^{r/700}/304$, so $t$ rounds are certified once $r\ge700\log(304t)$: one complete round by the buffer
$\lceil700\log304\rceil=4002$, a thousand rounds by
$\lceil700\log304000\rceil=8838$. These are upper bounds on the sizes
needed by this proof, not lower bounds on a device's necessary size.

The normalized-Gram charge is preserved whenever $G^TG/\Tr(G^TG)$ is
preserved, a weaker requirement than conserving $G$ itself: $G^TG$ is the Gram matrix of inner products of the columns of $G$, and $u_{\rm G}$ is the $(1,1)$ entry of its unit-trace rescaling. Left multiplication of $G$ by a rotation keeps $G^TG$ but changes the first row of $G$, which is why the fixed-image charge is not preserved in general. The normalized-Gram
localization proof and asymptotic window follow next. Direct finite-chain
averaging (exact enumeration of all $13^8$ words, aggregated by symmetry) then gives a much smaller example: for eight thirteen-state sites, the readout $F=PH_L^{\rm G}$ of~\eqref{eq:conditionalgram}, the charge averaged over the thirteen possible first letters and hence a function of the last seven sites, retains $C_F(2)>\frac59\|F\|_2^2$ after two complete first-site resets~\eqref{eq:eightsite}.
This does not use the large asymptotic readout bounds and has not been implemented on hardware.

\section{Normalized-Gram conservation and finite-chain certificates}\label{sec:gram}
The fixed-image charge of Table~\ref{tab:charges} selects the input direction $e_1$ before applying the
matrix product. Averaging the analogous quantities $G_{i1}^2/\|G^Te_i\|^2$ of the three rows of $G$ with the weights $\|G^Te_i\|^2/\|G\|_F^2$ gives a different
rational charge,
\begin{equation}\label{eq:gramcharge}
 u_{\rm G}(G)=\frac{\norm{Ge_1}^2}{\norm G_F^2}
 =\left(\frac{G^TG}{\Tr(G^TG)}\right)_{11}.
\end{equation}
Let $H_L^{\rm G}=\diag(u_{\rm G}(G_L)-\tfrac13)$ and $F_r^{\rm G}=\diag(u_{\rm G}(B_r)-\tfrac13)$, extended by the identity outside the last $r$ sites, be its centered full-word and
suffix versions; the mean of $u_{\rm G}$ is exactly $1/3$ at every length, by the permutation symmetry recalled below. Conservation of the normalized Gram matrix suffices
to preserve $H_L^{\rm G}$, even when the full product changes, because $u_{\rm G}$ is a function of $G^TG/\Tr(G^TG)$ alone.
For quantum dynamics this means that the channel fixes the corresponding
diagonal charge; unitaries block diagonal in normalized-Gram fibres are
one example. Conservation of only an ensemble average is insufficient.

\begin{proposition}[Explicit normalized-Gram localization]\label{prop:gramlocal}
For the uniform thirteen-letter alphabet and $L\ge r\ge1$,
\begin{equation}
 \norm{H_L^{\rm G}-F_r^{\rm G}}_2\le3e^{-r/1400},\qquad
 |\norm{F_n^{\rm G}}_2^2-v_*|\le6e^{-n/700},\qquad v_*\ge1/19.
\end{equation}
Apply one partial complete reset per round of strength $p\in(0,1]$ on the first $L-r$ sites, interleaved with the conserving channels specified below. Then $L\ge\ell\ge5211$ suffices for
\begin{equation}\label{eq:gramwindow}
 C_\ell^{\rm G}(t)\ge\frac1{38},\qquad
 0\le t\le\frac{2-p}{2736p}e^{r/700},\quad t\in\mathbb Z,\quad 1\le r<L.
\end{equation}
The readout is $F_\ell^{\rm G}$ and
$C_\ell^{\rm G}(t)=\tau_L[F_\ell^{\rm G}\Phi_t(F_\ell^{\rm G})]$,
where $\Phi_t$ contains $t$ reset factors interleaved, in any prescribed
order, with $H_L^{\rm G}$-preserving bistochastic channels.
\end{proposition}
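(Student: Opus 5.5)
The plan is to reduce the normalized-Gram charge to the fixed-image charge of Table~\ref{tab:charges}, whose explicit localization $e^{-r/1400}$ and variance $v_*\ge1/19$ are established in Appendices~\ref{sec:explicit}--\ref{sec:sharper}. Write $\|G\|_F^2=\sum_i\|G^Te_i\|^2$. Then
\[
u_{\rm G}(G)=\sum_i\omega_i(G)\,\phi_i(G),\qquad
\omega_i=\frac{\|G^Te_i\|^2}{\|G\|_F^2},\qquad
\phi_i=\frac{G_{i1}^2}{\|G^Te_i\|^2}.
\]
Each $\phi_i$ is a fixed-image charge with input direction $e_i$. By the coordinate-permutation symmetry of the alphabet, each $\phi_i(G_L)$ obeys the same $e^{-r/1400}$ localization bound as $\phi_1$, with the same stationary law.

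The weights $\omega_i$ are not localized, since they depend on the prefix. I would handle this with the contraction picture. The rows $G_L^Te_i=B_r^T(A^Te_i)$ are all images of vectors under $B_r^T$, so by the same projective-contraction estimate (Appendix~\ref{sec:explicit}) all three rows point, in mean square, near a common random line $\Lambda_r$ determined by the suffix. Hence each $\phi_i(G_L)$ is close to the first squared coordinate of $\Lambda_r$, and so is $u_{\rm G}(G_L)$, whatever the weights.

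Apply the same comparison to $u_{\rm G}(B_r)$, whose rows are $B_r^Te_i$. This gives $\|H_L^{\rm G}-F_r^{\rm G}\|_2\le3e^{-r/1400}$: the three summands, each with mean-square error $e^{-r/1400}$ by Minkowski, sum to the factor $3$. Convex weights do not enlarge the pointwise deviation, and centering by $1/3$ is a contraction.

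For the variance, both charges converge to the first squared coordinate of the common stationary line. Hence $v_*$ is the same $\ge1/19$ number as in Table~\ref{tab:charges}. The bound $|\|F_n^{\rm G}\|_2^2-v_*|\le6e^{-n/700}$ follows by letting $L\to\infty$ in $|\sqrt{v_L}-\sqrt{v_n}|\le3e^{-n/1400}$. For this I would take $r=n$ and use $\sqrt{v_n}+\sqrt{v_*}\le1$, which yields the bound with squared exponent $e^{-n/700}$ after squaring the difference appropriately. Here I expect to need a slightly sharper treatment, using $|v-v'|\le|\sqrt v-\sqrt{v'}|(\sqrt v+\sqrt{v'})$ together with the $L^2$ rate, to reach $6e^{-n/700}$ rather than $3e^{-n/1400}$. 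If that fails, I would re-derive it from the squared-distance estimate directly, as the appendix does for the fixed-image charge.

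The window is then Corollary~\ref{cor:prefix} redone with explicit constants. Lemma~\ref{lem:spatial} gives $b_j\le3e^{-r/1400}$, so $\B\le 9tpe^{-r/700}/(2-p)$. Choosing $\ell\ge\lceil1400\log(8\cdot3\cdot19)\rceil$-type thresholds, with $\ell\ge5211$ coming from $1400\log(3\cdot 10/\nu)$ up to rounding, makes $E_\ell\le\nu/8$ and $v_L\ge 3\nu/4$. Requiring $\B\le\nu/16$ gives $t\le(2-p)e^{r/700}/(144\cdot19\,p)=(2-p)e^{r/700}/(2736p)$, and \eqref{eq:reset} gives $C_\ell^{\rm G}(t)\ge\nu/2=1/38$.

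The main obstacle is the second step, controlling the prefix-dependent weights. The row vectors $A^Te_i$ are not independent uniform inputs, so one must check that the appendix's contraction estimate is uniform in deterministic inputs, as \eqref{eq:common} is. It should be, with conditioning on $A$ as in the proof of Theorem~\ref{thm:local}.
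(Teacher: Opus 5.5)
Your decomposition $u_{\rm G}=\sum_i\omega_i\phi_i$ and the conditioning on the prefix are the right ingredients. The uniformity you flag is not an obstacle: the one-letter certificate \eqref{eq:explicitq}, and its sharper analogue behind \eqref{eq:sharperraw}, are suprema over all pairs $[x]\ne[y]$, so conditioning on $A$ and iterating by the tower property handles any fixed pair of inputs. The gaps are in the explicit constants, which are the substance of this proposition. Write $c(x)=x_1^2/\norm x^2$.

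First, the detour through a common line $\Lambda_r$ does not produce the factor $3$. Your three-summand Minkowski count already spends the whole constant $3$ on comparing $u_{\rm G}(G_L)$ with $c(\Lambda_r)$, and the comparison of $u_{\rm G}(B_r)$ with $c(\Lambda_r)$ comes on top of that. Even a Jensen version with $\Lambda_r=[B_r^Te_1]$ gives only $(\sqrt3+\sqrt2)e^{-r/1400}$. The paper avoids the intermediate line. Since the weights sum to one, the difference is the double convex combination $\sum_{i,j}\omega_i(G_L)\omega_j(B_r)[c(B_r^TA^Te_i)-c(B_r^Te_j)]$. Jensen pointwise, then bounding the weights by one, then the contraction estimate on each of the nine pairs $(A^Te_i,e_j)$, gives $\E|u_{\rm G}(G_L)-u_{\rm G}(B_r)|^2\le9e^{-r/700}$. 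This $9$ is the one in $2736=16\cdot19\cdot9$ and $1710=9\cdot190$.

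Second, as you suspected, the variance rate cannot come from the $L^2$ bound. The identity $|v_n-v_*|=|\sqrt{v_n}-\sqrt{v_*}|(\sqrt{v_n}+\sqrt{v_*})$ is linear in the $L^2$ error, so it gives only $3e^{-n/1400}$, and no rearrangement of it improves the exponent. The missing idea is a first-moment coupling. The certificate controls $\E d^{3/50}$, and $d\le d^{3/50}$, so the absolute error decays at the same rate $e^{-n/700}$ as the squared one. Take $Y$ an independent line of stationary law; then $c(G_n^TY)$ is stationary at every $n$. The three pairs $(e_i,Y)$ give $\E|u_{\rm G}(G_n)-c(G_n^TY)|\le3e^{-n/700}$. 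Then $|\Var U-\Var V|\le2\E|U-V|$ gives $6e^{-n/700}$ and identifies $v_*=\Var c(Y)\ge1/19$ through Theorem~\ref{thm:sharper}.

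Third, the readout length $5211$ does not follow along your route. Corollary~\ref{cor:prefix} and \eqref{eq:reset} charge the endpoint error $E_\ell\le d_{L,\ell}$ linearly. So $E_\ell\le\nu/8$ requires $3e^{-\ell/1400}\le1/152$, i.e.\ $\ell\ge1400\log456\approx8572$. At $\ell=5211$ you only know $d_{L,\ell}\le0.073$, about eleven times $\nu/8$; your expression $1400\log(30/\nu)\approx8884$ is not $5211$ either. The threshold $5211=\lceil700\log1710\rceil$ is exactly the condition $\eta_\ell:=9e^{-\ell/700}\le\nu/10$ on the \emph{squared} error. That condition is usable only if the error enters quadratically.

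The paper therefore applies Theorem~\ref{thm:overlapreset} directly, as in the proof of Corollary~\ref{cor:sharperwindow}:
\begin{itemize}
\item $\norm{H_L^{\rm G}-F_\ell^{\rm G}}_2^2=v+w-2a\le\eta_\ell$ gives $a\ge(v+w-\eta_\ell)/2$;
\item the variance estimate gives $v,w\ge\nu-2\eta_\ell\ge4\nu/5$;
\item with your budget $\B\le\nu/16$, one obtains $C_\ell^{\rm G}(t)\ge(v+w-\nu/10)^2/(2(v+\nu/16))-w\ge58\nu/115>1/38$.
\end{itemize}
Your time constant $2736$ is correct. What must change is the readout length and the inequality used to reach it.
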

\begin{proof}
Write $c(v)=v_1^2/\norm v^2$ and
\begin{equation}
 u_{\rm G}(G)=\sum_{i=1}^3\alpha_i(G)c(G^Te_i),\qquad
 \alpha_i(G)=\frac{\norm{G^Te_i}^2}{\norm G_F^2}.
\end{equation}
For $G=AB_r$, subtract the analogous expression for $B_r$, with
weights $\beta_j=\alpha_j(B_r)$. The difference is
\begin{equation}
 \sum_{i,j}\alpha_i(AB_r)\beta_j
 \left[c(B_r^TA^Te_i)-c(B_r^Te_j)\right].
\end{equation}
These weights depend on the suffix. Jensen's inequality is applied
pointwise first; only then bound each weight by one. Conditional on
$A$, each of the nine input pairs is a fixed pair of nonzero vectors independent of the $r$ suffix letters, so the iterated one-letter contraction certificate of Appendix~\ref{sec:sharper} (the argument giving~\eqref{eq:sharperraw}), together with $|c(x)-c(y)|\le d([x],[y])$ from Appendix~\ref{sec:explicit} and $d^2\le d\le d^{3/50}$ on $[0,1]$, gives raw squared and first absolute
errors at most $9e^{-r/700}$. Both means equal $1/3$, so centering changes nothing and the square root gives the stated $L^2$ bound.

For the stationary comparison, take a random line $Y$, independent of the letters and distributed by the stationary law of the transpose walk (Section~\ref{sec:localization}), and compare all three row images $G_n^Te_i$ to
$G_n^TY$. The latter has stationary law at every $n$. The same argument uses just
three pairs and gives $\E|u_{\rm G}(G_n)-c(G_n^TY)|\le3e^{-n/700}$.
The variance comparison $|\Var U-\Var V|\le2\E|U-V|$ of Appendix~\ref{sec:sharper} gives the
second estimate and identifies the limiting variance with $\Var\,c(Y)$, the same $v_*\ge1/19$ as in Theorem~\ref{thm:sharper}.

Use the common envelope $\eta_r=9e^{-r/700}$ for the squared
localization error and the raw stationary first-moment error.
Put $\nu=1/19$. Then $\eta_\ell\le\nu/10$ for
$\ell\ge\lceil700\log1710\rceil=5211$.
For $v=\|H_L^{\rm G}\|_2^2$, $w=\|F_\ell^{\rm G}\|_2^2$ and
$a=\ip{H_L^{\rm G}}{F_\ell^{\rm G}}$, the estimates above imply
$v,w\ge\nu-2\eta_\ell$ and $a\ge(v+w-\eta_\ell)/2$.
Each reset fixes $F_r^{\rm G}$, so the proof of Lemma~\ref{lem:spatial}
gives $b_j^2\le9e^{-r/700}$ and
\begin{equation}
 \B\le\frac{9tp}{2-p}e^{-r/700}\le\frac\nu{16}
 \quad\text{whenever}\quad
 t\le\frac{2-p}{(16\cdot19\cdot9)p}e^{r/700}.
\end{equation}
Applying Theorem~\ref{thm:overlapreset}, exactly as in the proof of
Corollary~\ref{cor:sharperwindow}, gives
$C_\ell^{\rm G}(t)\ge58\nu/115=58/2185>1/38$.
This proves~\eqref{eq:gramwindow} and explains the constant $2736$.
For one complete reset a sufficient buffer is
$\lceil700\log2736\rceil=5540$; the minimal readout buffer alone
again does not certify a complete round.
\end{proof}

\subsection{A finite readout and the final reset}
For a specified finite $L$, let $P$ be the complete reset~\eqref{c-eq:reset} of the first site, which on a diagonal function of the word averages the first letter; since $G_L=aB$ with $B=a_2\cdots a_L$, the readout $F=PH_L^{\rm G}$ is
\begin{equation}\label{eq:conditionalgram}
 F=PH_L^{\rm G},\qquad
 F(B)=\frac1{13}\sum_{a\in\A_3}u_{\rm G}(aB)-\frac13.
\end{equation}
It is supported on the last $L-1$ sites and may depend on $L$.
Permutation symmetry gives the exact finite-word mean $1/3$.
Define
\begin{equation}
 v=\norm{H_L^{\rm G}}_2^2,\quad w=\norm F_2^2,\quad
 b^2=\norm{(I-P)H_L^{\rm G}}_2^2=v-w,\quad\lambda=b^2/v.
\end{equation}
Orthogonal projection gives $\ip{H_L^{\rm G}}F=w$.
Among all suffix-supported observables, $F$ maximizes the squared
normalized overlap with this particular charge, by Cauchy--Schwarz.

Let every round be $Q V_j$, where $V_j$ is a bistochastic channel
fixing $H_L^{\rm G}$ and $Q=(1-p)I+pP$.
Since $P=P^*$ and $PF=P^2H_L^{\rm G}=F$, one has $Q^*F=F$, so the final reset drops out of the autocorrelation $C_F(t)=\ip F{QV_t\cdots QV_1F}$. The remaining product contains $t-1$ resets, each with $b_j^2=b^2$, so $\B=(t-1)pb^2/(2-p)$; with $a=\ip{H_L^{\rm G}}F=w$, Theorem~\ref{thm:overlapreset} gives $C_F(t)\ge2w^2/(v+\B)-w$, and dividing by $w$ with $w/v=1-\lambda$ yields
\begin{equation}\label{eq:finitegram}
 \frac{C_F(t)}w\ge
 \frac{2(1-\lambda)}{1+(t-1)p\lambda/(2-p)}-1,
 \qquad t\ge1,\quad w>0.
\end{equation}
All $t-1$ interior resets remain in the budget. In particular, a
half-variance plateau, meaning $C_F(t)\ge\|F\|_2^2/2$, follows whenever
\begin{equation}\label{eq:finitegramhalf}
 \lambda\le\frac{1}{4+3(t-1)p/(2-p)}.
\end{equation}
For two complete-reset rounds the threshold is $\lambda\le1/7$.
This is a finite-chain certificate, distinct from the fixed-support,
arbitrary-length result above.

\subsection{Exact eight-site certificate}
The required finite moments can be evaluated without simulating a
quantum channel. For a suffix product $B$, set
\begin{equation}
 m_1(B)=\frac1{13}\sum_a u_{\rm G}(aB),\qquad
 m_2(B)=\frac1{13}\sum_a u_{\rm G}(aB)^2.
\end{equation}
If $N_{L-1}(B)$ counts suffix words with product $B$, then
\begin{equation}\label{eq:finitemoments}
 v=13^{-(L-1)}\sum_B N_{L-1}(B)m_2(B)-\frac19,\qquad
 w=13^{-(L-1)}\sum_B N_{L-1}(B)m_1(B)^2-\frac19.
\end{equation}
Every term is rational. Exact integer multiplicities can also be
aggregated modulo left signed row permutations. Indeed
$u_{\rm G}(RB)=u_{\rm G}(B)$ for an orthogonal signed permutation $R$,
and conjugation by $R$ permutes the alphabet. Thus the left-product
transition and both moments descend to these orbits, without changing
the sums. Equivalently, one may group integer unimodular products by
their Gram matrix: if $B^TB=C^TC$, then $R=CB^{-1}$ is an integer
orthogonal matrix, hence a signed permutation. A representative of
determinant $-1$ is harmless: it represents
the orbit for this computation and is not an extra physical letter.

The ancillary verifier evaluates~\eqref{eq:finitemoments} exactly
through $L=8$. Rounded displays are
\begin{center}
\begin{tabular}{r r r r}
$L$ & $v$ & $w$ & $\lambda=b^2/v$\\\hline
5 & $0.03956777$ & $0.03093240$ & $0.21824263$\\
6 & $0.04368525$ & $0.03600766$ & $0.17574807$\\
7 & $0.04716739$ & $0.04031135$ & $0.14535540$\\
8 & $0.05013681$ & $0.04398668$ & $0.12266696$\\\hline
\end{tabular}
\end{center}
The certificate uses the exact fractions, not these displays.
In particular $v_8>1/20$ and $8b_8^2<v_8$, hence $\lambda_8<1/8<1/7$.
Thus every allowed sequence
of two complete-reset rounds satisfies
\begin{equation}\label{eq:eightsite}
 C_F(2)>\frac59\norm F_2^2>\frac7{288}.
\end{equation}
The first inequality follows by substituting $\lambda<1/8$ in
\eqref{eq:finitegram}; the second uses $w=(1-\lambda)v>7/160$.
One complete reset is interior to the evolution and is charged in this
bound. As a partial-reset example, at $L=6$, $p=1/10$, the exact
certificate gives $C_F(11)\ge\norm F_2^2/2$ with ten interior resets.

These are finite computer-assisted corollaries of the overlap theorem,
uniform over the stated conserving bulk channels. Each site has thirteen
states; eight sites are not an eight-qubit implementation. This seven-site readout is defined for $L=8$ only and does not shorten the readout of Proposition~\ref{prop:gramlocal}. Being diagonal, $F$
can be evaluated from measured suffix labels using~\eqref{eq:conditionalgram}.
A conserving gate set, an initial-state preparation and their noise
budgets must still be implemented and checked before claiming a hardware
demonstration. No physical simulation or experimental result is inferred
from these moment certificates. A three-active-site hardware pilot of the four-letter $\SL(2,\mathbb Z)$ analogue of this finite certificate is reported in Appendix~\ref{c-app:earlier}; it did not meet its predeclared primary criterion.

\section{An exactly solvable two-qubit reset echo}
\label{c-sec:twoqubit}
This section instantiates Theorem~\ref{thm:overlapreset} in the smallest system that admits a nontrivial complete reset and a separate nonconstant readout; the matrix-product chain is a separate application of the same inequality. In this section $A$, $B$ denote qubits, tensor factors are ordered $A\otimes B$, and $c,s,\theta,\lambda$ are the gate parameters below. Let $A$ be the reset qubit and $B$ the readout qubit. Write
$Z=I_A\otimes\sigma_z$, $M=\sigma_x\otimes\sigma_x$, and
\begin{equation}
 H=cZ+sM,\qquad c=\cos\theta,\quad s=\sin\theta,
 \qquad \lambda=s^2.
 \label{c-eq:Htwo}
\end{equation}
Here $H$ is the conserved reference observable of Section~\ref{c-sec:setting}, not a Hamiltonian and not the Hadamard gate, which is written $\mathsf H$ below; $\lambda$ will turn out to be the reset leakage. The Pauli strings satisfy $Z^2=M^2=I$, $ZM=-MZ$, and are trace
orthogonal, $\tau(ZM)=0$. Hence $H^2=c^2Z^2+s^2M^2+cs(ZM+MZ)=I$, $\tau(H)=0$, and $v=\tau(H^2)=1$.
The gate $V=H$ is a Hermitian unitary ($V^\dagger=V$, $V^2=I$) and exactly conserves $H$, since $V$ commutes with itself.
It is an interacting gate: $V|00\rangle=c|00\rangle+s|11\rangle$.
Define
\begin{equation}
 F=P_AH=cZ,\qquad
 \Phi_\theta=\Ad_{V^\dagger}\circ P_A\circ\Ad_V.
 \label{c-eq:echo}
\end{equation}
Here $\Ad_V(X)=VXV^\dagger$, $P_A$ is the complete reset~\eqref{c-eq:reset} of qubit $A$, and, read right to left, $\Phi_\theta$ applies $V$, resets $A$, then applies $V^\dagger=V$; without the reset it would be the identity.
Since $P_A(I_A\otimes\sigma_z)=(I_A/2)\Tr(I_A)\otimes\sigma_z=Z$ and $P_A(\sigma_x\otimes\sigma_x)=(I_A/2)\Tr(\sigma_x)\otimes\sigma_x=0$, and the reset is complete ($p=1$, coefficient $p/(2-p)=1$),
\begin{equation}
 w=a=c^2=1-\lambda,\qquad
 \B=\norm{(\id-P_A)H}_2^2=\lambda.
\end{equation}
There is one interior complete reset. Theorem~\ref{thm:overlapreset} yields
\begin{equation}
 \frac{C_F}{w}\ge\frac{2(1-\lambda)}{1+\lambda}-1
 =\frac{1-3\lambda}{1+\lambda}.
 \label{c-eq:twobound}
\end{equation}

To compute the correlation exactly, expand the conjugation:
\begin{align}
 VZV&=c^2 ZZZ+cs(ZZM+MZZ)+s^2MZM\nonumber\\
 &=(c^2-s^2)Z+2csM,\\
 P_A(VFV)&=c(c^2-s^2)Z=c(1-2\lambda)Z.
\end{align}
Using cyclicity of the trace, $\ip F{V^\dagger YV}=\ip{VFV^\dagger}Y$ (so $\Ad_V$ is an isometry), and $P_A=P_A^2=P_A^*$ (so $\ip Y{P_AY}=\|P_AY\|_2^2$), with $\|Z\|_2^2=\tau(Z^2)=1$,
\begin{equation}
 C_F=\ip{VFV}{P_A(VFV)}
 =\norm{P_A(VFV)}_2^2
 =c^2(1-2\lambda)^2.
\end{equation}
Therefore
\begin{equation}
 \boxed{\displaystyle\frac{C_F}{w}=(1-2\lambda)^2,\qquad
 (1-2\lambda)^2-\frac{1-3\lambda}{1+\lambda}
 =\frac{4\lambda^3}{1+\lambda}.}
 \label{c-eq:exact}
\end{equation}
The general bound matches the exact result through second order at
small leakage. This establishes tightness to that order for this
family, not optimality of the bound for all channels.

At $\theta=\pi/6$, $\lambda=1/4$, the lower bound is $1/5$ and
the exact correlation is $1/4$. At $\lambda=1/8$, the corresponding
values are $5/9$ and $9/16$, separated by $1/144\approx0.007$, below the sampling half-width $0.025$ of~\eqref{c-eq:CI}; this smaller
margin, against $1/20$ at $\lambda=1/4$, motivated choosing $\lambda=1/4$ for the
hardware run. Figure~\ref{c-fig:theory} displays the dependence on
$\lambda$. The theorem needs a nontrivial subsystem that is completely reset and a disjoint subsystem carrying a nonconstant readout; one qubit for each is the minimum, since a completely reset single qubit leaves only scalars. This is not a minimum dimension for every application of the theorem.

\begin{figure}[tb]
 \centering
 \includegraphics[width=0.88\linewidth]{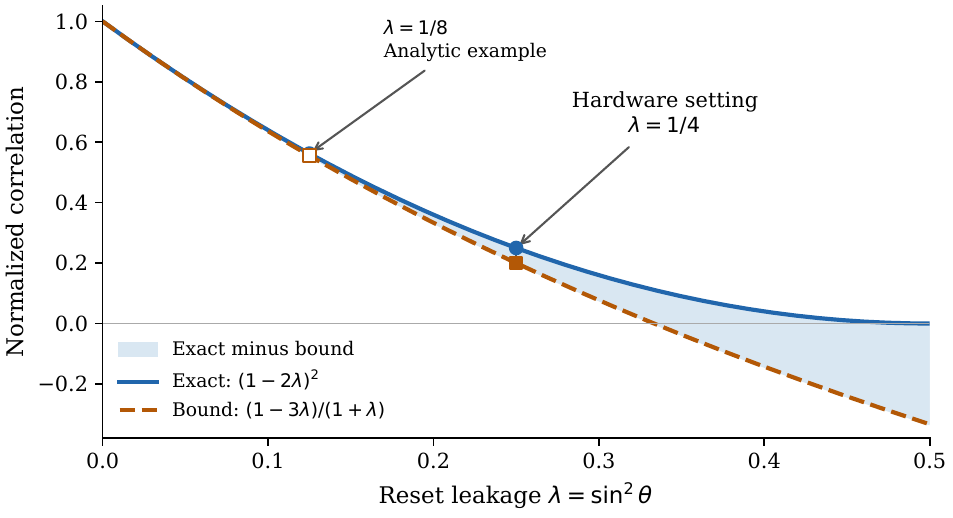}
 \caption{Exact normalized correlation and the general lower bound
 for the two-qubit family. The shaded difference is
 $4\lambda^3/(1+\lambda)$. The marked hardware setting is
 $\lambda=1/4$, where the exact value is $0.25$ and the bound is $0.20$.
 The additional marker at $\lambda=1/8$ is an analytic example; it was
 not a hardware setting. These curves are predictions, not experimental fits.}
 \label{c-fig:theory}
\end{figure}

\section{Hardware implementation and results}
\label{c-sec:experiment}
\subsection{Circuit, reset, and controls}
The conserving gate has the factorization
\begin{equation}
 V=e^{-i\theta\sigma_x\otimes\sigma_y}Z.
 \label{c-eq:gate}
\end{equation}
Indeed, $(\sigma_x\otimes\sigma_y)Z=iM$, so the right-hand side
is $cZ+sM$. To make the entangling-gate count explicit, let $\mathsf H$
be the Hadamard gate, $S_{\rm ph}=\operatorname{diag}(1,i)$ the phase gate, and
$W=\mathsf H\otimes(S_{\rm ph}\mathsf H)$; then $\mathsf H\sigma_z\mathsf H=\sigma_x$ and $S_{\rm ph}\sigma_xS_{\rm ph}^\dagger=\sigma_y$, which is why $W$ maps $\sigma_z\otimes\sigma_z$ to $\sigma_x\otimes\sigma_y$. Then
$W(\sigma_z\otimes\sigma_z)W^\dagger=\sigma_x\otimes\sigma_y$, so
\begin{equation}
 e^{-i\theta\sigma_x\otimes\sigma_y}
 =W\,\mathrm{CNOT}_{A\to B}\,[I\otimes R_z(2\theta)]\,
       \mathrm{CNOT}_{A\to B}\,W^\dagger,
 \quad R_z(\varphi)=e^{-i\varphi\sigma_z/2}.
\end{equation}
The forward and inverse gates therefore use four entangling gates
in total. Native compilation on the selected IBM qubits used four
CZ gates and depth 32.
Circuits were constructed and compiled with Qiskit~\cite{Qiskit};
archived circuit files and runtime metadata specify the submitted template.

The complete reset is implemented by equally weighting the four
Pauli masks on $A$, using equal numbers of shots for each of four Pauli-mask circuits on $A$ (a balanced implementation of the Pauli twirl):
\begin{equation}
 P_A(X)=\frac14\sum_{\mu=0}^3
   (\sigma_\mu\otimes I)X(\sigma_\mu\otimes I),
 \quad (\sigma_0,\sigma_1,\sigma_2,\sigma_3)=(I,\sigma_x,\sigma_y,\sigma_z).
 \label{c-eq:twirl}
\end{equation}
To verify this for arbitrary, possibly entangled inputs, expand
$X=\sum_{\nu=0}^3\sigma_\nu\otimes X_\nu$.
The average conjugation preserves the $\nu=0$ term and cancels
each of the other Pauli terms, leaving
$I_A\otimes X_0=(I_A/2)\otimes\Tr_A X$.
Each individual mask is unitary; the reset is the ensemble channel.
The mask is independent of the input and no outcomes are selected
according to its value (no postselection). For $\nu\ne0$, $\sigma_\mu\sigma_\nu\sigma_\mu=\pm\sigma_\nu$ with two signs of each kind, which is why the four terms cancel.

\begin{figure}[tb]
 \centering
 \includegraphics[width=\linewidth]{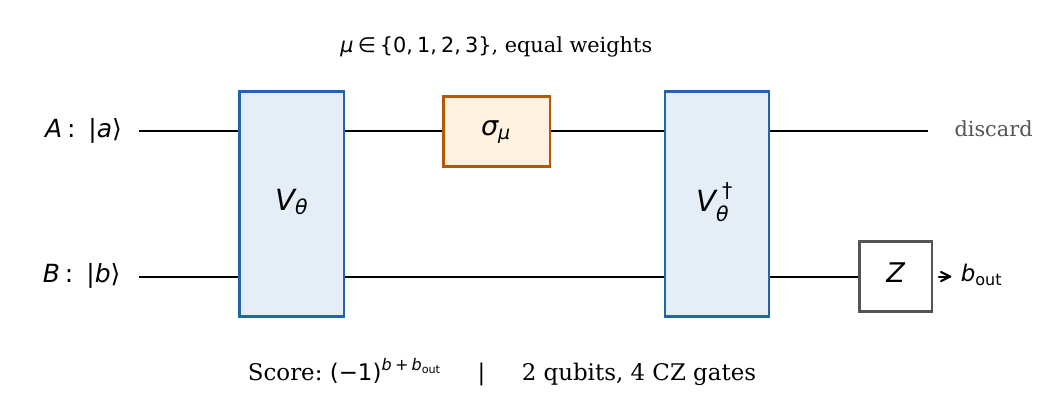}
 \caption{Reset-echo protocol. The four computational-basis inputs
 and four Pauli masks are equally weighted. The final score is
 $(-1)^{b_{\rm in}+b_{\rm out}}$. The protected setting uses
 $\theta=\pi/6$; the control uses $\theta=\pi/4$. No-reset references
 replace the mask by the identity while retaining the compiled
 forward/inverse gate pattern. Qubit $A$ is not measured.}
 \label{c-fig:circuit}
\end{figure}

Figure~\ref{c-fig:circuit} shows the protocol. The protected setting is the conserving gate at $\theta=\pi/6$. The control uses the same parameterized template at $\theta=\pi/4$,
keeping the target $H$ and $F$ of the $\pi/6$ setting fixed, so by the same computation its normalized correlation is $(c^2-s^2)^2=\cos^2(2\theta)$. Its
ideal normalized correlation is $\cos^2(2\theta)=0$, whereas the
protected setting gives $1/4$. The control gate does not conserve
the target $H$, although it conserves its own angle-dependent
observable. Thus the comparison is between two specified dynamics,
not a claim that the control has no conservation laws. Both
settings also have a no-reset echo whose ideal correlation is one.
The numerical values of the angle and the mask are substituted into the compiled circuit at run time, so the compiler cannot cancel the inverse pair in those reference circuits.

\subsection{Estimator and statistical analysis}
Prepare the four basis states $|ab\rangle$, $a,b\in\{0,1\}$, with equal weights and
measure $\sigma_z$ on qubit $B$ at the output. Here $b_{\rm in}=b$ is the input bit of $B$ (the bit $a$ of $A$ is averaged over) and $b_{\rm out}$ is the measured bit, whose $\sigma_z$ eigenvalue is $(-1)^{b_{\rm out}}$, so $\E[(-1)^{b_{\rm out}}\mid ab]=\Tr[Z\Phi(|ab\rangle\langle ab|)]$. For the score
$S=(-1)^{b_{\rm in}+b_{\rm out}}$, using $\sum_{a,b}(-1)^b|ab\rangle\langle ab|=Z$ and $F=cZ$, the ideal preparation/readout
identity is
\begin{align}
 \E S
 &=\frac14\sum_{a,b}(-1)^b
       \Tr[Z\Phi(|ab\rangle\langle ab|)]\nonumber\\
 &=\frac14\Tr[Z\Phi(Z)]
 =\tau[Z\Phi(Z)]=\frac{C_F}{w}.
 \label{c-eq:estimator}
\end{align}
This estimator identity holds for any fixed implemented channel,
even if it is nonunital (that is, $\Phi(I)\ne I$), because it is linear in $\Phi$ and uses only the inputs and the measured observable, provided the preparations and measurements
are ideal. That fact does not extend the theorem to nonunital
channels; actual preparation and readout errors remain possible.

The design was fixed before submission. A setting is one of $\{\text{protected},\text{control}\}\times\{\text{reset},\text{no reset}\}$, a row is one (setting, input state, mask) circuit with 128 shots, and a block is one copy of the 64-row grid; there were eight blocks, each containing
all four settings, all four initial basis states, and all four mask
labels. In no-reset settings the mask label
is ignored. There are 512 rows and 65,536 shots in total, or
$N_s=16{,}384$ per setting. The grid is balanced within each block;
its requested order was shuffled. No independence of the two
endpoint bits is assumed.

Let $\widehat C$ be the mean of $S$ over the $N_s$ shots of a setting. For independent shots conditional on the fixed contexts, Hoeffding's
inequality for scores in $[-1,1]$ (range $2$, hence the exponent $N_sh^2/2$) gives \cite{Hoeffding}
\begin{equation}
 \Prob\{|\widehat C-\E\widehat C|\ge h\}
 \le2e^{-N_sh^2/2}.
\end{equation}
A union bound over four settings sets
\begin{equation}
 8e^{-N_sh^2/2}=0.05,\qquad
 h=\sqrt{\frac{2\log160}{16{,}384}}=0.02489032048.
 \label{c-eq:CI}
\end{equation}
These are simultaneous 95\% sampling intervals, allowing different
means for the different input contexts. They do not certify absence
of correlated device drift or systematic errors. The fixed checks
were: a protected-reset lower endpoint above $0.20$, separation of
the protected and control reset intervals, and both no-reset lower
endpoints above the engineering threshold $0.90$.

\subsection{Measured correlations}
The run completed on September 18, 2026, on \texttt{ibm\_kingston},
physical qubits 49 and 50, under job
\texttt{damilsf8gn2s739lg7j0}, with 20 seconds of charged QPU time.
Table~\ref{c-tab:results} reports all four
settings; Figure~\ref{c-fig:data} compares them with the ideal predictions.
All 65,536 shots are included, without postselection, error mitigation,
fitted visibility, or division by a reference correlation.

This job followed several exploratory IBM runs on related boundary-memory models, none of which demonstrated the spatial mechanism; they are disclosed in Appendix~\ref{c-app:earlier} and are not pooled with Table~\ref{c-tab:results}.

\begin{table}[tb]
\centering\small
\begin{tabular}{@{}lrrrr@{}}
\toprule
Setting & Same sign & Opposite sign & Correlation & Simultaneous 95\% interval\\
\midrule
Protected, reset & 10,299 & 6,085 & $0.2572$ & $[0.2323,0.2821]$ \\
Control, reset & 8,164 & 8,220 & $-0.0034$ & $[-0.0283,0.0215]$ \\
Protected, no reset & 16,061 & 323 & $0.9606$ & $[0.9357,0.9855]$ \\
Control, no reset & 16,028 & 356 & $0.9565$ & $[0.9317,0.9814]$ \\
\bottomrule
\end{tabular}

\caption{Raw sign counts ($S=+1$ and $S=-1$ shots) and normalized correlations. Every setting has
16,384 shots. The ideal predictions are $0.25$, $0$, $1$, and $1$,
respectively. ``Reset'' is the ensemble randomization of
equation~\eqref{c-eq:twirl}. The intervals use~\eqref{c-eq:CI}.}
\label{c-tab:results}
\end{table}

The protected-reset lower endpoint, $0.2323$, exceeds the ideal
theorem benchmark $0.20$. The protected and control reset means
differ by $0.2606$; their simultaneous intervals imply a difference
interval $[0.2108,0.3104]$, the difference of means $\pm2h$. Both means are compatible with their
ideal predictions. Both no-reset lower endpoints exceed $0.90$,
but their intervals exclude the ideal value one. The observed
readout-bit change rates in these echoes, the opposite-sign counts of the two no-reset settings (shots with $b_{\rm out}\ne b_{\rm in}$ although the ideal echo returns the input), are $323/16{,}384=1.97\%$
and $356/16{,}384=2.17\%$. These are errors of the whole echo
experiment, not individual gate infidelities.

\begin{figure}[tb]
 \centering
 \includegraphics[width=\linewidth]{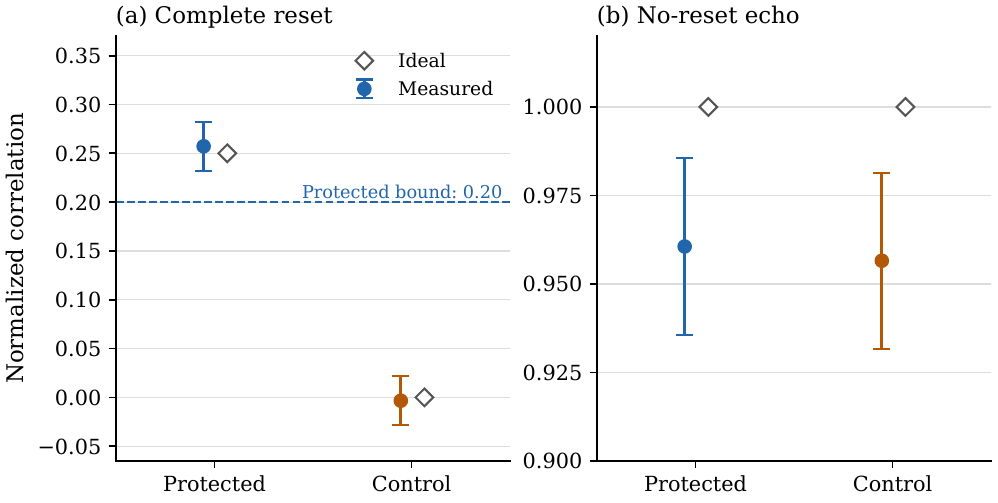}
 \caption{IBM measurements and ideal predictions. Filled circles
 show observed means with the predeclared simultaneous 95\% intervals;
 open diamonds show ideal values. The dashed $0.20$ line in the
 reset panel is the theorem's benchmark for the protected setting
 only. The no-reset panel uses a separate vertical scale to show
 implementation error. No fit or visibility correction is applied.}
 \label{c-fig:data}
\end{figure}

This is a resolved control contrast consistent with the chosen
two-qubit model. It does not certify exact charge conservation or
the projection-channel hypothesis for the noisy device. Hardware
errors can raise or lower a correlation, so a value above the ideal
bound is an instance-level consistency observation; a value below
it would not refute the theorem without certified hypothesis and
error bounds. The slight excess of the point estimate over $0.25$
is within the reported interval. The experiment neither witnesses
entanglement nor measures arbitrary-state quantum-memory fidelity, and it is an instance of the general inequality only: it does not realize the matrix-product chain and tests neither its localization nor its length-dependent lifetime.

\section{Interpretation and limitations}\label{sec:discussion}
The memory mechanism is a separation of scales: a reset with buffer $r$ leaks at most $C_{\rm loc}^2e^{-2\kappa r}$ of the charge, exponentially small in the buffer, while the signal $v_*$ is of order one and independent of $L$. A remote perturbation
changes the global conserved charge only weakly in the reference ensemble (the uniform word law of Section~\ref{sec:model}),
while a suffix observable has a nonzero overlap with that charge. The reset
inequality translates these two properties into an observable signal for
each allowed circuit. It is therefore a useful design and analysis bound
for constrained open-system dynamics: it quantifies how far noise must be
from the readout~\eqref{eq:bufferwindow}, how often resets may occur~\eqref{c-eq:lifetime}, and how much imperfect
conservation can be tolerated~\eqref{eq:continuous}. It does not establish fast computation,
fault tolerance, or a practical implementation of the matrix-product chain.

There is a sharp distinction between retaining an observable and storing a
qubit. For example, complete qubit dephasing
$\mathcal D_Z(\rho)=(\rho+\sigma_z\rho\sigma_z)/2$ retains the $\sigma_z$ autocorrelation
exactly: $\mathcal D_Z(\sigma_z)=\sigma_z$, so $\ip{\sigma_z}{\mathcal D_Z(\sigma_z)}=1$. Yet applying it to one half
of $|\Phi^+\rangle=(|00\rangle+|11\rangle)/\sqrt2$ gives
\begin{equation}
 \rho_{\rm out}:=(\id\otimes\mathcal D_Z)(|\Phi^+\rangle\langle\Phi^+|)
 =\tfrac12(|00\rangle\langle00|+|11\rangle\langle11|),
 \qquad
 \langle\Phi^+|\rho_{\rm out}|\Phi^+\rangle=\tfrac12.
\end{equation}
This output is separable (the off-diagonal terms of $|\Phi^+\rangle\langle\Phi^+|$ change sign under $\sigma_z$ on one qubit and average to zero). A channel on the affected qubit alone cannot
restore its entanglement with the untouched first qubit, which here plays the role of a reference system, because a local channel maps separable states to separable states. Thus even perfect retention
of one observable does not imply preservation of unknown superpositions.
The number $1/2$ above is the entanglement fidelity of $\mathcal D_Z$ for the maximally mixed input, expressed as an overlap with $|\Phi^+\rangle$. Recovering the qubit with high entanglement fidelity would require a recovery channel that brings this overlap close to one. Recovery guarantees concern such quantities; relevant two-sided and near-optimal
recovery bounds are developed by Tyson~\cite[arXiv v4, Sec.~5, Theorem~44]{Tyson} and
Barnum--Knill~\cite{BarnumKnill}. Those results supply context for this distinction,
not a derivation of our reset inequality.
For an encoded quantum state, approximate error correction can also be
formulated through worst-case entanglement fidelity and a dual problem
on the environment~\cite{BenyOreshkov}. These criteria are stronger than
the single-observable correlation examined here.

The spatial theorem uses a uniform-word mean-square norm. For suitable words with $G_L(w)=I$ and anisotropic suffix products, the projector $Z(G_L(w))$ differs from $Z(B_r(w))$ by order one; only the uniform-word average is small (end of Section~\ref{sec:localization}). The generic constants depend on
classical random-product estimates, and the explicit thirteen-state
constants remain conservative. The eight-site certificate removes the
large-readout requirement for its readout $F=PH_L^{\rm G}$ of~\eqref{eq:conditionalgram}, the conditional mean of the charge given the last seven labels, but does
not provide a low-depth physical implementation. Finally, the measured correlations of the two reset settings ($\theta=\pi/6$ and $\pi/4$)
are compatible with the exact two-qubit predictions
at the reported sampling precision; the two no-reset echo settings resolve implementation
errors. Earlier exploratory implementations, including a seven-qubit
pilot that failed its predeclared criterion and a cancelled run, are
disclosed in Appendix~\ref{c-app:earlier}; they are not pooled with the
two-qubit data. The sampling intervals do not bound
preparation drift or the diamond distance of the implemented channel.

Theorem~\ref{thm:global} concerns the averaged classical kernel with a bath at one end and open boundaries; it does not address baths at both ends, periodic boundaries, or the filling-length questions of Wang et al.~\cite{Wang}, and the local theorems assert nothing for typical pure initial states under an individual circuit. Generic nonunital dissipation and interacting finite-temperature reference states are outside the present theorems (Appendix~\ref{app:ensembles}).

\paragraph{Related boundary observables.}
Long-lived local information need not arise from matrix-product constraints.
For example, P{\l}odzie\'n and Chwede\'nczuk~\cite{Plodzien}
derive a boundary plateau of local quantum Fisher information in an open
Kitaev chain from Majorana-mode localization. That is a different model
and information measure. Comparing such mechanisms requires specifying
both the allowed noise and the observable being preserved.

\section{Open questions}\label{sec:open}
The following questions separate stronger mathematical bounds from the
additional ingredients needed for a physical memory.

\paragraph{How close are the explicit spatial and temporal scales to optimal?}
For the fixed-image charge of the thirteen-letter model, define
\begin{equation}
 e_r=\sup_{L\ge r}\|\widetilde H_L-\widetilde F_r\|_2^2,\qquad
 \gamma_* =\liminf_{r\to\infty}\bigl(-r^{-1}\log e_r\bigr).
\end{equation}
Here $\gamma_*$ is the true localization rate of this charge. Theorem~\ref{thm:sharper} gives $e_r\le(e^{-r/1400})^2=e^{-r/700}$ uniformly in $L$, hence $\gamma_*\ge1/700$.
Determining $\gamma_*$, or obtaining substantially stronger certified
bounds on it, would directly reduce the sufficient buffer sizes. A second
question is whether a specified, nontrivially mixing product-conserving
local circuit has a relaxation time (the first $t$ at which $C_\ell(t)$ falls below half its initial value) of order $\delta_r^{-2}=C_{\rm loc}^{-2}e^{2\kappa r}$, the length of the certified window in~\eqref{eq:bufferwindow}. Corollary~\ref{cor:prefix} is a lower bound uniform
over conserving circuits; it cannot supply an upper bound for that entire
class, which includes identity dynamics and exactly isolated readouts.

\paragraph{What is the optimal physical reset/readout bound?}
For fixed overlap fraction $q$ and relative leakage $x$ as in~\eqref{c-eq:qx}, is
$2q/(1+x)-1$ the infimum of $C_F/\|F\|_2^2$ over finite-dimensional
channel sequences satisfying the hypotheses of Theorem~\ref{thm:overlapreset}, or do complete
positivity and projection-channel structure imply a stronger bound in
some parameter regimes? The proof optimizes over the ellipsoid~\eqref{c-eq:ellipsoid} containing
all admissible final operators. It does not show that every boundary point
of that ellipsoid can be reached by such a channel sequence. The two-qubit
family has an explicitly positive gap from the bound, so it alone does
not establish optimality. Resolving this question would identify which
improvements require new model information.

\paragraph{Can finite-chain certificates be implemented with short local circuits?}
The eight-site result specifies a charge, conditional readout, and allowed
channels. It leaves open a scalable compilation with local gates that both
mix nontrivially within typical charge sectors and realize the prescribed
readout at useful depth. An implementation should demonstrate correlation
retention while a matched constraint-breaking circuit loses the same
signal. Finding such a construction at sizes where the exact certificate
is positive would connect the mathematical example to an experimental
test of spatially protected memory, beyond the two-qubit inequality test.

\paragraph{Can nonunital local baths admit a comparable spatial guarantee?}
Appendix~\ref{app:ensembles} treats a common invariant product reference
state. A further question is whether a local, nonunital bath with a
correlated Gibbs stationary state admits both a suitable weighted
norm-loss estimate and an exponentially localized charge approximation.
Weak-coupling master equations provide a standard class of thermal-bath
models~\cite{Davies1974}. For a detailed-balance convention in which the
dissipative Heisenberg generator $\mathcal L$ is self-adjoint for
$\langle X,Y\rangle_\rho=\Tr(\rho X^\dagger Y)$, stationarity and
contractivity give $\mathcal D=-\mathcal L\ge0$ and
$\frac{d}{dt}\|X_t\|_{2,\rho}^2=-2\langle X_t,\mathcal D X_t\rangle_\rho$.
This supplies the continuous norm-loss identity used in
Theorem~\ref{thm:continuous}, but not the spatial approximation or locality
of the projection onto fixed observables when $\rho$ is correlated. Any extension must also account for the
depolarization obstruction of Proposition~\ref{prop:upper}, or an analogue of it, when noise directly damps the observed charge.

\paragraph{Can the mechanism protect quantum information?}
All matrix-product charges constructed here are diagonal and commute. Protecting a
logical qubit would require control of noncommuting logical observables,
and an encoding and recovery statement that preserves entanglement with
a reference. Can a local constrained model support such an operator
algebra with comparably small boundary leakage and a nontrivial interior
dynamics? The diagonal construction and the present hardware data give
no such guarantee. A positive result would require an additional
structure beyond the classical signal proved here.

\paragraph{Acknowledgment.}
IBM Quantum services supplied the hardware access used in this work
\cite{IBMQuantum}. The analysis and conclusions are the author's and do
not represent IBM's position.
\appendix
\section{Global relaxation and the group-walk consequence}\label{app:global}
For completeness we state the associated global result. It concerns typical basis states under the averaged classical dynamics, whereas the local-memory theorems concern a specified mixed preparation under each individual circuit.
A doubly stochastic kernel is a matrix of transition probabilities with
every row and column summing to one; it preserves the uniform law.
For probability distributions on words,
$\|p-q\|_{\rm TV}=\frac12\sum_w|p(w)-q(w)|
=\sup_A|p(A)-q(A)|$. Thus one event that still remembers the initial
condition supplies a lower bound on the distance from equilibrium.
For a kernel $P$, $P^t(w,\cdot)$ denotes the distribution after $t$ steps started from the word $w$.

\begin{theorem}[Typical basis states in averaged dynamics]\label{thm:global}
Assume $\A$ and $\A^T$ generate strongly irreducible proximal semigroups,
as in Section~\ref{sec:localization}. A product-preserving kernel moves
probability only between words with equal $G_L$; a prefix update changes
only the first $k$ letters, where $k$ is fixed independently of $L$.
Let $P_L$ be any doubly stochastic product-preserving kernel followed by
any doubly stochastic prefix update. There are $c,C>0,L_0$, depending
only on $\A$ and $k$, uniform over these kernels, and sets $W_L$ with
$\pi_L(W_L)\ge1-Ce^{-cL}$ such that, for every $L\ge L_0$,
\begin{equation}\label{eq:global}
 \norm{P_L^t(w,\cdot)-\pi_L}_{\rm TV}\ge\tfrac34-e^{-cL}
 \quad(w\in W_L,\ 0\le t\le\lfloor e^{cL}\rfloor).
\end{equation}
The good set may depend on the kernel.
\end{theorem}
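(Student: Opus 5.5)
The plan is to exhibit, for each typical starting word $w$, an event that the averaged dynamics keeps almost surely occupied for exponentially long but that has uniform-law mass at most about $1/8$. The total-variation identity $\norm{p-q}_{\rm TV}=\sup_A|p(A)-q(A)|$ then gives the bound $\tfrac34-e^{-cL}$. The witness is the top right singular direction $[v(G_L)]$, the line underlying $Z(G_L)$ of Section~\ref{sec:localization}.

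\textbf{Step 1 (how the direction moves).} A product-preserving step leaves $G_L$, hence $Z(G_L)$, unchanged. A prefix update replaces $G_L=AB$ by $A'B$ with $A,A'$ products of $k$ letters. Hence $G_L\mapsto gG_L$ with $g=A'A^{-1}$ in a finite set $S_k$ of bounded condition number $\chi_k$.

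A standard perturbation estimate for the top right singular vector of $gG$ versus $G$ will be used. Write $G=s_1uv^T+E$ and compare $(gG)^T(gG)$ with the rank-one part, as in the proof of Theorem~\ref{thm:local}. This gives
\[
d([v(gG)],[v(G)])\le C_k\,\sigma(G),\qquad \sigma(gG)\le\chi_k^2\sigma(G).
\]
I would first try the Davies--Kahan form of this estimate: the spectral gap of $(gG)^T(gG)$ is at least $s_1(gG)^2-s_2(gG)^2$, and the perturbation is controlled by $\chi_k^2\sigma(G)$ times that gap. So a single step moves the direction by $O(e^{-a_0L})$ whenever $\sigma(G_L)\le e^{-a_0L}$.

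\textbf{Step 2 (stationarity and Markov selection).} Let $\mathcal B_L=\{w:\sigma(G_L(w))>e^{-a_0L}\}$. By~\eqref{eq:gap}, $\pi_L(\mathcal B_L)\le C_1e^{-b_0L}$. Because $P_L$ is doubly stochastic, the chain started from $\pi_L$ has marginal $\pi_L$ at every time. A union bound then shows that the probability of visiting $\mathcal B_L$ before time $T=\lfloor e^{cL}\rfloor$ is at most $(T+1)C_1e^{-b_0L}$.

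Let $f(w)$ denote this visiting probability for the chain started at $w$. Markov's inequality applied to $f$ defines
\[
W_L=\bigl\{w:\ f(w)\le e^{-cL}\bigr\}.
\]
Then $\pi_L(W_L^c)\le 2C_1e^{(2c-b_0)L}$, which is exponentially small once $c<b_0/3$. The set $W_L$ depends on the kernel, as the statement permits. On trajectories avoiding $\mathcal B_L$, Step~1 and summation over at most $T$ steps keep the direction within $TC_ke^{-a_0L}\le e^{-cL}$ of $[v(G_L(w))]$, provided $c<a_0/2$ and $L\ge L_0$.

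\textbf{Step 3 (the witnessing event).} By the proof of Theorem~\ref{thm:local}, the stationary law $\eta$ of $\Lambda$ is nonatomic. Since projective space is compact, there is a radius $\rho>0$ with $\eta(\overline{\mathrm{Ball}}(x,2\rho))\le1/16$ for every line $x$. By~\eqref{eq:common} and~\eqref{eq:sphere}, the law of $[v(G_L)]$ under $\pi_L$ is within $O(e^{-\kappa L})$ of $\eta$ after the coupling of Theorem~\ref{thm:local}, so balls of radius $\rho$ have $\pi_L$-mass at most $1/8$ for $L\ge L_0$. Take
\[
A_w=\bigl\{w':\ d\bigl([v(G_L(w'))],[v(G_L(w))]\bigr)\le\rho\bigr\}.
\]
For $w\in W_L$ and $t\le T$, Step~2 gives $P_L^t(w,A_w)\ge1-e^{-cL}$. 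Hence
\[
\norm{P_L^t(w,\cdot)-\pi_L}_{\rm TV}\ge 1-e^{-cL}-\tfrac18\ge\tfrac34-e^{-cL},
\]
with room to spare. All constants depend only on $\A$ and $k$, through $a_0,b_0,C_1,\kappa,C_k$ and $\eta$.

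The step I expect to be the main obstacle is Step~1, specifically making the singular-direction perturbation bound uniform and deterministic. It must hold for every $g\in S_k$ whenever $\sigma(G)$ is small, including control of the case where $\sigma$ is small but $s_1(gG)$ is comparable to $s_2(gG)$, which cannot occur because of the $\chi_k^2$ factor. The measure-theoretic Step~2, by contrast, is a routine union bound.
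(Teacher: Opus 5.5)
Your proof is correct. It shares the paper's skeleton: the deterministic stability estimate for the top singular direction under left multiplication by one of finitely many $p'p^{-1}$ (the paper's \eqref{eq:stability}), the exceptional set from \eqref{eq:gap}, stationarity of $\pi_L$ under the doubly stochastic kernel, a union bound over $T$ steps, and Markov's inequality over the initial word.

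Where you genuinely differ is the witness event. The paper builds a grid in $\mathbb R^{m^2}$ fine enough that every cell has $\pi_L$-mass at most $1/4$ for large $L$; this uses nonatomicity, a finite covering and the closed-set inequality. It then translates the grid randomly so that a strip of width $\sqrt2K_{\rm pre}e^{-a_0L}$ along the faces has small mass, and bounds the stationary probability that the cell label changes in a single step. It never tracks how far the direction wanders.

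You instead sum the per-step displacement $\chi_k\sigma(G)$ along trajectories that avoid $\{\sigma>e^{-a_0L}\}$. This keeps the direction within $\rho$ of its start as long as $c<a_0$. You then test total variation on a ball about the starting direction, using two facts: the uniform small-ball property of a nonatomic law on a compact space, and the $L^2$ coupling from the proof of Theorem~\ref{thm:local}.

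Your route avoids the grid, the random translation and the covering argument, and gives the stronger bound $7/8-e^{-cL}$. The paper's route instead yields a single partition, independent of the starting word and the kernel, whose labels are exactly preserved off an exponentially small set. That is what it reuses to build the fixed sets $A_L$ with exponentially small vertex boundary in \eqref{eq:vertex}; a ball centered at the initial direction would need an extra averaging over radii to supply this.

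The step you flag as the main obstacle is short in the paper. Write $v'=G^TJ^Tu'/s_1(JG)$ and project onto $v^\perp$ using $(I-vv^T)G^T=E^T$. This gives $d([v'],[v])\le\operatorname{cond}(J)\sigma(G)$ deterministically. If you prefer Davis--Kahan, compare against the rank-one part $s_1^2\|gu\|^2vv^T$ as you indicate, not against $G^TG$, since $(gG)^T(gG)-G^TG$ is not small.
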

\begin{proof}
Write $G=s_1uv^T+E$ with $Ev=0$ and $\|E\|=s_2(G)$, as in the proof of Theorem~\ref{thm:local}, so that $(I-vv^T)G^T=E^T$. For any invertible $J$, a selected top right singular vector $v'$ of
$JG$ and the corresponding left vector $u'$, $JGv'=s_1(JG)u'$; projecting $v'=G^TJ^Tu'/s_1(JG)$ onto $v^\perp$, using $s_1(JG)\ge s_1(G)/\|J^{-1}\|$ and $\|\mathcal Z(x)-\mathcal Z(y)\|_F=\sqrt2\,d([x],[y])$ (proof of Theorem~\ref{thm:local}), gives
\begin{equation}\label{eq:stability}
 \norm{Z(JG)-Z(G)}_F\le\sqrt2\operatorname{cond}(J)\sigma(G).
\end{equation}
A prefix replacement multiplies $G$ on the left by one of finitely many
$J=p'p^{-1}$, where $p,p'$ are products of $k$ letters. Define
\begin{equation}
 K_{\rm pre}=\max_{p,p'}\operatorname{cond}(p'p^{-1}),
 \qquad \operatorname{cond}(J)=\|J\|\,\|J^{-1}\|.
\end{equation}
This maximum is finite: there are at most $D^{2k}$ pairs of invertible
prefix products. By~\eqref{eq:gap}, every such replacement moves $Z$ by
at most $\sqrt2K_{\rm pre}e^{-a_0L}$ outside a set of mass $C_1e^{-b_0L}$.

The limiting law $\eta$ of $Z(G_L)$ is nonatomic (proof of Theorem~\ref{thm:local}) and supported on the compact set of rank-one projectors; these imply that a sufficiently
fine fixed grid in $\mathbb R^{m^2}$ has limiting mass below $1/8$
in small neighborhoods covering each cell. A finite covering and the
closed-set inequality for weak convergence ensure cell masses at most
$1/4$ for all sufficiently large $L$, uniformly in grid translation.
For a random translation of mesh $h$, a strip of width $s_L$ along
coordinate faces has expected mass at most $2m^2s_L/h$.
Choose a translation attaining this bound with $s_L=\sqrt2K_{\rm pre}e^{-a_0L}$.
Outside the strip and the exceptional set, every prefix replacement
keeps the same cell label. Bulk updates preserve the label exactly.
The word law before the prefix update is $\pi_L$, because the bulk kernel is doubly stochastic; stationarity therefore bounds the one-step label-change probability by
$C_0e^{-c_0L}$, $c_0=\min(a_0,b_0)$, where $C_0$ depends only on $\A$, $k$ and the mesh $h$.

A union bound up to $T=\lfloor e^{c_0L/3}\rfloor$ makes the stationary
probability of any change at most $C_0e^{-2c_0L/3}$. Markov's inequality
over the initial word leaves a good set of mass
$1-C_0e^{-c_0L/3}$ on which the conditional change probability is at
most $e^{-c_0L/3}$. The initial cell has equilibrium mass at most $1/4$.
Testing total variation on it proves~\eqref{eq:global}, simultaneously
for all times up to $T$, with $c=c_0/3$ and $C=C_0$.
\end{proof}

Independent Haar averaging within equal-product classes of contiguous
blocks sends a basis state to the uniform mixture of words with the same block product, and so gives exactly the required classical kernels on diagonal states.
So does instantaneous uniform mixing in sectors contained in product
fibres. These averaged kernels map basis states to diagonal states, so for a basis state the quantum trace-norm distance is twice the
total variation in~\eqref{eq:global}. Wang et al.\ define the thermalization time as the largest, over initial states supported in one Krylov sector, of the first time at which the trace-norm distance to the maximally mixed state falls below $1/2$~\cite[Eqs.~(4)--(6)]{Wang}. Every basis state is supported in
one such sector, and hence is admissible in that definition. The bound gives
$t_{\rm th}\ge\lfloor e^{cL}\rfloor+1$ for all sufficiently large $L$.
This statement concerns the averaged channel; it is distinct from the
individual-circuit local-memory theorem.

For~\eqref{eq:alphabet}, the group generated by $I+2E_{12}$ and
$I+2E_{21}$ is free: on the projective line, including infinity,
nonzero powers of $x\mapsto x+2$ map $|x|<1$ into $|x|>1$, and
nonzero powers of $x\mapsto x/(2x+1)$ do the reverse. The ping-pong
lemma (two maps with disjoint attracting regions generate a free group) shows these two letters generate a free group of rank two, and a group containing one is nonamenable. Let $\mu$ be the uniform letter law on $\A_m$ and $\mu^{*L}$ its $L$-fold convolution, the law of a product of $L$ independent letters. Kesten's results~\cite{KestenWalk,KestenAmenability}
give spectral radius $\rho<1$ for the convolution operator of a symmetric lazy step law supported on a generating set of a nonamenable group, so
every fibre has uniform mass at most
$\sup_g\mu^{*L}(g)\le\rho^L$. This is the standard group-sector
fragmentation mechanism also used in~\cite[Prop.~2]{Wang}.
The one-step letter law assigns mass $1/D$ to the identity letter;
this is the self-loop making the walk lazy, not the mass of an identity-product fibre.
For $m\ge3$, the transvections $I+E_{12}$ and $I+E_{13}$ commute, since $E_{12}E_{13}=0=E_{13}E_{12}$, and
generate $\mathbb Z^2$; hyperbolic groups contain no $\mathbb Z^2$ subgroup, so these groups are not word hyperbolic, and Wang et al.'s Theorem~4, which covers hyperbolic groups, does not apply to them.

On the left Cayley graph of $\SL(m,\mathbb Z)$ with alphabet~\eqref{eq:alphabet},
neighbors have the form $g\leftrightarrow ag$, $a\in\A_m$.
This graph simply records multiplication by one allowed letter. The
distribution $\mu^{*L}$ is often called the group's heat kernel at time $L$.
The two-sided vertex boundary $\partial_v A$ consists of endpoints
of edges with one endpoint in $A$ and the other outside $A$.
The same grid construction, using~\eqref{eq:stability} for these finitely
many left multipliers, yields sets $A_L$ of group elements (those whose projector $Z$ lies in the accumulated cells) with
\begin{equation}\label{eq:vertex}
 \tfrac14\le\mu^{*L}(A_L)<\tfrac12,\qquad
 \mu^{*L}(\partial_v A_L)\le C'e^{-c'L}.
\end{equation}
To obtain the mass interval, accumulate grid cells until their mass
first reaches $1/4$; each has mass at most $1/4$, so the total is
strictly below $1/2$. Every endpoint of a crossing edge is either
exceptional (has $\sigma(g)>e^{-a_0L}$) or in the strip, by applying the stability estimate with $J$ a letter or its inverse to that
endpoint. This directly bounds the two-sided vertex boundary; no
conversion from an edge-flow bound is used. The sets may depend on $L$.
Qualitative group-walk non-expansion is already proved by
Fr\k{a}czyk and van Limbeek~\cite[Thm.~2.4, published version]{FraczykVanLimbeek}.\footnote{The same statement is Theorem~2.3 in arXiv version~1.}
To our knowledge, Equation~\eqref{eq:vertex} is the first exponential rate for this class,
whereas their theorem applies qualitatively to all finitely generated groups.
It has the intended small-boundary
form of Wang et al.'s Conjecture~2; their printed Eq.~(40) has the
opposite inequality sign, unlike the surrounding discussion and
Theorem~4~\cite{Wang}.

\section{Other reference ensembles}\label{app:ensembles}
Uniform weights are not essential to the localization mechanism.
Let every letter have positive probability $p_a$, and replace $\pi_L$
by the product law $p^{\otimes L}(w)=\prod_ip_{a_i}$; for uniform $p$ this is $\pi_L$, $\rho_p$ below is $I/D^L$, and $\langle\cdot,\cdot\rangle_p$ is $\ip\cdot\cdot$. The hypotheses of Theorem~\ref{thm:local} are unchanged, since the support of the letter law is still $\A$; only the constants and the stationary measure change. The same random-product and conditioning arguments
give localization with constants depending on $p$. Define
$\rho_p=(\diag p)^{\otimes L}$ and
$\langle X,Y\rangle_p=\Tr(\rho_pX^\dagger Y)$. If a quantum channel $\Phi$ preserves
$\rho_p$, its Heisenberg adjoint $\Phi^*$, defined by $\Tr(\Phi(\rho)X)=\Tr(\rho\Phi^*(X))$, is unital and completely positive and contracts this norm: the Kadison--Schwarz inequality~\cite{Kadison,Choi} $\Phi^*(X)^\dagger\Phi^*(X)\le\Phi^*(X^\dagger X)$ (the operator Schwarz inequality of Section~\ref{c-sec:setting}) gives $\|\Phi^*X\|_{2,\rho_p}^2\le\Tr(\rho_p\Phi^*(X^\dagger X))=\Tr(\Phi(\rho_p)X^\dagger X)=\|X\|_{2,\rho_p}^2$, where $\|X\|_{2,\rho_p}^2=\langle X,X\rangle_p$. In this appendix define explicitly
\begin{equation}
 C_\ell=\Tr\big(\rho_p F_\ell\Phi^*(F_\ell)\big),\qquad
 e_j=\norm{(\mathcal E_j^*-\id)H_L}_{2,\rho_p}.
\end{equation}
Telescoping in the Heisenberg order, $\Phi^*X-X=\sum_j\mathcal E_1^*\cdots\mathcal E_{j-1}^*(\mathcal E_j^*-\id)X$, and using that each adjoint factor contracts, gives the finite-block
certificate~\eqref{eq:finitecert} in the weighted norm, with $H_L,F_r$ centered under $p^{\otimes L}$, $v_\ell=\|F_\ell\|_{2,\rho_p}^2$ and $d_{L,\ell}=\|H_L-F_\ell\|_{2,\rho_p}$. (The Schr\"odinger-order telescoping of Theorem~\ref{thm:linear} does not transfer, because $\Phi$ itself need not contract the weighted norm.) For $0\le\zeta\le1$
the state $\rho_p(I+\zeta F_r)$ is positive, since $|f_r|\le1$, and normalized, since $F_r$ is $p$-centered; its signal is $\Tr(F_r\Phi(\rho_p(I+\zeta F_r)))=\zeta C_r$.

This extension requires every channel to preserve the same reference
state. An arbitrary product-preserving unitary need not preserve
$\rho_p$ when $p$ is nonuniform. Theorem~\ref{thm:overlapreset} used that each reset is an orthogonal projection for $\ip\cdot\cdot$; in the weighted norm the same proof needs $P_j=P_j^*$ for $\langle\cdot,\cdot\rangle_p$, which the uniform reset~\eqref{c-eq:reset} does not satisfy for nonuniform $p$ (it does not even preserve $\rho_p$), whereas replacement of a block by its marginal $\rho_{p,A}$ has an orthogonal Heisenberg projection
$X\mapsto I_A\otimes\Tr_A[(\rho_{p,A}\otimes I)X]$. No
generic interacting finite-temperature or amplitude-damping robustness
theorem follows merely by changing the norm.

\section{Explicit constants for the thirteen-state model}\label{sec:explicit}
For $\A_3$ a different charge, the fixed-image charge of Table~\ref{tab:charges}, admits fully numerical localization
constants. Send the fixed input $e_1$ through $G^T$, so that $G^Te_1$ is the first row of $G$, and define
\begin{equation}\label{eq:rationalcharge}
 u(G)=\frac{[(G^Te_1)_1]^2}{\norm{G^Te_1}^2}
      =\frac{G_{11}^2}{G_{11}^2+G_{12}^2+G_{13}^2}.
\end{equation}
Let $\widetilde H_L=\diag(u(G_L)-\E u(G_L))$ and $\widetilde F_r=\diag(u(B_r)-\E u(B_r))$, the latter extended by the identity outside the last $r$ sites, be the centered diagonal
operators obtained from $u(G_L)$ and $u(B_r)$, respectively.
They are rational-valued for integer labels, and since $u\in[0,1]$ both have $2$-norm at most $1/2$. Below $v_*$ denotes $\lim_{L\to\infty}\|\widetilde H_L\|_2^2$, which the proof identifies with the stationary coordinate variance of Proposition~\ref{prop:variance}. They differ at finite length
from the singular-direction charges, but have the same stationary
projective limit: $[G_L^Te_1]=[B_r^T(A^Te_1)]$ converges to the random line $\Lambda$ of~\eqref{eq:common} just as the top singular direction does. Conservation of $G_L$ preserves $\widetilde H_L$.
Normalized-Gram conservation alone need not preserve this alternative
charge; its conservation must be checked separately.

\begin{theorem}[Numerical localization constants]\label{thm:explicit}
For the uniform thirteen-letter alphabet and all $L\ge r\ge1$,
\begin{equation}\label{eq:explicitlocal}
 \norm{\widetilde H_L-\widetilde F_r}_2\le e^{-r/10000}.
\end{equation}
The limiting variance satisfies $v_*\ge1/756$, and
$|\norm{\widetilde F_r}_2-\sqrt{v_*}|\le e^{-r/10000}$.
Thus this charge has the explicit constants $C_{\rm loc}=1$ and $\kappa=10^{-4}$.
\end{theorem}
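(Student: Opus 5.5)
The plan is to replace the non-explicit inputs \eqref{eq:gap}--\eqref{eq:common} by a single explicit one-letter projective contraction estimate and iterate it. For two nonzero vectors $x,y\in\mathbb R^3$ and a uniform letter $M\in\A_3^T=\A_3$, I would aim to prove a certificate of the form
\begin{equation*}
 \E_M\, d([Mx],[My])^{s}\le \rho\, d([x],[y])^{s},\qquad \rho<1,
\end{equation*}
for a small exponent $s$ (something like $s=3/50$, as is later used) and explicit $\rho$ with $-\log\rho\ge 2\cdot10^{-4}$ after allowing for several steps. Since a single shear can expand angles, the one-step bound may fail pointwise. I would therefore first try a $k$-step average over $13^k$ words for small $k$, and failing that a weighted cocycle $d^s\cdot\phi(x)\phi(y)$ with an explicit positive weight $\phi$. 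Concretely, write $d([Mx],[My])=\|Mx\wedge My\|/(\|Mx\|\|My\|)$ and use $\det M=1$, so $\|Mx\wedge My\|=\|\Lambda^2M(x\wedge y)\|$. The ratio $d([Mx],[My])/d([x],[y])$ is then an explicit algebraic function of the unit vectors $x,y$ and the $2$-vector $x\wedge y$. The required inequality averaged over letters becomes a polynomial inequality in the coordinates after clearing the fractional power, for instance via $t^{s}\le 1+s(t-1)$ or a rational upper bound on $t^{s}$. This inequality I would certify by the change-of-variables/nonnegative-coefficient method described in Section~\ref{sec:explicit-summary}, checked in exact arithmetic. I expect this certificate to be the main obstacle: the contraction is weak (hence the tiny rate $10^{-4}$), the constraint set (two unit vectors and their wedge) is not a simplex, and the letter $I$ contributes ratio one, so the margin must come from the twelve shears.

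Given the contraction, iteration proceeds as in the proof of Theorem~\ref{thm:local}. Conditional on the prefix $A$, the inputs $x=A^Te_1$ and $y=e_1$ are fixed, and the suffix letters act through $B_r^T=M_1\cdots M_r$ (reverse transpose order), with fresh independent letters at each step. Markov/tower iteration then gives $\E d([G_L^Te_1],[B_r^Te_1])^s\le\rho^{r}$. Next I would pass to the charge. The map $c(x)=x_1^2/\|x\|^2$ satisfies $|c(x)-c(y)|\le d([x],[y])$, because it is the squared cosine with $e_1$ and differences of $\cos^2$ are bounded by the sine of the angle difference. Since $|c|\le1$, this gives $|c(x)-c(y)|^2\le d\le d^{s}$. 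Hence $\E|u(G_L)-u(B_r)|^2\le\rho^{r}$, and centering is an $L^2$ contraction. Choosing $\rho\le e^{-1/5000}$ gives \eqref{eq:explicitlocal}.

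For the variance statements, I would compare $u(B_r)$ with $c(B_r^TY)$, where $Y$ is an independent line distributed by the stationary law of the transpose walk. By stationarity, $B_r^TY$ has the stationary law for every $r$. The same contraction bound gives $\E|u(B_r)-c(B_r^TY)|^2\le\rho^{r}$, so $\|\widetilde F_r\|_2$ differs from $\sqrt{\Var c(Y)}$ by at most $\rho^{r/2}$. Letting $L\to\infty$ in \eqref{eq:explicitlocal} identifies $v_*=\lim\|\widetilde H_L\|_2^2$ with $\Var c(Y)$, the first-coordinate stationary variance. Proposition~\ref{prop:variance} with $m=3$, $D=13$ then gives $v_*\ge1/756$, and the triangle inequality yields $|\|\widetilde F_r\|_2-\sqrt{v_*}|\le e^{-r/10000}$.
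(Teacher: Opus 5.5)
You have the same outer architecture as the paper: a uniform one-letter fractional-moment contraction, iterated by the tower property conditional on the prefix with inputs $A^Te_1$ and $e_1$, then $|c(x)-c(y)|\le d$, $d^2\le d^s$, centering, coupling to an independent stationary line, and Proposition~\ref{prop:variance} for $v_*\ge1/756$. The gap is the one-letter contraction. It is the entire quantitative content of the theorem, and you leave it unproved.

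The one concrete reduction you offer cannot work. The tangent-line bound $t^s\le1+s(t-1)$ gives $\E R^s\le1+s(\E R-1)$, where $R_g=d([gx],[gy])/d([x],[y])$. So it needs $\sup_{x,y}\E R_g<1$, and that is false. Take $x=(1,1,1)/\sqrt3$ and let $y\to x$ with $x\times y$ along $w=(1,-1,0)/\sqrt2$. Then $R_g\to\norm{g^{-T}w}/\norm{gx}^2$. Summing over the thirteen letters gives $\E R_g\to(5+\sqrt{10}/2+3\sqrt2/2+2\sqrt6)/13\approx1.046$, although $\E\log R_g\approx-0.053$ there. Small moments contract only through the concavity that the tangent line throws away, so any valid certificate must capture the logarithmic drift. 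A joint polynomial inequality in $(x,y,x\wedge y)$ with a fractional power is both hard and unnecessary.

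The missing idea is to take logarithms and decouple. Your $\Lambda^2M$ identity is $gx\times gy=g^{-T}(x\times y)$ for $\det g=1$. Hence $\log R_g=\log\norm{g^{-T}w}-\log\norm{gx}-\log\norm{gy}$, with $w$ the unit normal. The uniform law on $\A_3$ is invariant under $g\mapsto g^{-T}$, so $\E\log R_g=f(w)-f(x)-f(y)$ with the single-vector function $f(x)=\E\log\norm{gx}$.

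Pairing the two signs of each shear gives $f=\tfrac1{26}\log P(q)$, where $q_i=x_i^2$ and $P$ is a degree-$12$ polynomial. A Bernstein-coefficient certificate on the simplex, $26/5\le P\le16$, then yields $\E\log R_g\le-\log(13/10)/13<-1/50$, uniformly over $[x]\ne[y]$. So the one-step bound does hold pointwise, and neither $k$-step averaging nor a weighted cocycle is needed.

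Finally, the bounded distortion $|\log R_g|\le\log\operatorname{cond}(g)\le\log\tfrac{3+\sqrt5}2<1$ lets you apply Hoeffding's lemma, $\log\E e^{sX}\le s\E X+s^2/2$. At $s=1/50$ this gives $\E R_g^{1/50}<e^{-1/2500+1/5000}=e^{-1/5000}$, which is exactly the $\rho$ your iteration requires. Note that $s=3/50$ does not work with this crude drift and range, since $-3/2500+9/5000>0$; that exponent belongs to the paper's sharper orthogonality-based certificate.
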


\begin{proof}
Let $g$ be one uniform letter and set $f(x)=\E\log\norm{gx}$ for
unit $x\in\mathbb R^3$. With $q_i=x_i^2$ and $S=q_1+q_2+q_3$ (kept as a symbol so that $P$ below is homogeneous, then evaluated at $S=1$): the letter $I\pm E_{ij}$ changes only coordinate $i$, $x_i\mapsto x_i\pm x_j$, so $\|(I\pm E_{ij})x\|^2=S+q_j\pm2x_ix_j$, and the two signs multiply to $(S+q_j)^2-4q_iq_j$; the identity contributes $\log1=0$, and $\log\|gx\|=\tfrac12\log\|gx\|^2$ over thirteen letters gives the factor $1/26$. Thus
pairing the two signs of each transvection gives
\begin{equation}\label{eq:stretchpoly}
 f(x)=\frac1{26}\log P(q),\qquad
 P(q)=\prod_{i\ne j}\big((S+q_j)^2-4q_iq_j\big),\quad S=1.
\end{equation}
The homogeneous degree-12 polynomial has the exact bounds
\begin{equation}\label{eq:polybounds}
 \frac{26}{5}\le P(q)\le16\qquad(q_i\ge0,\ S=1).
\end{equation}
For clarity, a complete coefficient certificate is provided below.

For distinct projective inputs represented by unit vectors $x,y$, let
$w=(x\times y)/\norm{x\times y}$ and
$R_g=d([gx],[gy])/d([x],[y])$. For unit vectors $d([x],[y])=\|x\times y\|$, and $gx\times gy=(\det g)\,g^{-T}(x\times y)$ with $g^{-T}=(g^{-1})^T$; since $\det g=1$,
\begin{equation}
 R_g=\frac{\norm{g^{-T}w}}{\norm{gx}\norm{gy}}.
\end{equation}
The letter law is invariant under $g\mapsto g^{-T}$, since $(I\pm E_{ij})^{-T}=I\mp E_{ji}$ is again a letter, hence
\begin{align}
 \E\log R_g
 &=f(w)-f(x)-f(y)\nonumber\\
 &\le\frac{\log16-2\log(26/5)}{26}
 =-\frac{\log(13/10)}{13}
 \le-\frac6{299}<-\frac1{50}.\label{eq:logdrift}
\end{align}
We used $\log z\ge2(z-1)/(z+1)$ for $z\ge1$.

The projective sine-distance ratio lies between
$\operatorname{cond}(g)^{-1}$ and $\operatorname{cond}(g)$, where $\operatorname{cond}(g)=s_1(g)/s_3(g)$ is the ratio of largest to smallest singular value.
Indeed, restrict $g$ to the plane spanned by $x,y$; its area expansion
is the product of its two singular values and each vector's norm is
at least the smaller one. Apply the same argument to $g^{-1}$ for
the lower bound. For an elementary shear the condition number is
$(3+\sqrt5)/2<e$, while the identity has condition number one.
Therefore $X=\log R_g\in[-1,1]$.

For a random variable in $[-1,1]$, its log moment-generating function $\psi(s)=\log\E e^{sX}$
has second derivative equal to the variance of $X$ under the tilted law with density $e^{sX}/\E e^{sX}$, at most one; this is Hoeffding's lemma~\cite{Hoeffding}.
Consequently $\log\E e^{sX}\le s\E X+s^2/2$ for $s\ge0$.
Taking $s=1/50$ and using~\eqref{eq:logdrift} yields the uniform
one-step certificate
\begin{equation}\label{eq:explicitq}
 \sup_{[x]\ne[y]}\E
 \left(\frac{d([gx],[gy])}{d([x],[y])}\right)^{1/50}
 \le e^{-1/5000}.
\end{equation}
Indeed the logarithmic upper bound is at most
$-301/1495000<-299/1495000=-1/5000$.

Write $G_L=AB_r$, so that $G_L^Te_1=B_r^T(A^Te_1)$ and $B_r^Te_1$ are the images of the two inputs $A^Te_1$ and $e_1$ under the same product $B_r^T=a_L^T\cdots a_{L-r+1}^T$. Conditional on the prefix, this input pair
is independent of the transpose suffix. Transposition
preserves the letter law. Applying~\eqref{eq:explicitq} to one letter at a time, conditional on the letters already applied, and using the tower property $r$ times, with initial
distance at most one, gives
\begin{equation}
 \E d([G_L^Te_1],[B_r^Te_1])^{1/50}\le e^{-r/5000}.
\end{equation}
Coincident inputs remain coincident and need no ratio. For unit $v,z$,
$|v_1^2-z_1^2|\le\norm{vv^T-zz^T}=d([v],[z])$.
Since $d^2\le d^{1/50}$ on $[0,1]$, the raw charge difference has
second moment at most $e^{-r/5000}$. Centering contracts $L^2$, and taking square roots proves
\eqref{eq:explicitlocal}.

Finally, couple $e_1$ to an independent stationary projective input, a random line $\Lambda_0$ with the stationary law of the transpose walk (Section~\ref{sec:localization}), realized on the same probability space and independent of the letters,
and pass both through the same transpose product. By stationarity the second output
has stationary law at every length; the same estimate bounds their centered scalar
difference in $L^2$ by $e^{-r/10000}$, and the reverse triangle inequality gives the bound on $|\|\widetilde F_r\|_2-\sqrt{v_*}|$. The limiting variance is thus that of
the stationary projector coordinate, to which
Proposition~\ref{prop:variance} applies.
\end{proof}

\paragraph{Exact polynomial certificate.}
Write, with multi-index $\alpha=(a,b,c)$, $|\alpha|=a+b+c$, $q^\alpha=q_1^aq_2^bq_3^c$ and $\binom{12}{\alpha}=12!/(a!b!c!)$,
$P(q)=\sum_{|\alpha|=12}c_\alpha q^\alpha
      =\sum_{|\alpha|=12}b_\alpha\binom{12}{\alpha}q^\alpha$.
The Bernstein weights $\binom{12}{\alpha}q^\alpha$ are nonnegative
and sum to one on $S=1$ by the multinomial theorem, so $P(q)$ is a weighted average of the $b_\alpha$ and lies between their minimum and maximum (Appendix~\ref{app:reproducibility}). The integer $c_\alpha$ and the rational $b_\alpha$ are produced by \texttt{anc/certify\_one\_step.py}; the table lists $b_\alpha$. Permutation symmetry reduces the 91 coefficients
to the following 19 values; the remaining values are their permutations.
They follow by multiplying the six quadratics in~\eqref{eq:stretchpoly}
with integer arithmetic.
\begin{center}\small
\begin{tabular}{c r@{\qquad}c r}
$\alpha$ & $b_\alpha$ & $\alpha$ & $b_\alpha$\\\hline
$(12,0,0)$ & $16$ & $(7,5,0)$ & $86/11$\\
$(11,1,0)$ & $12$ & $(7,4,1)$ & $1036/165$\\
$(10,2,0)$ & $28/3$ & $(7,3,2)$ & $2813/495$\\
$(10,1,1)$ & $296/33$ & $(6,6,0)$ & $3659/462$\\
$(9,3,0)$ & $441/55$ & $(6,5,1)$ & $4306/693$\\
$(9,2,1)$ & $1211/165$ & $(6,4,2)$ & $5491/990$\\
$(8,4,0)$ & $3809/495$ & $(6,3,3)$ & $2092/385$\\
$(8,3,1)$ & $361/55$ & $(5,5,2)$ & $11593/2079$\\
$(8,2,2)$ & $3089/495$ & $(5,4,3)$ & $1657/315$\\
 & & $(4,4,4)$ & $10061/1925$\\\hline
\end{tabular}
\end{center}
The minimum is $10061/1925>26/5$ and the maximum is $16$,
proving~\eqref{eq:polybounds} without sampling the simplex.

\paragraph{Numerical implications and their scale.}
Lemma~\ref{lem:spatial}, Theorem~\ref{thm:linear}, Corollary~\ref{thm:reset} and Corollary~\ref{cor:prefix} apply to the tilded charges with $C_{\rm loc}=1$ and $\kappa=10^{-4}$, since they use only $\|\widetilde H_L-\widetilde F_r\|_2\le\delta_r$, the $2$-norm bound $1/2$, and the suffix support of $\widetilde F_r$.
For $v_{\min}=1/756$, \eqref{eq:ell0} gives the explicit sufficient readout length
\begin{equation}
 \ell_0=\lceil10000\log6048\rceil=87075.
\end{equation}
If $L\ge\ell\ge\ell_0$, the bistochastic channels between resets preserve
$\widetilde H_L$, one partial prefix reset $(1-p)\id+pP$ of strength $p\in(0,1]$ per round
leaves a buffer of $r$ sites with $1\le r<L$, the readout is $\widetilde F_\ell$ and $t$ counts rounds, then
\begin{equation}\label{eq:explicitlife}
 C_\ell(t)\ge\frac1{1512}
 \quad\text{for integer}\quad
 0\le t\le\frac{2-p}{12096p}e^{r/5000}.
\end{equation}
For complete resets and $r=87075$, since $87075>10000\log6048$ gives $e^{r/5000}>6048^2$ and $6048^2/12096=3024$, this certifies at least 3024
rounds, with $L\ge87076$ if the reset region is nonempty.
These are conservative sufficient values. They remove the unspecified
constants for this observable but do not provide a practical small-device
prediction. Sharpening the logarithmic drift, its fluctuation bound and
the finite-block readout remains necessary for that purpose.
Appendix~\ref{sec:sharper} sharpens all three parts of this estimate.

\section{Sharper contraction, signal, and readout constants}\label{sec:sharper}
The same rational charge admits improved constants by retaining the
finite-alphabet structure and using the readout overlap directly.
All statements in this section concern the uniform thirteen-letter law.

Here $v_*$ is the limiting variance of Appendix~\ref{sec:explicit}, the same stationary-coordinate variance as in Theorem~\ref{thm:local} for coordinate $j=1$.
\begin{theorem}[Improved localization and signal]\label{thm:sharper}
For the charge~\eqref{eq:rationalcharge} and all $L\ge r\ge1$,
\begin{equation}\label{eq:sharperloc}
 \norm{\widetilde H_L-\widetilde F_r}_2\le e^{-r/1400},
 \qquad v_*\ge\nu:=\frac1{19}.
\end{equation}
Writing $v_n=\norm{\widetilde F_n}_2^2$, one also has
\begin{equation}\label{eq:sharpervariance}
 |v_n-v_*|\le2e^{-n/700}.
\end{equation}
\end{theorem}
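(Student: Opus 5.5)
The plan follows the architecture of Theorem~\ref{thm:explicit}, sharpening each of its three ingredients: the one-step contraction certificate, the passage from a moment of the distance to the squared charge difference, and the stationary variance bound.

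\textbf{Step 1: a sharper one-step contraction certificate.} Theorem~\ref{thm:explicit} threw away structure at two points. It bounded $f(w)-f(x)-f(y)$ using only the global extremes of $P$, and it used the crude Hoeffding range $[-1,1]$. Here I would exploit that $w\perp x$ and $w\perp y$. For unit $x,y$ and $w=(x\times y)/\|x\times y\|$, the frame $x$, $x'$, $w$ is orthonormal, where $x'$ is the unit vector in the plane of $x,y$ orthogonal to $x$. I would then bound $\E R_g^{s}$ directly for a larger exponent, aiming at $s=1/10$ or similar. Since the alphabet has only thirteen letters, the approach is to write
\[
 R_g^2=\frac{\|g^{-T}w\|^2}{\|gx\|^2\|gy\|^2}
\]
as a rational function of the coordinates and to certify
\[
 \frac1{13}\sum_g R_g^{s}\le e^{-s\cdot(\text{rate})}
\]
by a Bernstein positivity certificate, after a change of variables on the constrained set. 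One option is to reduce by convexity, via $t^{s/2}$ concave, to quantities polynomial in $q$ together with the orthogonality constraint $\sum_i x_iw_i=0$. The target is an estimate of the form
\[
 \sup\E\left(\frac{d([gx],[gy])}{d([x],[y])}\right)^{s}\le e^{-\gamma}
 \qquad\text{with}\qquad \frac{\gamma}{s}\cdot\frac{s}{2}\cdot\ldots\ \text{arranged so that}\ \E d^{s}\le e^{-r/700}.
\]
Concretely, I would seek $\E d([G_L^Te_1],[B_r^Te_1])^{s}\le e^{-r\gamma}$ with $\gamma\ge 1/700$, using the same tower-property iteration as in Theorem~\ref{thm:explicit}. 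This step is the main obstacle. The multivariate certificate with the orthogonality constraint is where the computer-assisted integer arithmetic of \texttt{anc/} enters. If direct positivity fails, I would first try subdividing the simplex (Bernstein subdivision) before lowering $s$.

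\textbf{Step 2: from the moment bound to $L^2$ localization.} With $|u(G_L)-u(B_r)|\le d$ (Appendix~\ref{sec:explicit}) and $d^2\le d^{s}$ on $[0,1]$, the raw second moment is at most $e^{-r/700}$. Centering contracts $L^2$, so the square root gives \eqref{eq:sharperloc}. For \eqref{eq:sharpervariance}, I would couple $e_1$ with an independent stationary line $\Lambda_0$ as before. This gives $\E|U-V|\le\E d\le e^{-n/700}$, where $V$ is stationary at every $n$. Then the comparison $|\Var U-\Var V|\le\E|U^2-V^2|+|(\E U)^2-(\E V)^2|\le 2\E|U-V|$ holds since $U,V\in[0,1]$, which yields the factor $2$. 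Letting $n\to\infty$ identifies $v_*=\Var V$.

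\textbf{Step 3: the signal bound $v_*\ge1/19$.} I would refine Proposition~\ref{prop:variance} using the exact stationarity equation for the first coordinate,
\[
 v=\frac1{13}\sum_a\E\bigl(q_1(a\cdot x)-\tfrac13\bigr)^2,
\]
where only the four letters $I\pm E_{1j}$ change $q_1$. Instead of Chebyshev on a box, I would lower-bound the averaged post-step squared deviation by an explicit rational function $\Psi(q)$. The aim is an inequality $\Psi(q)\ge c_0+c_1(q_1-\tfrac13)^2+\ldots$, valid on the simplex and certified by Bernstein coefficients, such that taking expectations and using exchangeability, $\E q_i=1/3$, and $\Var q_i=v$ gives a linear inequality $v\ge\alpha+\beta v$ with $\alpha/(1-\beta)\ge1/19$. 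Finding a quadratic minorant that closes with exactly this constant is the secondary difficulty. I would search for it with a small linear program over the coefficients and then verify it in exact rational arithmetic.
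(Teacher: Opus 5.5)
Your Step~1 carries the content of the theorem, and you leave it as a search rather than an argument. The route you sketch also gives up the structure that makes a finite certificate possible. The identity $\E\log R_g=f(w)-f(x)-f(y)$ holds only for the logarithm. The logarithm turns $\|g^{-T}w\|\,\|gx\|^{-1}\|gy\|^{-1}$ into a sum, and after averaging over $g$ (using the invariance $g\mapsto g^{-T}$ of the letter law) this becomes three functions of single directions, each a function of a point of the $2$-simplex. The moment $\E R_g^{s}$ does not split this way, because all three norms see the same letter. You would therefore face a non-polynomial inequality on the four-dimensional space of pairs of lines.

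The one concrete reduction you name, concavity of $t\mapsto t^{s/2}$, bounds $\E R_g^{s}$ by $(\E R_g^2)^{s/2}$ (or by $1+\frac s2(\E R_g^2-1)$), and this bound can exceed one. Take $x=(1,-1,0)/\sqrt2$ and $y\to x$ along $(1,1,0)$, so that $w=e_3$. The thirteen ratios are $2,2$, then $0.4,0.4$, five ones, and four copies of $2\sqrt2/3$. This gives $\E R_g^2\approx1.30$ even though $\E\log R_g\approx-0.052$.

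The paper keeps the log-drift and repairs exactly the two weak points you correctly identified:
\begin{itemize}
\item Orthogonality enters through $x_i^2+w_i^2\le1$ and $y_i^2+w_i^2\le1$, together with the simplex inequality~\eqref{eq:orthogonalP}, $P(q)^2/P(r)\ge7/2$ for $q_i+r_i\le1$. That inequality is proved by Schur-convexity of $P$ (via the positive polynomial $J$) plus one binary-form certificate. It gives $\E X_g<-6/125$.
\item Hoeffding's lemma is replaced by the finite-alphabet extreme-point bound~\eqref{eq:finiteletter}, which uses $X_I=0$, $|X_g|<77/80$ and $\sum_{g\ne I}X_g\le-78/125$. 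At $s=3/50$ this yields $\E e^{sX_g}<e^{-1/700}$.
\end{itemize}
The small exponent costs nothing, since it enters only through $d^2\le d^{3/50}$ and $d\le d^{3/50}$.

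There are two smaller problems elsewhere. In Step~2 your chain is off by a factor of two. Bounding $\E|U^2-V^2|$ and $|(\E U)^2-(\E V)^2|$ separately only gives $4\E|U-V|$; for example, $U\equiv1$, $V\equiv1-\epsilon$ violates your middle inequality. Keep the terms together: $\Var U-\Var V=\E[(U-V)(U+V-\E U-\E V)]$ with $|U+V-\E U-\E V|\le2$. This is what the paper's trial-centering argument amounts to.

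In Step~3 it is false that only $I\pm E_{1j}$ change $q_1$. Since $q_1=x_1^2/\|x\|^2$, every nonidentity letter changes $\|x\|$ and hence $q_1$. Apart from this, your linear scheme is the paper's argument in disguise:
\begin{itemize}
\item By exchangeability you may sum the first-coordinate equation over coordinates, where $\sum_i(q_i-\frac13)^2=U-\frac13$ with $U=\sum_iq_i^2$.
\item The identity letter supplies $\beta=1/13$.
\item The paper's constant minorant $\frac1{12}\sum_{g\ne I}U(g\cdot x)\ge28/57$ (positivity of $57T-168D$ on the ordered cone) supplies $\alpha=12/247$, so $\alpha/(1-\beta)=1/19$.
\end{itemize}
Without that constant or an equivalent certified one, Step~3 is also only a plan rather than a proof.
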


The audit program also records the weaker intermediate choice $s=4/125$, $\kappa=1/4800$; the improved certificate below establishes the rate $1/1400$ used in the main results.

\subsection{Contraction using orthogonality}
The polynomial $P$ in~\eqref{eq:stretchpoly}, which encodes the one-letter stretch $f(x)=\E\log\|gx\|=\tfrac1{26}\log P(q)$ with $q_i=x_i^2$, satisfies the sharper bound
\begin{equation}\label{eq:sharpP}
 \frac{4096}{729}\le P(q)\le16\qquad(S=1).
\end{equation}
The lower value is attained when $q_1=q_2=q_3=1/3$.
Here is an exact certificate: by permutation symmetry it suffices to put
\begin{equation}\label{eq:cone}
 q_1=x+y+z,\qquad q_2=y+z,\qquad q_3=z,
 \qquad x,y,z\ge0.
\end{equation}
Because $P$ is symmetric under permutations of $(q_1,q_2,q_3)$, it suffices to treat $q_1\ge q_2\ge q_3\ge0$, which~\eqref{eq:cone} parametrizes with $x=q_1-q_2$, $y=q_2-q_3$, $z=q_3$. The homogeneous polynomial $729P-4096S^{12}$ then has 88 nonzero
coefficients, all positive; a polynomial with nonnegative coefficients is nonnegative for $x,y,z\ge0$, and dividing by $S^{12}$ proves the lower bound on $S=1$. The only absent degree-12 monomials have
$(x,y)$ degree zero or one, consistent with the form vanishing at the minimizer $x=y=0$. Integer multiplication of the six displayed
quadratics verifies this certificate; all coefficients and an independent
verifier are included in the ancillary files. The upper bound is already
certified in Appendix~\ref{sec:explicit}.

The normal $w$ in the cross-product identity is perpendicular to both
input directions. Retaining this relation gives the stronger estimate
\begin{equation}\label{eq:orthogonalP}
 \frac{P(q)^2}{P(r)}\ge\frac72
 \quad\text{if }q,r\text{ lie in the simplex }\{q_i\ge0,\ \textstyle\sum_iq_i=1\}\text{ and }q_i+r_i\le1.
\end{equation}
Here is a finite algebraic proof. The degree-10 polynomial
\begin{equation}
 J(q)=\frac{\partial_1P(q)-\partial_2P(q)}{q_1-q_2}
\end{equation}
is a polynomial because $\partial_1P-\partial_2P$ is antisymmetric under $q_1\leftrightarrow q_2$ and hence divisible by $q_1-q_2$; it has 66 coefficients, all positive, with minimum 48. Symmetry gives
the same sign for every coordinate pair. Since $(q_1-q_2)(\partial_1P-\partial_2P)=(q_1-q_2)^2J\ge0$, $P$ is Schur-convex:
moving mass from a smaller coordinate to a larger one increases $P$,
and $P(q)\ge P(r)$ whenever $q$ majorizes $r$ (sorted partial sums of $q$ dominate those of $r$).
For three coordinates of equal sum, majorization means that, after
sorting, the largest coordinate of $q$ is at least that of $r$, and
the smallest coordinate of $q$ is at most that of $r$. The derivative
test above proves monotonicity under the pairwise transfers generating
this order, so no unproved property of $P$ is assumed.

Put $a=\max_i r_i$ (here $a$ is a scalar and $r$ a simplex point; neither denotes the overlap or buffer of the main text). If $a\le2/3$, then $r$ is majorized by
$(2/3,1/3,0)$, so $P(r)\le P(2/3,1/3,0)=54400/6561$ by Schur-convexity, and equation~\eqref{eq:sharpP} gives
\begin{equation}
 \frac{P(q)^2}{P(r)}\ge
 \frac{(4096/729)^2}{P(2/3,1/3,0)}
 =\frac{131072}{34425}>\frac72.
\end{equation}
If $a\ge2/3$, relabel so that $r_1=a$. Then $q_1\le1-a$,
so the largest coordinate of $q$ is at least $a/2$ and its smallest
is at most $1-a$. Consequently
\begin{equation}
 P(q)\ge P(1-a,a/2,a/2),\qquad
 P(r)\le P(a,1-a,0).
\end{equation}
The remaining interval inequality is certified homogeneously.
Set $a=(2u+3v)/(3(u+v))$, $u,v\ge0$, $u+v>0$, and
\begin{equation}
 Q=(2u,2u+3v,2u+3v),\quad R=(4u+6v,2u,0),\quad S=6(u+v).
\end{equation}
All 25 coefficients of the degree-24 binary form
$2P(Q)^2-7P(R)S^{12}$ are positive. Dividing by $S^{24}$ proves
\eqref{eq:orthogonalP}. Both coefficient lists and their exact verifier
are included in the ancillary files.

For orthogonal unit vectors $z,w$, the squared-coordinate vectors
satisfy $z_i^2+w_i^2\le1$, as diagonal entries of an orthogonal
rank-two projector. Thus, with $q_i=z_i^2$, $r_i=w_i^2$ and $f=\tfrac1{26}\log P$, \eqref{eq:orthogonalP} says
$2f(z)-f(w)\ge\log(7/2)/26$. Apply this to $z=x$ and $z=y$ and
average. For the logarithmic distortion $X_g=\log R_g$, this gives
\begin{equation}
 \E X_g=f(w)-f(x)-f(y)\le-\frac{\log(7/2)}{26}<-\frac6{125}.
\end{equation}
The last comparison follows from the first four terms of
$\log(7/2)=2\sum_{k\ge0}(5/9)^{2k+1}/(2k+1)$, the series for $\log\frac{1+z}{1-z}$ at $z=5/9$; all terms are positive, so a truncation is a lower bound.
Furthermore $|X_g|<c:=77/80$. Indeed $|X_g|\le\log\operatorname{cond}(g)$ (Appendix~\ref{sec:explicit}) and, for a shear, $\operatorname{cond}(g)=(3+\sqrt5)/2<(3+161/72)/2=377/144<\sum_{k=0}^{6}(77/80)^k/k!\le e^{c}$, using
$\sqrt5<161/72$.

The identity contributes $X_I=0$. The twelve other logarithmic
distortions lie in $[-c,c]$ and have sum at most $-\Delta$, where
$\Delta=78/125=13\cdot6/125$, since $13\,\E X_g=\sum_{g\ne I}X_g$. For $s>0$, convexity gives
\begin{equation}\label{eq:finiteletter}
 \E e^{sX_g}\le
 \frac{1+6e^{-cs}+5e^{cs}+e^{(c-\Delta)s}}{13}.
\end{equation}
To see this, first increase coordinates until their sum is $-\Delta$; this can only increase $\sum e^{sx_g}$ for $s>0$.
On the resulting slice of the cube $[-c,c]^{12}$ by the hyperplane of sum $-\Delta$, a convex function has a maximum at a vertex; a vertex has eleven of its twelve coordinates at $\pm c$, so at most one coordinate is interior. An all-at-bounds vertex
would have an even integer multiple of $c$ as its sum, which cannot
equal $-\Delta$. Since $0<\Delta<2c$,
that vertex has six coordinates $-c$, five coordinates $c$, and
one coordinate $c-\Delta$.

Take $s=3/50$. A rational bound for the right-hand side is
\begin{equation}
 \E e^{sX_g}\le
 \frac{379633260899556669642483040067}{380186475500000000000000000000}
 <\frac{699}{700}=1-\frac1{700}<e^{-1/700}.
\end{equation}
This comparison requires no transcendental numerical evaluation:
for $t\le0$ use $e^t\le\sum_{k=0}^4t^k/k!$, and for $0\le t<1$
use
\begin{equation}
 e^t\le\sum_{k=0}^4\frac{t^k}{k!}
          +\frac{t^5}{120(1-t/6)}.
\end{equation}
Substitution in~\eqref{eq:finiteletter} proves the stated rational bound.
The shared-suffix argument of Appendix~\ref{sec:explicit} therefore gives
both
\begin{equation}\label{eq:sharperraw}
 \E|u(G_L)-u(B_r)|^2\le e^{-r/700},\qquad
 \E|u(G_L)-u(B_r)|\le e^{-r/700}.
\end{equation}
The second estimate uses $d\le d^{3/50}$ on $[0,1]$;
the first uses $d^2\le d^{3/50}$. Centering and taking a square root
prove the localization part of~\eqref{eq:sharperloc}.

Coupling instead to the unique stationary law used in
Proposition~\ref{prop:variance}, with $U$ the squared first coordinate of $[B_n^Te_1]$ (so $\Var U=v_n$) and $V$ the same for $[B_n^T\xi]$, $\xi$ an independent line of stationary law (so $V$ is stationary and $\Var V=v_*$), gives $\E|U-V|\le e^{-n/700}$, the same first-moment
estimate for the finite and stationary coordinate variables.
For any coupled $U,V\in[0,1]$,
\begin{equation}
 |\Var U-\Var V|\le2\E|U-V|.
\end{equation}
Indeed use $\E V$ as a trial centering constant for $U$, bound the
difference of the two squares, and then interchange $U,V$.
This proves~\eqref{eq:sharpervariance}.

\subsection{A stronger stationary variance certificate}
For a projective vector let $q_i=x_i^2/\norm x^2$ and define
$U(q)=\sum_iq_i^2$. This is the concentration of its squared coordinates,
not the purity of its rank-one quantum projector. We certify the
pointwise bound
\begin{equation}\label{eq:outgoingU}
 \frac1{12}\sum_{i\ne j}\sum_{\epsilon=\pm1}
 U\big([(I+\epsilon E_{ij})x]\big)\ge\frac{28}{57}.
\end{equation}
For completeness, the following rational polynomial construction specifies
the entire finite certificate. Allow homogeneous $q_i\ge0$, put
$S=q_1+q_2+q_3$, and for $\{i,j,k\}=\{1,2,3\}$ define
\begin{align}
 A_{ij}&=S+q_j,& R_{ij}&=q_j+q_k,& H_{ij}&=4q_iq_j,\\
 B_{ij}&=A_{ij}^2-H_{ij},\\
 N_{ij}&=B_{ij}^2-2R_{ij}A_{ij}B_{ij}
 +(R_{ij}^2+q_j^2+q_k^2)(A_{ij}^2+H_{ij}).
\end{align}
The average of the two signed outgoing values of $U$ is
$N_{ij}/B_{ij}^2$. This follows by writing each outgoing value as
\begin{equation}
 1-\frac{2R_{ij}}{A_{ij}\pm2\sqrt{q_iq_j}}
 +\frac{R_{ij}^2+q_j^2+q_k^2}
       {(A_{ij}\pm2\sqrt{q_iq_j})^2}
\end{equation}
and averaging the signs. The denominators are positive for $S>0$.
Set, in this subsection only ($D$ and $T$ here are polynomials, not the site dimension or a channel),
\begin{equation}
 D(q)=\prod_{i\ne j}B_{ij}^2,\qquad
 T(q)=\sum_{i\ne j}N_{ij}\prod_{(a,b)\ne(i,j),\,a\ne b}B_{ab}^2.
\end{equation}
The expression on the left of~\eqref{eq:outgoingU} is $T/(6D)$.
After~\eqref{eq:cone}, the degree-24 polynomial $57T-168D$ has
325 coefficients, all strictly positive; the smallest is $29952$.
This proves~\eqref{eq:outgoingU} on the ordered cone, and permutation
symmetry proves it everywhere. The full coefficient list and exact
integer-arithmetic verifier are supplied as ancillary files.

At stationarity, $\E U=\tfrac1{13}\E U+\tfrac{12}{13}\E[T/(6D)]$ (the identity letter contributes $U$ itself), so
$\E U=\E(T/(6D))\ge28/57$; here the stationary law is that of the transpose walk, and $\A_3^T=\A_3$. Permutation symmetry of the unique stationary
law gives $\E q_i=1/3$ and $\E q_i^2=\E U/3$. Consequently
\begin{equation}
 v_* = \frac{\E U}{3}-\frac19\ge\frac1{19},
\end{equation}
completing the proof of Theorem~\ref{thm:sharper}.

\subsection{Improved finite readout and reset window}
\begin{corollary}\label{cor:sharperwindow}
Let
\begin{equation}
 \ell_{\rm sharp}=\lceil700\log190\rceil=3673.
\end{equation}
For $L\ge\ell\ge\ell_{\rm sharp}$, alternate arbitrary
$\widetilde H_L$-preserving bistochastic channels with one partial reset per round
of strength $p\in(0,1]$ on the first $L-r$ sites, where $1\le r<L$.
Then
\begin{equation}\label{eq:sharperwindow}
 C_\ell(t)\ge\frac1{38}\qquad
 \text{for integers }0\le t\le\frac{2-p}{304p}e^{r/700}.
\end{equation}
The readout is $\widetilde F_\ell$.
\end{corollary}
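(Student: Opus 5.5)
The plan is to apply Theorem~\ref{thm:overlapreset} with reference observable $H=\widetilde H_L$ and readout $F=\widetilde F_\ell$, exactly as in the proof of Corollary~\ref{cor:prefix}, but feeding in the sharper inputs of Theorem~\ref{thm:sharper}.

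\textbf{Error envelope.} Put $\nu=1/19$ and $\eta_\ell=e^{-\ell/700}$. The choice $\ell\ge\lceil700\log190\rceil$ gives $\eta_\ell\le1/190=\nu/10$.

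\textbf{Signal, readout norm and overlap.} From~\eqref{eq:sharpervariance} at $n=L$ and at $n=\ell$, both $v=\|\widetilde H_L\|_2^2$ and $w=\|\widetilde F_\ell\|_2^2$ are at least $\nu-2\eta_\ell\ge\tfrac45\nu$. For the overlap I would use the exact polarization identity rather than a Cauchy--Schwarz estimate:
\begin{equation}
 a=\ip{\widetilde H_L}{\widetilde F_\ell}=\tfrac12\bigl(v+w-\|\widetilde H_L-\widetilde F_\ell\|_2^2\bigr).
\end{equation}
Combined with~\eqref{eq:sharperloc}, this yields $a\ge(v+w-\eta_\ell)/2>0$, so $a$ may be squared monotonically.

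\textbf{Leakage budget.} Each prefix reset acts only on the first $L-r$ sites, so it fixes $\widetilde F_r$. The argument of Lemma~\ref{lem:spatial}, applied to the tilded charges, then gives $b_j\le e^{-r/1400}$, hence $b_j^2\le e^{-r/700}$. With $t$ reset occurrences, $\B\le tp\,e^{-r/700}/(2-p)$. Since $304=16\cdot19$, the hypothesis on $t$ gives $\B\le\nu/16$. The case $t=0$ is included, with $\B=0$.

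\textbf{The one nonroutine step: minimizing the bound.} Write
\begin{equation}
 g(v,w)=\frac{(v+w-\eta_\ell)^2}{2(v+\B)}-w,
\end{equation}
so that Theorem~\ref{thm:overlapreset} gives $C_\ell(t)\ge g(v,w)$. I will show $g$ is smallest at the corner $v=w=\tfrac45\nu$ of the region $v,w\ge\tfrac45\nu$; the main care is the order in which the two coordinates are lowered.
\begin{itemize}
\item In $w$: $\partial_wg=(w-\eta_\ell-\B)/(v+\B)$. This is nonnegative throughout the region, because $w\ge\tfrac45\nu>\nu/10+\nu/16\ge\eta_\ell+\B$. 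So we may lower $w$ to $\tfrac45\nu$ for every $v$.
\item In $v$, at that $w$: $\partial_vg$ has the sign of $v+2\B+\eta_\ell-w$, which is $\ge0$ for $v\ge w$. So we may then lower $v$ to $\tfrac45\nu$.
\end{itemize}
We may also take $\B$ at its upper bound $\nu/16$ and $\eta_\ell$ at $\nu/10$, since $g$ decreases in each. The resulting value is
\begin{equation}
 \frac{(1.5\nu)^2}{2\cdot1.725\,\nu}-0.8\nu=\frac{58}{115}\nu=\frac{58}{2185}>\frac1{38}.
\end{equation}
This proves~\eqref{eq:sharperwindow}.
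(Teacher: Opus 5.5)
Your proposal is correct and follows the paper's own proof step for step. The shared steps are the envelope $\eta=e^{-\ell/700}\le\nu/10$, the bounds $v,w\ge4\nu/5$ from \eqref{eq:sharpervariance}, the polarization bound $a\ge(v+w-\eta)/2$, the budget $\B\le\nu/16$ via Lemma~\ref{lem:spatial}, and the same monotonicity order (first in $w$, then in $v$) giving $58\nu/115>\nu/2$; your explicit derivatives $\partial_wg$ and $\partial_vg$ spell out the monotonicity claims the paper states without computation.
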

\begin{proof}
Put $\eta=e^{-\ell/700}\le\nu/10$, $v=\norm{\widetilde H_L}_2^2$,
$w=\norm{\widetilde F_\ell}_2^2$, and
$a=\ip{\widetilde H_L}{\widetilde F_\ell}$.
The preceding estimates give
$v,w\ge\nu-2\eta\ge4\nu/5$ (from~\eqref{eq:sharpervariance} at $n=L$ and $n=\ell$, with $e^{-L/700}\le\eta$) and
$a\ge(v+w-\eta)/2>0$ (from $\|\widetilde H_L-\widetilde F_\ell\|_2^2=v+w-2a\le\eta$); the margins are $190=10/\nu$ and $304=16/\nu$.
The reset budget of~\eqref{c-eq:parameters} satisfies $\B\le tp e^{-r/700}/(2-p)\le\nu/16$, since each of the $t$ resets has $b_j^2\le e^{-r/700}$ by Lemma~\ref{lem:spatial} applied to the tilded charge and~\eqref{eq:sharperraw}.
Theorem~\ref{thm:overlapreset} therefore implies
\begin{equation}
 C_\ell(t)\ge\frac{(v+w-\nu/10)^2}{2(v+\nu/16)}-w
 \ge\frac{58}{115}\nu>\frac\nu2.
\end{equation}
For the last inequality, the middle expression increases in $w$ for
$w\ge4\nu/5$; after setting $w=4\nu/5$ it increases in $v$ for
$v\ge4\nu/5$. Its value at $v=w=4\nu/5$ is $58\nu/115$.
\end{proof}

The readout length and the protected buffer play different roles.
For complete resets ($p=1$), \eqref{eq:sharperwindow} certifies $t$ rounds once $r\ge700\log(304t)$; the following sufficient buffers illustrate
the new window with $\ell=\ell_{\rm sharp}$ (the row 3024 is carried over from Appendix~\ref{sec:explicit} for comparison); a nonempty noisy prefix
requires $L\ge r+1$.
\begin{center}
\begin{tabular}{r r}
Guaranteed complete reset rounds & Sufficient buffer $r$\\\hline
1 & $\lceil700\log304\rceil=4002$\\
1000 & $\lceil700\log304000\rceil=8838$\\
3024 & $\lceil700\log919296\rceil=9612$\\\hline
\end{tabular}
\end{center}
The buffer $r=\ell_{\rm sharp}$ alone does not certify a complete reset
round. These conservative bounds reduce the previous sufficient sizes and increase
the guaranteed signal, but still do not imply a practical small-device
experiment. All ceilings are verified by rational logarithm enclosures.

The ancillary verifiers use only integer or rational polynomial arithmetic
and Taylor bounds. Their use is a finite computer-assisted proof, with no
direction sampling, fitted rates, or physical simulation.

\section{Earlier implementation studies}\label{c-app:earlier}
The two-qubit circuit followed deeper exploratory implementations of related
boundary-memory models. We report the earlier outcomes in Table~\ref{c-tab:earlier} to distinguish the
successful two-qubit consistency test from unsuccessful attempts to
demonstrate the spatial mechanism (the exponential localization of Theorem~\ref{thm:local}). The observables, controls, and confidence
rules differ between studies; their results are not pooled. Dates are UTC.
Resource counts refer to the principal memory circuit, not every calibration or witness circuit within the job; a witness circuit measures an auxiliary diagnostic observable. A cycle is one repetition of that study's prescribed conserving dynamics and boundary-reset sequence. The archived reports specify each preparation, readout score, and ordering of the reset and conserving dynamics; reported correlations are sample means of those scores.

Ideal targets quoted in the table come from each study's site model and archived analysis. Their closed forms are reconstructed at the end of this appendix, where the prespecified benchmark is distinguished from a stronger bound obtained after the design was fixed.

The pair-flip studies use four-state sites with involutive labels ($a^2=I$, so each label is its own inverse): a
conserving gate mixes equal pairs $|aa\rangle$ and $|bb\rangle$, both
of which have product $I$, and preserves unequal pairs. The readout is
a sign assigned to a letter of the reduced word, obtained by deleting
adjacent equal labels. The adapted study retains a five-site logical
chain but enumerates its input and bath basis states and uses a fresh
register for the bath instead of a mid-circuit reset. Its changed-selector
control alters which pair transitions are allowed and does not conserve
the original word. The paired-charge design compares the first and last
reduced-word signs in a three-site chain. The reduced matrix-product
pilot uses seven four-state sites labeled by matrices in $\SL(2,\mathbb Z)$,
with three active sites on the device and four sites averaged classically.

\begingroup\small
\begin{longtable}{@{}>{\raggedright\arraybackslash}p{0.27\linewidth}>{\raggedright\arraybackslash}p{0.25\linewidth}>{\raggedright\arraybackslash}p{0.40\linewidth}@{}}
\caption{Exploratory hardware record. Supporting reports, job identifiers and the cancellation record
are included in the reproduction bundle; raw records for these earlier
studies are not included.}\label{c-tab:earlier}\\
\toprule
Study and IBM job & Hardware resources & Result and scope\\
\midrule
\endfirsthead
\toprule
Study and IBM job & Hardware resources & Result and scope\\
\midrule
\endhead
\bottomrule
\endfoot
First pair-flip pilot, Sept.\ 16\newline
\texttt{dalf9tv8gn2s739jv45g} &
\texttt{ibm\_marrakesh}; 34 qubits; 984 CZ; depth 1,749; 7,168 total shots &
Two-cycle conserving correlation $0.1992$, but the one-cycle result
$0.1113$ was below its ideal bound $0.2515$ even with its sampling margin.
Control gate counts were unmatched. No protected-memory demonstration.\\[6pt]
Ten-qubit follow-up, Sept.\ 16\newline
\texttt{dalfh9f8gn2s739jvdrg} &
\texttt{ibm\_marrakesh}; 10 qubits; 225 CZ; depth 455; 49,152 total shots &
Raw conserving correlation $0.0020$ versus ideal lower bound $0.4223$;
matched identity $-0.0166$. Readout-only corrections were insufficient
under their calibration assumptions. No memory-protection demonstration.\\[6pt]
Adapted diagnostic, Sept.\ 16\newline
\texttt{dalfsd02fm4c73f1icug} &
\texttt{ibm\_kingston}; 16,384 shots; 8,192 per setting &
Preparation/readout reference $0.9602$ and full-depth identity $0.5991$
passed the prespecified engineering thresholds. This was a feasibility
check, not a memory-mechanism test.\\[6pt]
Adapted pair-flip test, Sept.\ 17\newline
\texttt{daljchdr85ps73fbhd5g} &
\texttt{ibm\_kingston}; 241 CZ in the memory template; 41,984 total shots &
Conserving $0.5415$, changed-selector $0.5210$, matched identity $0.5488$.
The conserving interval was $[0.5115,0.5715]$, above the planned
$0.4223$ target. Conserving-minus-selector and conserving-minus-identity
intervals were $[-0.0219,0.0629]$ and $[-0.0498,0.0351]$.
The stronger ideal bound $0.5841$ exceeded the conserving interval.\\[6pt]
Paired-charge attempt, Sept.\ 17\newline
\texttt{dalvpttr85ps73fc1lpg} &
\texttt{ibm\_kingston}; 12 charged QPU seconds &
Cancelled for exceeding the execution-time limit. No completed result
was analyzed; the failure message is archived.\\[6pt]
Reduced matrix-product pilot, Sept.\ 18\newline
\texttt{damadggpqrnc73976tsg} &
\texttt{ibm\_kingston}; 7 qubits; 222 CZ; depth 552; 16,384 shots &
Failed the predeclared primary criterion, a positive lower interval endpoint for the contrast (the reset gap minus the no-bath gap; ideal $0.00693$): contrast $0.00192$ with 95\%
interval $[-0.00080,0.00465]$. The active suffix (the three sites placed on the device) changed in $52.7\%$ of
no-reset echo shots, versus ideal zero.\\
\end{longtable}
\endgroup

The adapted memory job finished at 00:43:43 UTC on September 17
(20:43:43 EDT on September 16); its historical report uses the local date.
The seven-qubit pilot finished at 02:49:51 UTC on September 18
(22:49:51 EDT on September 17); its report likewise uses the local date.
In that pilot, the seven-site charge depends on all seven labels, but the four sites not placed on the device kept their prepared labels by construction and entered the score classically, so part of any protected correlation was never exposed to noise. It was not a full physical realization of the seven-site
model, and a high uncorrected protected score was not evidence of memory
protection. The two-qubit experiment has no such spectator (classically retained) contribution.
The readout-corrected exploratory estimates of the ten-qubit follow-up are retained in its
report, with their assumptions; they are not used to correct the
two-qubit data in Table~\ref{c-tab:results}.

The historical ideal benchmarks concern these pair-flip models, rather
than the thirteen-state matrix alphabet. For the 17-site pilot the
prespecified weaker bound was $7/18-2\sqrt{t\,p_{16}}$, where
$p_{16}=5068915/2^{30}$ is the 16-letter return probability; $t=1,2$
gives $0.251473$ and $0.194553$. For the five-site studies the formula was
\begin{equation}
 B_5(t)=\frac{25}{32}-\frac{2x}{1+x},\qquad x=\frac{7t}{64}.
\end{equation}
Thus $B_5(2)=527/1248=0.422276$. Omitting the final bath when it acts
away from the readout gives the stronger, post-design ideal corollary
$B_5(1)=1327/2272=0.584067$, by the same endpoint principle used in
Section~\ref{sec:gram}. These formulas reconstruct the archived benchmarks;
their model-specific proofs are not results asserted anew in this paper.
Exceeding the weaker target does not certify the exact-channel hypotheses.

\section{Reproducibility and certificate conventions}\label{app:reproducibility}
The complete reader package \texttt{reproducibility\_bundle.zip} contains the manuscript sources, compiled PDF, figures, bibliography, \texttt{build.py}, \texttt{README.md}, and the offline calculation and data files. The article source package \texttt{arxiv\_source.zip} contains the TeX sources, bibliography and publication figures, together with \texttt{anc/calculations\_and\_data.zip}. This nested archive contains the same calculation and data files without manuscript TeX. Each archive has an \texttt{ARCHIVE\_MANIFEST.json} recording SHA-256 file hashes; no archival identifier has been assigned to this draft.

The calculation package supplies \texttt{verify.py}, \texttt{make\_figures.py}, the five polynomial and size-certificate programs listed in the complete package's README, \texttt{anc/audit\_finite\_gram.py}, the frozen \texttt{anc/*.json} certificates, and \texttt{anc/exact\_gram.py}. The primary experiment is under \texttt{data/two\_qubit/}; the \texttt{live/} subdirectory contains the raw bit array \texttt{raw\_outcomes.npy} and the files \texttt{results.json}, \texttt{contexts.json}, \texttt{manifest.json}, and \texttt{submission.json}. All circuit and context files identified by the manifest are included. The \texttt{supporting\_reports/} directory supplies reports, job identifiers and the cancellation record for Appendix~\ref{c-app:earlier}, rather than those earlier studies' complete raw data.

Extract the complete reader package or the nested calculation archive into a new directory. With Python 3.10 or later, NumPy and Matplotlib installed, run
\begin{lstlisting}
python3 verify.py --full --figures
\end{lstlisting}
No credentials, device access, network connection or quantum SDK is needed. Without flags, the verifier checks the five polynomial and size certificates, saved exact moment identities, direct enumeration at length three and the two-qubit matrices. The \texttt{--full} option adds independent reconstruction of the Gram classes through suffix length seven, with rational enclosures of the eight-site moments. The \texttt{--figures} option recounts the primary shots, checks hashes and regenerates the figures and result table, creating \texttt{figures/} and \texttt{generated/} if absent. Logs and the list of executed checks are written to \texttt{review/verification/summary.json} and adjacent log files.

To rebuild the paper, run \texttt{python3 build.py} in the complete reader package with pdfLaTeX and BibTeX installed, or compile \texttt{main.tex} from the article source package. Loading the archived QPY circuits would require Qiskit, but no verification step loads them.

\subsection{Exact polynomial certificates}
The programs in \texttt{anc/} use integer and rational arithmetic.
To check nonnegativity on the simplex $q_1,q_2,q_3\ge0$, $\sum_iq_i=1$, a homogeneous
polynomial $P=\sum_{a+b+c=d}b_{abc}B_{abc}$ of degree $d$ is expanded in the simplex Bernstein basis
\begin{equation}
 B_{abc}(q)=\frac{d!}{a!b!c!}q_1^a q_2^b q_3^c,
 \qquad a+b+c=d.
\end{equation}
Each basis polynomial is nonnegative, and their sum is one. Consequently
the polynomial lies between its smallest and largest Bernstein
coefficients. For a permutation-symmetric polynomial it suffices to take $q_1\ge q_2\ge q_3$; ordered-cone certificates substitute
$(q_1,q_2,q_3)=(x+y+z,y+z,z)$, that is $x=q_1-q_2$, $y=q_2-q_3$, $z=q_3$; nonnegative
monomial coefficients certify nonnegativity for $x,y,z\ge0$, and strictly positive coefficients give the strict inequalities used in the appendices.
The appendices identify the substitutions, rational endpoints, and
coefficient counts for each use. Exponential and logarithmic estimates, such as the ceilings $\lceil700\log1710\rceil=5211$ and $\lceil700\log2736\rceil=5540$ of Proposition~\ref{prop:gramlocal}, use explicit
rational Taylor bounds with a bounded remainder and rational enclosures of the logarithm (\texttt{anc/certify\_geometry\_sizes.py}), rather than a
floating-point sign test.

\subsection{Reconstructing the finite-chain moments}
Here $\A_3$ is the thirteen-letter alphabet~\eqref{eq:alphabet}, $u_{\rm G}$ the normalized-Gram charge~\eqref{eq:gramcharge}, $B=a_2\cdots a_L$ the product of the last $L-1$ letters and $a$ the first letter, so $aB=G_L(w)$. For a suffix product $B$ of length $L-1$, put
\begin{equation}
 r_a(B)=u_{\rm G}(aB)-\tfrac13,\qquad
 f(B)=\frac1{13}\sum_{a\in\A_3}r_a(B),\qquad
 h_2(B)=\frac1{13}\sum_{a\in\A_3}r_a(B)^2.
\end{equation}
If $m(B)$ is the number of suffix words with product $B$, then
\begin{equation}
 w_L=13^{-(L-1)}\sum_Bm(B)f(B)^2,\qquad
 v_L=13^{-(L-1)}\sum_Bm(B)h_2(B),\qquad
 b_L^2=v_L-w_L.
\end{equation}
These formulas follow by conditioning the uniform word on its suffix:
\begin{equation*}
 13^{-L}\sum_wg(w)=13^{-(L-1)}\sum_s\frac1{13}\sum_ag(as).
\end{equation*}
The suffixes $s$ are grouped by $B=G_{L-1}(s)$. They are the $v$, $w$, $b^2$ of Section~\ref{sec:gram}: $f=m_1-\tfrac13$, $h_2$ averages to $m_2-\tfrac19$ because the exact mean of $u_{\rm G}$ is $1/3$, and $m(B)$ is $N_{L-1}(B)$ there; $b_L^2=v_L-w_L$ follows from orthogonality of the conditional expectation $F=PH_L^{\rm G}$.
Every summand is rational. Suffix products with the same Gram matrix $B^TB$ have the same multiset $\{u_{\rm G}(aB):a\in\A_3\}$ and the same successor Gram matrices up to relabelling the letters (Section~\ref{sec:gram}), so they can be grouped with multiplicities to reduce the
enumeration; the total multiplicity must remain $13^{L-1}$. This grouping is what \texttt{anc/audit\_finite\_gram.py} does.
The independent program \texttt{anc/audit\_finite\_gram.py} reconstructs
the Gram classes (223,093 at suffix length seven), checks $L=3$ by direct enumeration of all $13^3$ words, and verifies the saved eight-site inequalities $\lambda_8<1/8$, $v_8>1/20$, $w_8>7/160$ by rational enclosures. The complete
exact-generation program \texttt{anc/exact\_gram.py} (run with \texttt{--max-length 8}) is also included. This finite computation is a
proof certificate for the specified moments, not an extrapolation in $L$.

\subsection{Recounting the hardware estimator}
The four settings are those of Table~\ref{c-tab:results}, the score of a shot is $S=(-1)^{b_{\rm in}+b_{\rm out}}$ as in~\eqref{c-eq:estimator}, and $\widehat C$ estimates $C_F/w$. For each setting let $n_+$ and $n_-$ count scores $+1$ and $-1$.
Then $N_s=n_++n_-$ and
\begin{equation}
 \widehat C=\frac{n_+-n_-}{N_s},\qquad
 h=\sqrt{\frac{2\log(160)}{N_s}}.
\end{equation}
The second expression follows from Hoeffding's inequality, $2e^{-N_sh^2/2}$ per setting, and a union
bound over four settings: $8e^{-N_sh^2/2}=0.05$ exactly when $e^{N_sh^2/2}=160$; $\log$ is the natural logarithm and $h$ is the half-width of $[\widehat C-h,\widehat C+h]$.
The program \texttt{make\_figures.py} loads the archived $512\times128$ bit array and
setting labels, recomputes all scores and counts, checks the bit array against its post-acquisition SHA-256 hash and the unchanged circuit and context files against their pre-submission hashes, and regenerates the result table and figures. Public copies of the manifest and two execution scripts omit local execution labels; \texttt{PUBLICATION\_PROVENANCE.json} records their original and distributed hashes, and the program checks these copies against the latter. The original experiment hashes are retained, not replaced by publication hashes. Neither the no-reset echo
settings nor the earlier exploratory runs of Appendix~\ref{c-app:earlier} are used to divide or otherwise renormalize the primary estimator.

\section*{Statement on the use of artificial-intelligence tools}
The problem, the solution strategy and all calculations in this paper were worked out by the author on paper. Large language model systems, Claude (Anthropic) and GPT (OpenAI), were then used to verify every result by independent re-derivation, carried out by separately instantiated model instances; to prepare the manuscript text from the author's derivations; to search the literature and check the references; and to implement and analyse the hardware protocol under a pre-registered plan. No language model is an author, and the author takes full responsibility for every statement in this paper. This statement is made in accordance with the arXiv policy on reporting significant use of such tools.
{\small
\setlength{\bibsep}{3pt}
\bibliographystyle{unsrtnat}
\bibliography{references}
}
\end{document}